\newcommand{\takeout}[1]{\empty}
\newenvironment{notheorembrackets}{}{}
\setlist[enumerate,1]{label={(\arabic*)},font=\normalfont,align=left,leftmargin=0pt,labelindent=0pt,listparindent=\parindent,labelwidth=0pt,itemindent=!,topsep=3pt,parsep=0pt,itemsep=3pt,start=1}
\setlist[enumerate,2]{label={(\alph*)},font=\normalfont,labelindent=*,leftmargin=*,start=1}
\setlist[itemize]{labelindent=*,leftmargin=*,topsep=5pt,itemsep=3pt}
\setlist[description]{labelindent=*,leftmargin=*,itemindent=-1 em}
\newcommand{\mybar}[3]{%
  \mathrlap{\hspace{#2}\overline{\scalebox{#1}[1]{\phantom{\ensuremath{#3}}}}}\ensuremath{#3}
}
\newcommand{\barF}{\mybar{0.6}{2.5pt}{F}}
  \DeclareSymbolFont{stix@largesymbols}{LS2}{stixex}{m}{n}
  \DeclareMathDelimiter{\lBrace}{\mathopen} {stix@largesymbols}{"E8}% % chktex 18
                                            {stix@largesymbols}{"0E} % chktex 18
  \DeclareMathDelimiter{\rBrace}{\mathclose}{stix@largesymbols}{"E9}% % chktex 18
                                            {stix@largesymbols}{"0F} % chktex 18
\newcommand{\bag}[2][]{\ensuremath{{#1\lBrace #2#1\rBrace}}}
\newcommand{\myker}{\ensuremath{\mathrm{ker}}} % chktex 35
\newcommand{\Set}{\ensuremath{\mathsf{Set}}\xspace}
\newcommand{\pred}{\ensuremath{\mathsf{pred}}\xspace}
\newcommand{\id}{\ensuremath{\mathsf{id}}\xspace}
\newcommand{\pr}{\ensuremath{\mathsf{pr}}}
\newcommand{\inl}{\ensuremath{\mathsf{in}_{1}}}
\newcommand{\inr}{\ensuremath{\mathsf{in}_{2}}}
\newcommand{\inj}{\ensuremath{\mathsf{in}}}
\newcommand{\fpair}[1]{\ensuremath{\langle #1 \rangle}}
\newcommand{\set}[2][]{\ensuremath{{#1\{#2#1\}}}}
\newcommand{\BoolMonoid}{\ensuremath{{\mathbb{B}}}\xspace}
\newcommand{\grad}[1]{\monmod{#1}}
\newcommand{\gradI}{\monmod{\raisebox{-1pt}{\ensuremath{\scriptstyle I}}}}
\newcommand{\monmod}[1]{\ensuremath{{\langle{\raisebox{1pt}{\ensuremath{\scriptstyle\mathord{=}#1}}}\rangle}}}
\newcommand{\hearts}{\heartsuit}
\newcommand{\Pow}{\ensuremath{\mathcal{P}}}
\newcommand{\CO}{\ensuremath{\mathcal{O}}}
\newcommand{\Powf}{\ensuremath{{\mathcal{P}_\textsf{f}}}}
\newcommand{\Dist}{\ensuremath{{\mathcal{D}}}}
\newcommand{\Bag}{\ensuremath{\mathcal{B}}}
\newcommand{\unzip}{\ensuremath{\mathsf{unzip}}}
\newcommand{\fib}{\ensuremath{\mathord{\textsf{\upshape fib}}}}
\newcommand{\textqt}[1]{\text{`#1'}}
\newcommand{\N}{\ensuremath{\mathbb{N}}}
\newcommand{\R}{\ensuremath{\mathbb{R}}}
\newcommand{\Z}{\ensuremath{\mathbb{Z}}}
\newcommand{\monoto}{\rightarrowtail}
\newcommand{\monodown}{%
\begin{tikzpicture}[baseline=(B)]
  \node[rotate=-90,inner xsep=0pt,inner ysep=2pt]
     (M) {\ensuremath{\rightarrowtail}};
  \coordinate (B) at ([yshift=3pt]M.east);
\end{tikzpicture}}
\newcommand{\Formulae}{\ensuremath{\mathcal{M}}}
\newcommand{\etal}[1][.]{\text{et al.}\xspace}
\newcommand{\arity}[3][]{\ensuremath{\mathord{\raisebox{1pt}{\ensuremath{#1#2}}\mkern-1.5mu/\mkern-1.5mu{\raisebox{-1pt}{\ensuremath{#1#3}}}}}}
\newcommand{\semantics}[2][]{\ensuremath{#1\llbracket #2 #1\rrbracket}}
\newcommand{\fmod}[1]{\ensuremath{{\ulcorner #1\urcorner}}}
\newcommand{\itemref}[2]{\autoref{#1}{\ref{#2}}}
\newcommand{\notni}{\ensuremath{\not\mkern1mu\ni}}
\newcommand{\refAllEdgeCases}{%
  \caseref{edgeS}--\caseref{edgeBS}\xspace%
}
\newcommand{\scissors}{\ding{34}} %
\newcommand{\blackcircle}{\ensuremath{\bullet}}
\newsavebox{\mypullbackcorner}%
\sbox{\mypullbackcorner}{%
  \begin{tikzpicture}
    \draw[-] (0,0) -- (.5em,.5em) -- (0,1em);
  \end{tikzpicture}%
}
\tikzset{shiftarr/.style={
    rounded corners,%
    to path={--([#1]\tikztostart.center) % chktex 8
      -- ([#1]\tikztotarget.center) \tikztonodes% chktex 8
      -- (\tikztotarget)}, % chktex 8
  }}
\definecolor{lmcsDarkGrey}{HTML}{2e3436}
\definecolor{lmcsDarkBlue}{HTML}{4F677E}
\definecolor{lmcsLightBlue}{HTML}{B9CDE0}
\tikzset{
  coalgebra/.style={
    block line/.style={
      draw=black!50,
      line width=1.2pt,
    },
    block/.style={
      block line,
      rounded corners=3pt,
      inner sep=1pt,
      minimum height=6mm,
      minimum width=6mm,
    },
    scissors line/.style={
      draw=black!50,
      text=black!50,
      font=\footnotesize,
      line width=0.8pt,
      shorten <= -4pt,
      shorten >= -4pt,
      dotted,
    },
    description/.style={
      fill=white,
      inner sep=1pt,
    },
    state/.style={
      text depth=0pt,
      outer sep=0pt,
      inner sep=4pt,
      text=lmcsDarkGrey,
    },
    transition/.style={
      -{latex},
      line width=0.8pt,
      draw=lmcsDarkBlue,
      every node/.append style={
        text=lmcsDarkBlue,
      },
      preaction = {draw,-,draw=white,line width=4pt,line cap=round},
    },
    path with edges/.style={
      every edge/.append style={transition}
    },
  },
}
\newcommand{\descto}[3][]{\arrow[phantom]{#2}[#1]{\text{\footnotesize{}\begin{tabular}{c}#3\end{tabular}}}}
\newcolumntype{L}{>{$}l<{$}} %
\theoremstyle{definition}
\newtheorem{notation}[thm]{Notation}
\newtheorem{defn}[thm]{Definition}
\newtheorem{definition}[thm]{Definition}
\newtheorem{expl}[thm]{Example}
\newtheorem{example}[thm]{Example}
\newtheorem*{rem*}{Remark}
\newtheorem{algorithm}[thm]{Algorithm}
\newtheorem{construction}[thm]{Construction}
\theoremstyle{defC}
\newtheorem{algoC}[thm]{Algorithm}
\theoremstyle{plain}
\newtheorem{theorem}[thm]{Theorem}
\newtheorem{proposition}[thm]{Proposition}
\newtheorem{lemma}[thm]{Lemma}
\newtheorem{corollary}[thm]{Corollary}
\newcommand{\xra}[1]{\xrightarrow{~#1~}}
\title[Generic Modal Witnesses for Behavioural Inequivalence]{Quasilinear-time Computation of Generic Modal Witnesses for Behavioural Inequivalence}
\author[T.~Wißmann]{Thorsten Wißmann\lmcsorcid{0000-0001-8993-6486}}[a]% chktex 8
\address{Radboud University, Nijmegen, The Netherlands}
\thanks{Thorsten Wißmann: Funded by the NWO TOP project 612.001.852}
\thanks{Stefan Milius and Lutz Schröder: Funded by the Deutsche Forschungsgemeinschaft (DFG, German Research Foundation) – project number 259234802
}
\author[S.~Milius]{Stefan Milius\lmcsorcid{0000-0002-2021-1644}}[b]% chktex 8
\address{Friedrich-Alexander-Universit\"{a}t Erlangen-N\"{u}rnberg, Germany}
\author[L.~Schröder]{Lutz Schröder\lmcsorcid{0000-0002-3146-5906}}[b] % chktex 8
\keywords{bisimulation, partition refinement, modal logic, distinguishing formulae, coalgebra}
\begin{document}
\maketitle              %
\begin{abstract}
  We provide a generic algorithm for constructing formulae that
  distinguish behaviourally inequivalent states in systems of various
  transition types such as nondeterministic, probabilistic or
  weighted; genericity over the transition type is achieved by working
  with coalgebras for a set functor in the paradigm of universal
  coalgebra. For every behavioural equivalence class in a given
  system, we construct a formula which holds precisely at the states
  in that class. The algorithm instantiates to deterministic finite
  automata, transition systems, labelled Markov chains, and systems of
  many other types. The ambient logic is a modal logic featuring
  modalities that are generically extracted from the functor; these
  modalities can be systematically translated into custom sets of
  modalities in a postprocessing step. The new algorithm builds on an
  existing coalgebraic partition refinement algorithm. It runs in time
  $\mathcal{O}((m+n) \log n)$ on systems with~$n$ states and~$m$ transitions,
  and the same asymptotic bound applies to the dag size of the
  formulae it constructs.  This improves the bounds on run time and
  formula size compared to previous algorithms even for previously
  known specific instances, viz.~transition systems and Markov chains;
  in particular, the best previous bound for transition systems was
  $\mathcal{O}(m n)$.

\end{abstract}

\section{Introduction}
For finite transition systems, the Hennessy-Milner theorem guarantees
that two states are bisimilar if and only if they satisfy the same
modal formulae. Equivalently, this means that whenever two states are
not bisimilar, then one can find a modal formula that holds at one of
the states but not at the other.  Such a formula, usually called a
\emph{distinguishing formula}~\cite{Cleaveland91}, explains the
difference in the behaviour of the two states.  For example, in the
transition system in \autoref{fig:exampleTS}, the formula
$\Box\Diamond\top$ distinguishes the states~$x$ and~$y$; specifically
it is satisfied at~$x$ but not at~$y$.  This gives rise to the
verification task of actually computing distinguishing
formulae. Cleaveland~\cite{Cleaveland91} presents an algorithm that
computes distinguishing formulae for states in a finite transition
system with~$n$ states and~$m$ transitions in time $\CO(m n)$. The
algorithm builds on the Kanellakis-Smolka partition refinement
algorithm~\cite{KanellakisSmolka83,KanellakisS90}, which computes the
bisimilarity relation on a transition system within the same time
bound.

\begin{figure}[t]%
  \begin{minipage}{.45\textwidth}
    \hspace{1mm}
    \centering
  \begin{tikzpicture}[coalgebra,x=1.5cm,baseline=(x.base)]
    \begin{scope}[every node/.append style={
        state,
        label distance=-.5mm,
        outer sep=3pt,
        inner sep=0pt,
        text depth=0pt,
        font=\normalsize,%
      }
      ]
      \node[label=left:$x$] (x)  at (0, 0) {$\bullet$};
      \node (x1)  at (1, 0) {$\bullet$};
      \node[label=right:$y$] (y)  at (3, 0) {$\bullet$};
      \node[] (z)  at (2, 0) {$\bullet$};
    \end{scope}
    \path[path with edges,overlay]
      (x) edge (x1)
      (x) edge[out=65,in=115,looseness=6.5,overlay] (x)
      (x1) edge[out=65,in=115,looseness=6.5,overlay] (x1)
      (x1) edge (z)
      (y) edge (z)
      (y) edge[out=65,in=115,looseness=6.5,overlay] (y)
      ;
  \end{tikzpicture}
    \caption{Example of a transition system}%
    \label{fig:exampleTS}
  \end{minipage}%
  \hfill
  \begin{minipage}{.45\textwidth}
  \centering
  \begin{tikzpicture}[coalgebra,x=1.5cm,baseline=(x.base)]
    \begin{scope}[every node/.append style={
        state,
        label distance=-.5mm,
        outer sep=3pt,
        inner sep=0pt,
        text depth=0pt,
        font=\normalsize,%
      }
      ]
      \node[label=right:$y$] (y)  at (1, 0) {$\bullet$};
      \node[label=left:$x$] (x)  at (-2, 0) {$\bullet$};
      \node[] (z1)  at (0, 0) {$\bullet$};
      \node[] (z2)  at (-1, 0) {$\bullet$};
    \end{scope}
    \path[path with edges,overlay,every node/.append style={description}]
      (y) edge node {1} (z1)
      (x) edge[bend right=10] node {0.5} (z2)
      (z2) edge[bend right=10] node {1} (z1)
      (x) edge[bend left=20] node {0.5} (z1)
      ;
  \end{tikzpicture}
    \caption{Example of a Markov chain}%
    \label{fig:exampleMarkov}
  \end{minipage}%
\end{figure}

Logical characterizations of bisimulation analogous to the
Hennessy-Milner theorem exist for other system types. For instance,
Desharnais \etal~\cite{DesharnaisEA98,desharnaisEA02} characterize
probabilistic bisimulation on (labelled) Markov chains, in the sense
of Larsen and Skou~\cite{LarsenS91} (for each label, every state has
either no successors or a probability distribution on successors).  In
their logic, a formula $\Diamond_{\ge p}\phi$ holds at states that
have a transition probability of at least $p$ to states satisfying
$\phi$. For example, the state $x$ %
in \autoref{fig:exampleMarkov}
satisfies~$\Diamond_{\ge 0.5}\Diamond_{\ge 1}\top$ but $y$ does
not. Desharnais \etal~provide an algorithm that computes
distinguishing formulae for labelled Markov chains in run time
(roughly) $\CO(n^4)$.

\takeout{} %

In the present work, we construct such counterexamples
generically for a variety of system types. We achieve genericity over
the system type by modelling state-based systems as coalgebras for a
set functor in the framework of universal
coalgebra~\cite{Rutten00}. Examples of coalgebras for a set functor include
transition systems, deterministic automata, or weighted systems
(e.g.~Markov chains). Universal coalgebra provides a generic notion of
behavioural equivalence that instantiates to standard notions for
concrete system types, e.g.~bisimilarity (transition systems), language
equivalence (deterministic automata), or probabilistic bisimilarity (Markov
chains).  Moreover, coalgebras come equipped with a generic notion of
modal logic that is parametric in a choice of modalities whose
semantics is constructed so as to guarantee invariance
w.r.t.~behavioural equivalence; under easily checked conditions, such
a \emph{coalgebraic modal logic} in fact characterizes behavioural
equivalence in the same sense as Hennessy-Milner logic characterizes
bisimilarity~\cite{Pattinson04,Schroder08}. Hence, as soon as suitable
modal operators are found, coalgebraic modal formulae serve as
distinguishing formulae.

In a nutshell, the contribution of the present paper is an algorithm
that computes distinguishing formulae for behaviourally inequivalent
states, and in fact \emph{certificates} that uniquely describe
behavioural equivalence classes in a system, in \emph{quasilinear
  time} and in coalgebraic generality. We build on an existing
efficient coalgebraic partition refinement
algorithm~\cite{concurSpecialIssue}, thus achieving run time
$\CO(m\log n)$ on coalgebras with~$n$ states and~$m$ transitions (in a
suitable encoding). The dag size of formulae is also $\CO(m\log n)$
(for tree size, exponential lower bounds are
known~\cite{FigueiraG10}); even for the basic case of transition
systems, we thus improve the previous best bound
$\CO(m n)$~\cite{Cleaveland91} for both run time and formula size. We
systematically extract the requisite modalities from the functor at
hand, requiring binary and nullary modalities in the general case, and
then give a systematic method to translate these generic modal
operators into more customary ones (such as the standard operators of
Hennessy-Milner logic).

We subsequently identify a notion of \emph{cancellative} functor that
allows for additional optimization. E.g.~functors modelling weighted
systems are cancellative if and only if the weights come from a
cancellative monoid, such as $(\Z,+)$, or $(\R,+)$ as used in
probabilistic systems.  For cancellative functors, much simpler
distinguishing formulae can be constructed: the binary modalities can
be replaced by unary ones, and only conjunction is needed in the
propositional base. On labelled Markov chains, this complements the
result that a logic with only conjunction and different unary
modalities (the modalities~$\Diamond_{\ge p}$ mentioned above)
suffices for the construction of distinguishing formulae (but not
certificates)~\cite{desharnaisEA02} (see
also~\cite{Doberkat09}).\twnote{}

\takeout{}%

\subsection*{Related Work} As mentioned above, Cleaveland's algorithm
for labelled transition systems~\cite{Cleaveland91} is based on
Kanellakis and Smolka's partition refinement
algorithm~\cite{KanellakisS90}, while the coalgebraic partition
refinement algorithm we employ~\cite{concurSpecialIssue} is instead
related to the more efficient Paige-Tarjan
algorithm~\cite{PaigeTarjan87}. We do note that in the current paper
we formally cover only unlabelled transition systems; the labelled
case requires an elaboration of compositionality mechanisms in
coalgebraic logic, which is not in the technical focus of the present
work. Details are discussed in \autoref{multisort}. Hopcroft's
automata minimization algorithm~\cite{Hopcroft71} and its
generalization to variable input
alphabets~\cite{Gries1973,Knuutila2001} have quasi-linear run time; in
these algorithms, a word distinguishing two inequivalent states of
interest can be derived directly from a run of the algorithm. König et
al.~\cite{KoenigEA20} extract formulae from winning strategies in a
bisimulation game in coalgebraic generality, under more stringent
restrictions on the functor than we employ here (specifically, they
assume that the functor is \emph{separable by singletons}, which is
stronger than our requirement that the functor is
\emph{zippable}~\cite[Lemma~14]{KoenigEA20}). Their algorithm runs in
$\CO(n^4)$; it does not support negative transition weights.
Characteristic formulae for behavioural equivalence classes taken
across \emph{all} models require the use of fixpoint
logics~\cite{DorschEA18}.  The mentioned algorithm by Desharnais~et
al.~for distinguishing formulae on labelled Markov
processes~\cite[Fig.~4]{desharnaisEA02} %
is based on Cleaveland's. No complexity analysis is made but the
algorithm has four nested loops, so its run time is roughly
$\CO(n^4)$. Bernardo and Miculan~\cite{BernardoMiculan19} provide a
similar algorithm for a logic with only disjunction. There are further
generalizations along other axes, e.g.~to behavioural
preorders~\cite{CelikkanCleaveland95}. The TwoTowers tool set for the
analysis of stochastic process algebras~\cite{BernardoEA98,Bernardo04}
computes distinguishing formulae for inequivalent processes, using
variants of Cleaveland's algorithm. Some approaches construct
alternative forms of certificates for inequivalence, such as Cranen et
al.'s notion of evidence~\cite{cranen_et_al:LIPIcs:2015:5408} or
methods employed on business process models, based on model
differences and event
structures~\cite{Dijkman08,ArmasCervantesEA13,ArmasCervantesEA14}.

In constructive mathematics, \emph{apartness relations} capture
provable difference of elements, and recently, Geuvers and
Jacobs~\cite{GeuversJacobs21} introduced apartness relations as an
inductive notion for the inequality of states in a coalgebra, or in
general, in a state-based system. In active automata learning,
Vaandrager\ \etal~\cite{VGRW22} base their learning algorithm~$L^{\#}$
on an apartness notion for automata. Whenever two states turn out to
be apart, this is \emph{witnessed} by an input word for which the two
states behave differently, and these witnesses are used in the
subsequent learning process.  Instead of words, we construct modal
formulae as universal witnesses for systems of different type. In this
sense, our results may eventually relate to variants of coalgebraic
active automata learning in which words are similarly replaced with
coalgebraic modal formulae~\cite{BarloccoEA19}.

This paper is an extended and revised version of a conference
publi\-ca\-tion~\cite{WissmannEA21}. It contains full proofs as well
as additional material on simplifications that apply in case the
coalgebra functor is cancellative
(\autoref{sec:cancellative}). Moreover, we include a new, elementary
proof of the known fact that the tree size of certificates can be
exponential~\cite{FigueiraG10} in \autoref{app:tree-size}.

\subsection*{Acknowledgements}%
The authors thank the anonymous referees for their helpful comments.

\section{Preliminaries}%
\label{preliminaries}
We first recall some basic notation. We denote by $0=\emptyset$,
$1=\{0\}$, $2 = \{0,1\}$, and $3 = \{0,1,2\}$ the sets representing
the natural numbers $0$, $1$, $2$ and $3$. For every set~$X$, there is
a unique map $!\colon X\to 1$.  We write~$Y^X$ for the set of
functions $X\to Y$, so e.g.~$X^2\cong X\times X$.\twnote{} In particular,~$2^X$ is the set of $2$-valued
\emph{predicates} on~$X$, which is in bijection with the
\emph{powerset}~$\Pow X$ of~$X$, i.e.~the set of all subsets of~$X$;
in this bijection, a subset $A\in\Pow X$ corresponds to its
\emph{characteristic function} $\chi_A\in 2^X$, given by $\chi_A(x)=1$
if $x\in A$, and $\chi(x)=0$ otherwise.  We freely convert between
predicates and subsets; in particular we apply set operations as well
as the subset and elementhood relations to predicates, with the
evident meaning.  We generally indicate injective maps
by~$\monoto$. Given maps $f\colon Z\to X$, $g\colon Z\to Y$, we
write~$\fpair{f,g}$ for the map $Z\to X\times Y$ given by
$\fpair{f,g}(z) = (f(z), g(z))$.  We denote the disjoint union of
sets~$X$,~$Y$ by $X+Y$, with canonical inclusion maps
\[
  \inl\colon X\monoto X+Y\qquad\text{and}\qquad
  \inr\colon Y\monoto X+Y.
\]
More generally, we write $\coprod_{i\in I} X_i$ for the disjoint union
of an $I$-indexed family of sets $(X_i)_{i\in I}$, and
$\inj_i\colon X_i\monoto \coprod_{i\in I} X_i$ for the $i$-th
inclusion map. For a map $f\colon X\to Y$ (not necessarily
surjective), we denote by $\ker(f) \subseteq X\times X$ the
\emph{kernel} of $f$, i.e.~the equivalence relation
\begin{equation}
  \ker(f) := \{ (x,x') \in X\times X \mid f(x) = f(x')\}.%
  \label{eqKer}
\end{equation}

\begin{notation}[Partitions]%
  \label{eqEqClass}
  Given an equivalence relation~$R$ on~$X$, we write $[x]_R$ for the
  equivalence class $\{ x'\in X\mid (x,x')\in R\}$ of $x\in X$. If~$R$
  is the kernel of a map $f$, we simply write $[x]_{f}$ in lieu of
  $[x]_{\ker(f)}$. The partition corresponding to~$R$ is denoted by
  \[
    X/R = \{[x]_R\mid x\in X\}.
  \]
  Note that $[-]_R\colon X\to X/R$ is a surjective map and that
  $R = \ker([-]_R)$.
\end{notation}
\noindent A \emph{signature} is a set $\Sigma$, whose elements are
called \emph{operation symbols}, equipped with a function
$a\colon \Sigma\to \N$ assigning to each operation symbol its
\emph{arity}. We write $\arity{\sigma}{n}\in \Sigma$ for
$\sigma\in \Sigma$ with $a(\sigma) = n$. We will apply the same
terminology and notation to collections of modal operators.

\subsection{Coalgebra} \emph{Universal coalgebra}~\cite{Rutten00}
provides a generic framework for the modelling and analysis of
state-based systems. Its key abstraction is to parametrize notions and
results over the transition type of systems, encapsulated as an
endofunctor on a given base category. %
Instances cover, for example, deterministic automata, labelled (weighted) transition
systems, and Markov chains.
\begin{defn}
  A \emph{set functor} $F\colon \Set\to \Set$ assigns to every set $X$
  a set $FX$ and to every map $f\colon X\to Y$ a map
  $Ff\colon FX\to FY$ such that identity maps and composition are
  preserved: $F\id_X = \id_{FX}$ and $F(g\cdot f) = Fg\cdot Ff$
  whenever the composite $g\cdot f$ is defined.  An
  \emph{$F$-coalgebra} is a pair $(C,c)$ consisting of a set $C$ (the
  \emph{carrier}) and a map $c\colon C\to FC$ (the
  \emph{structure}). When $F$ is clear from the context, we simply
  speak of a \emph{coalgebra}.
\end{defn}
\noindent In a coalgebra $c\colon C\to FC$, we understand the carrier
set~$C$ as consisting of \emph{states}, and the structure~$c$ as
assigning to each state $x\in C$ a structured collection of successor
states, with the structure of collections determined by~$F$. In this
way, the notion of coalgebra subsumes numerous types of state-based
systems, as illustrated next.

\begin{expl}\label{ex:powerset}
  \begin{enumerate}
  \item The \emph{powerset functor} $\Pow$ sends a set~$X$ to its
    powerset $\Pow X$ and a map $f\colon X\to Y$ to the map
    $\Pow f = f[-]\colon \Pow X\to \Pow Y$ that takes direct images. A
    $\Pow$-coalgebra $c\colon C\to \Pow C$ is precisely a transition
    system: It assigns to every state $x\in C$ a set $c(x) \in \Pow C$
    of \emph{successor} states, inducing a transition relation~$\to$
    given by $x\to y$ iff $y\in c(x)$.
    Similarly, the coalgebras for the finite powerset functor $\Powf$
    (with~$\Powf X$ being the set of finite subsets of~$X$) are
    precisely the finitely branching transition systems.

  \item Coalgebras for the functor $FX=2\times X^A$, where $A$ is a
    fixed input alphabet, are deterministic automata (without an
    explicit initial state). Indeed, a coalgebra structure
    $c = \langle f, t\rangle\colon C\to 2\times C^A$ consists of a
    finality predicate $f\colon C \to 2$ and a transition map $C
    \times A \to C$ in curried form $t\colon C\to C^A$.

  \item\label{exSignature} Every signature $\Sigma$ defines a
    \emph{signature functor} that maps a set~$X$ to the set
    \[\textstyle
      F_\Sigma X = \coprod_{\arity[\scriptstyle]{\sigma}{n} \in \Sigma} X^n,
    \]
    whose elements we may understand as flat $\Sigma$-terms
    $\sigma(x_1,\ldots,x_n)$ with variables from~$X$. The action of
    $F_\Sigma$ on maps $f\colon X\to Y$ is then given by
    \[
      F_\Sigma f\colon F_\Sigma X\to F_\Sigma Y
      \qquad
      (F_\Sigma f) (\sigma(x_1,\ldots,x_n)) = \sigma(f(x_1), \ldots,
      f(x_n)).
    \]
    For simplicity, we write $\sigma$ (instead of $\inj_\sigma$) for
    the coproduct injections, and~$\Sigma$ in lieu of $F_\Sigma$ for
    the signature functor. A $\Sigma$-coalgebra is a kind of tree
    automaton: it consists of a set $C$ of states and a transition map
    $c\colon C \to \coprod_{\sigma/n \in \Sigma} C^n$, which
    essentially assigns to each state an operation symbol $\sigma/n
    \in \Sigma$ and $n$-successor states. Hence, every state in a $\Sigma$-coalgebra describes
    a (possibly infinite) $\Sigma$-tree, that is, a rooted and ordered
    tree whose nodes are labelled by operation symbols from $\Sigma$
    such that a node with $n$ successor nodes is labelled by an
    $n$-ary operation symbol.

  \item\label{exMonVal} For a commutative monoid $(M,+,0)$, the \emph{monoid-valued functor}
  $M^{(-)}$~\cite{GummS01} is defined on a set $X$ by
  \begin{equation}
    M^{(X)} := \{ \mu \colon X\to M\mid \text{$\mu(x) = 0$ for all but finitely many
      $x\in X$}\}.%
    \label{eqFinSupport}
  \end{equation}
  For a map $f\colon X\to Y$, the map
  \(M^{(f)}\colon M^{(X)}\to M^{(Y)} \) is defined by
  \begin{equation*}
    (M^{(f)})(\mu)(y) = \textstyle\sum_{x\in X, f(x) = y} \mu(x).
  \end{equation*}
  A coalgebra $c\colon C\to M^{(C)}$ is a finitely branching weighted
  transition system: for $x, x'\in C$, $c(x)(x')\in M$ is the transition weight
  from $x$ to~$x'$. For the Boolean monoid
  $\BoolMonoid = (2,\vee, 0)$, we recover $\Powf =
  \BoolMonoid^{(-)}$. Coalgebras for $\R^{(-)}$, with~$\R$ understood
  as the additive monoid of the reals, are $\R$-weighted transition
  systems. The functor
  \begin{equation*}\textstyle
    \Dist X = \{\mu\in \R_{\ge 0}^{(X)}\mid \sum_{x\in X}\mu(x) = 1\},
  \end{equation*}
  which assigns to a set $X$ the set of all finite probability
  distributions on $X$ (represented as finitely supported probability
  mass functions), is a subfunctor of~$\R^{(-)}$.

\item Functors can be composed; for instance, given a set~$A$ of
  labels, the composite of $\Pow$ and the functor $A\times (-)$ (whose
  action on sets maps a set~$X$ to the set $A\times X$) is the functor
  $FX=\Pow(A\times X)$, whose coalgebras are $A$-labelled transition
  systems. Coalgebras for $(\Dist(-)+1)^A$ have been termed
  \emph{probabilistic transition systems}~\cite{LarsenS91} or
  \emph{labelled Markov chains}~\cite{desharnaisEA02}, and coalgebras
  for $(\Dist((-)+1)+1)^A$ are \emph{partial labelled Markov
    chains}~\cite{desharnaisEA02}.  Coalgebras for
  $SX = \Powf(A\times \Dist X)$ are variously known as \emph{simple
    Segala systems} or \emph{Markov decision processes}.
  \end{enumerate}
\end{expl}

\takeout{}%

We have a canonical notion of \emph{behaviour} on $F$-coalgebras:

\begin{defn}\label{D:coalg-mor}
  An $F$-coalgebra \emph{morphism} $h\colon (C,c)\to (D,d)$ is a map
  $h\colon C\to D$ such that the square below commutes:
  \[
    \begin{tikzcd}%
      |[alias=C]|
      C
      \arrow{r}{c}
      \arrow[swap]{d}{h}
      & FC
      \arrow{d}{Fh}
      \\
      D
      \arrow{r}{d}
      & FD
    \end{tikzcd}
  \]
  States $x,y$ in
  an $F$-coalgebra $(C,c)$ are \emph{behaviourally equivalent}
  (notation: $x\sim y$) if there exists a coalgebra morphism $h$ such that
  $h(x) = h(y)$.
\end{defn}

\noindent Thus, we effectively define the behaviour of a state as
those of its properties that are preserved by coalgebra
morphisms. The notion of behavioural equivalence subsumes standard
branching-time equivalences:
\begin{expl}\label{exCoalg}
  \begin{enumerate}
  \item\label{coalgts} For $F\in\{\Pow,\Powf\}$, behavioural
    equivalence on $F$-coalgebras, i.e.~on transition systems, is
    \emph{bisimilarity} in the usual sense.

  \item For deterministic automata as coalgebras for $FX = 2 \times
    X^A$, two states are behaviourally equivalent iff they accept the same
    formal language.

  \item\label{coalgSignature} For a signature functor $\Sigma$, two
    states of a $\Sigma$-coalgebra are behaviourally equivalent iff
    they describe the same $\Sigma$-tree.

  \item For labelled transition systems as coalgebras for
    $FX = \Pow(A \times X)$, coalgebraic behavioural equivalence
    precisely captures Milner's strong
    bisimilarity; this was shown by Aczel and Mendler~\cite{AczelMendler89}.\twnote{}

  \item For weighted and probabilistic systems, coalgebraic
    behavioural equivalence instantiates to weighted and probabilistic
    bisimilarity, respectively~\cite[Cor.~4.7]{RuttenDV99},~\cite[Thm.~4.2]{BartelsEA04}.

  \end{enumerate}
\end{expl}
\begin{rem}%
  \label{behEqDifferntCoalg}\label{R:trnkovahull}
  \begin{enumerate}
  \item The notion of behavioural equivalence extends
    straightforwardly to states in different coalgebras, as one can
    canonically define the disjoint union of coalgebras:

    Given a pair of $F$-coalgebra $(C,c)$ and $(D,d)$, we
    have a canonical $F$-coalgebra structure on the
    disjoint union $C+D$ of their carriers:
    \[
      C+D \xra{c+d} FC + FD \xra{[F\inl, F\inr]} F(C+D),
    \]
    where $[-,-]$ denotes case distinction on the disjoint components
    $FC$ and $FD$. It is easy to see that the canonical inclusion maps
    $\inl\colon C\to C+D$ and $\inr\colon D\to C+D$ are $F$-coalgebra
    morphisms. We say that states $x \in C$ and $y \in D$ are
    \emph{behaviourally equivalent} if $\inl(x)\sim \inr(y)$ holds in
    $C+D$. This definition coincides with the standard one, according
    to which $x,y$ are behaviourally equivalent if there exist
    coalgebra morphisms $f\colon(C,c)\to(E,e)$ and
    $g\colon (D,d)\to(E,e)$ such that $f(x)=g(y)$. Moreover, the
    extended definition is consistent with \autoref{D:coalg-mor} in
    the sense that states $x,y$ in the coalgebra $(C,c)$ are
    behaviourally equivalent according to \autoref{D:coalg-mor} iff
    $\inl(x) \sim \inr(y)$ in the canonical coalgebra on $C+C$.

  \item\label{R:trnkovahull:2} As shown by Trnkov\'a~\cite{trnkova71},
    we may assume without loss of generality that a set functor~$F$
    preserves injective maps (see also Barr~\cite[Proof of
    Thm.~3.2]{Barr93})
    that is,~$Ff$ is injective whenever~$f$ is.  In fact, for every
    set functor $F$ there exists a set functor~$\barF$ (called the
    \emph{Trnkov\'a hull} of $F$~\cite{AdamekEA12}) that coincides
    with $F$ on nonempty sets and functions and preserves injections.

    Moreover, $\barF$ preserves all finite intersections (pullbacks of
    pairs of injective maps);  it is even the reflection of $F$ in the
    category of set functors preserving finite intersections~\cite[Cor.~VII.2]{AdamekEA12}.

    Since~$\barF$ only differs from~$F$ on the empty set, both
    functors have the same coalgebras and coalgebra morphisms.  All
    functors in \autoref{ex:powerset} already preserve injective maps.
    \twnote{}
  \end{enumerate}
\end{rem}

\subsection{Coalgebraic Logics}%
\label{sec:coalgLogic}

\takeout{}%
We continue with a brief review of basic concepts of coalgebraic modal
logic~\cite{Pattinson03,Schroder08}. Coalgebraic modal logics are
parametric in a functor~$F$ determining the type of systems underlying
the semantics, and additionally in a choice of modalities interpreted
in terms of \emph{predicate liftings}. For now, we use $F = \Pow$ as a
basic example, deferring further examples to \autoref{domainSpecific}.

\subsection*{Syntax} The syntax of coalgebraic modal logic is
parametrized over the choice of a signature $\Lambda$ of \emph{modal
  operators} (with assigned arities). Then, \emph{formulae} $\phi$ are
generated by the following grammar
\[
    \phi_{1},\ldots,\phi_{n} ::= \top
    ~\vert~ \neg \phi_1
    ~\vert~ \phi_1 \wedge \phi_2
    ~\vert~ \hearts(\phi_1,\ldots,\phi_n)
    \qquad
    (\arity{\hearts}{n} \in \Lambda).
\]
\begin{expl}\label{exPowModalities} For $F = \Pow$, one often takes
  $\Lambda=\{\arity{\Diamond}{1}\}$; the induced syntax is that of
  (single-action) Hennessy-Milner logic. As usual, we write $\Box \phi :\equiv
  \neg \Diamond \neg \phi$.
\end{expl}

\subsection*{Semantics} We interpret formulae as sets of states in
$F$-coalgebras. This interpretation arises by assigning
to each modal operator $\arity{\hearts}{n}\in \Lambda$ an $n$-ary
\emph{predicate
  lifting}~$\semantics{\hearts}$~\cite{Pattinson03,Schroder08}, i.e.~a
family of maps $\semantics{\hearts}_X\colon (2^{X})^n \to 2^{FX}$, one
for every set~$X$, such that the \emph{naturality} condition
\begin{equation}\label{eq:naturality}
  Ff^{-1}\big[\semantics{\hearts}_Y(P_1,\ldots,P_n)\big]
  = \semantics{\hearts}_X(f^{-1}[P_1],\ldots,f^{-1}[P_n])
\end{equation}
holds for all $f\colon X\to Y$ and all $P_1,\ldots,P_n\in 2^X$. Thus,
$\semantics{\heartsuit}_X$ lifts~$n$ given predicates on states to a
predicate on structured collections of states.  Categorically
speaking,~$\semantics{\hearts}$ is a natural transformation
$(2^{(-)})^n\to 2^{F^{\mathsf{op}}}$; that is, the naturality square
\[
  \begin{tikzcd}
    (2^Y)^n
    \arrow{r}{\semantics{\heartsuit}_Y}
    \arrow{d}[swap]{(2^{f})^n}
    & 2^{FY}
    \arrow{d}{2^{Ff}}
    \\
    (2^X)^n
    \arrow{r}{\semantics{\heartsuit}_Y}
    & 2^{FX}
  \end{tikzcd}
\]
commutes for $f\colon X\to Y$. Explicitly, $2^{(-)}$ denotes the
contravariant powerset functor, which sends a set~$X$ to the set $2^X$
of $2$-valued predicates on~$X$ (equivalently to the powerset of~$X$),
and a map $f\colon X\to Y$ to the inverse image map
$2^{f}=f^{-1}[-]\colon 2^Y\to 2^X$; writing down the commutativity of
the above square element-wise then yields precisely~\eqref{eq:naturality}. By the Yoneda lemma, one equivalently can
define predicate liftings as
follows:
\begin{lemC}[{\cite[Proposition~43]{Schroder08}}]\label{predLiftYoneda}
  Predicate liftings $(2^X)^n\to 2^{FX}$ of arity $n$ are in
  one-to-one correspondence with subsets of $F(2^n)$. The
  correspondence sends a subset $S\subseteq F(2^n)$ to the predicate
  lifting
  \[
    \lambda_X(P_1,\ldots,P_n) =
    \{ t\in FX \mid F\!\underbrace{\fpair{P_1,\ldots,P_n}}_{X\to 2^n}(t) \in S
    \}
    \hspace{4em}(P_i\colon X \to 2).
  \]
\end{lemC}

Given the above data, the \emph{extension} of a formula $\phi$ in
an~$F$-coalgebra $(C,c)$ is a predicate $\semantics{\phi}_{(C,c)}$, or
just~$\semantics{\phi}$, on~$C$, recursively defined by
  \begin{align*}
    &
    \semantics{\top}_{(C,c)} = C,
    \qquad
    \semantics{\phi\wedge\psi}_{(C,c)} = \semantics{\phi}_{(C,c)} \cap \semantics{\psi}_{(C,c)},
    \qquad
    \semantics{\neg\phi}_{(C,c)} = C\setminus\semantics{\phi}_{(C,c)},
    \\
    &\semantics{\hearts(\phi_1,\ldots,\phi_n)}_{(C,c)}
    = c^{-1}\big[\semantics{\hearts}_C\big(\semantics{\phi_1}_{(C,c)}, \ldots, \semantics{\phi_n}_{(C,c)}\big)\big]
    \qquad\text{for ($\arity{\hearts}{n}\in \Lambda$).}
  \end{align*}
  (Recall that we implicitly convert between predicates and subsets.)
  We say that a state~$x\in C$ \emph{satisfies}~$\phi$ if
  $x\in\semantics{\phi}$. Notice how the clause for modalities says
  that~$x$ satisfies $\hearts(\phi_1,\ldots,\phi_n)$ iff $c(x)$
  satisfies the predicate obtained by lifting the predicates
  $\semantics{\phi_1}, \ldots, \semantics{\phi_n}$ on~$C$ to a
  predicate on $FC$ according to~$\semantics{\hearts}$.

  \begin{expl}\label{exPowPredLift} Over $F=\Pow$, we
    interpret~$\Diamond$ by the predicate lifting
    \[
    \semantics{\Diamond}_X\colon 2^X \to 2^{\Pow X},\quad
    P\mapsto~ \{ K \subseteq X \mid \exists x \in K\colon x\in P\}
                = \{K\subseteq X \mid K\cap P\neq \emptyset\}.%
                \label{eqDiamondLifting}
    \]
  The arising notion of satisfaction over $\Pow$-coalgebras $(C,c)$ is
  precisely the standard one:
  \[
    x\in \semantics{\Diamond \phi}_{(C,c)}
    \qquad\text{iff}\qquad
    \text{$y\in \semantics{\phi}_{(C,c)}$ for some transition $x\to
      y$}.
  \]
\end{expl}

The naturality condition~\eqref{eq:naturality} of predicate liftings
guarantees invariance of the logic under coalgebra morphisms, and
hence under behavioural equivalence:
\begin{proposition}[Adequacy~\cite{Pattinson03,Schroder08}]%
  \label{bisimInvariant}
  Behaviourally equivalent states satisfy the same formulae: $x\sim y$
  implies that for all formulae~$\phi$, we have $x\in \semantics{\phi}$
  iff $y\in \semantics{\phi}$.
\end{proposition}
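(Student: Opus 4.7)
The plan is to prove the stronger invariance statement that any coalgebra morphism preserves formula extensions under preimages, and then deduce adequacy from the definition of behavioural equivalence. Specifically, I aim to show that for any $F$-coalgebra morphism $h\colon (C,c)\to (D,d)$ and every formula $\phi$,
\[
  \semantics{\phi}_{(C,c)} = h^{-1}\big[\semantics{\phi}_{(D,d)}\big].
\]
Once this is established, if $x\sim y$ in $(C,c)$ we pick a witnessing morphism $h$ with $h(x)=h(y)$, and then $x\in\semantics{\phi}_{(C,c)}$ iff $h(x)\in\semantics{\phi}_{(D,d)}$ iff $h(y)\in\semantics{\phi}_{(D,d)}$ iff $y\in\semantics{\phi}_{(C,c)}$, which is the desired conclusion.

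The preservation equation will be proved by structural induction on $\phi$. The case $\phi=\top$ is immediate since both sides equal the full carrier. The Boolean cases $\phi_1\wedge\phi_2$ and $\neg\phi_1$ reduce to the facts that preimage under $h$ commutes with binary intersection and with complement, combined with the inductive hypothesis applied to the immediate subformulae. No subtlety arises here beyond elementary set-theoretic bookkeeping.

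The main step, and the only non-trivial one, is the modal case $\phi=\hearts(\phi_1,\ldots,\phi_n)$ for $\arity{\hearts}{n}\in\Lambda$. Here I chain three equalities. First, unfold the semantics at $C$ and apply the inductive hypothesis to each $\phi_i$ to rewrite
\[
  \semantics{\phi}_{(C,c)} = c^{-1}\big[\semantics{\hearts}_C\big(h^{-1}[\semantics{\phi_1}_{(D,d)}],\ldots,h^{-1}[\semantics{\phi_n}_{(D,d)}]\big)\big].
\]
Second, apply the naturality condition~\eqref{eq:naturality} of the predicate lifting $\semantics{\hearts}$ to the map $h$, which pulls all the preimages outside the lifting and produces an application of $Fh^{-1}[-]$. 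Third, use the coalgebra morphism square $Fh\cdot c = d\cdot h$, i.e.\ $c^{-1}\circ Fh^{-1}=h^{-1}\circ d^{-1}$, to move $h^{-1}$ to the outside, yielding $h^{-1}[\semantics{\phi}_{(D,d)}]$ after refolding the semantics at $D$.

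The only potential obstacle is keeping the direction of functoriality straight in the modal case, since $2^{(-)}$ is contravariant and $F$ is covariant, so one must be careful that the square from~\eqref{eq:naturality} is applied with $f:=h$ and that the compatibility $c^{-1}\circ Fh^{-1} = h^{-1}\circ d^{-1}$ is read off from the coalgebra morphism square in the contravariant direction. Apart from that, the argument is a routine induction once the naturality of predicate liftings and the coalgebra morphism property are combined.
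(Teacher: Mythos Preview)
Your argument is correct and is precisely the standard proof of adequacy from the cited references~\cite{Pattinson03,Schroder08}; the paper itself does not give a proof but merely states the result with attribution. The structural induction with the modal case handled via naturality of predicate liftings together with the coalgebra morphism square is exactly the expected route.
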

\noindent In our running example $F=\Pow$, this instantiates to the
well-known fact that modal formulae are bisimulation-invariant, that
is, bisimilar states in transition systems satisfy the same formulae
of Hennessy-Milner logic.

\section{Constructing Distinguishing Formulae}%
\label{sec:main}

A proof method certifying behavioural equivalence of states $x,y$ in a
coalgebra is immediate by definition: One simply needs to exhibit a
coalgebra morphism $h$ such that $h(x) = h(y)$. In fact, for many
system types, it suffices to relate~$x$ and~$y$ by a coalgebraic
\emph{bisimulation} in a suitable sense
(e.g.~\cite{AczelMendler89,Rutten00,GorinSchroeder13,MartiVenema15}), generalizing
the Park-Milner bisimulation principle~\cite{Milner89,Park81}. It is
less obvious how to certify behavioural \emph{inequivalence}
$x\not\sim y$, showing that such a morphism~$h$ does \emph{not}
exist. By \autoref{bisimInvariant}, one option is to exhibit a
(coalgebraic) modal formula~$\phi$ that is satisfied by~$x$ but not
by~$y$. In the case of (image-finite) transition systems, such a
formula is guaranteed to exist by the Hennessy-Milner theorem, which
moreover is known to generalize to
coalgebras~\cite{Pattinson04,Schroder08}. More generally, we consider
separation of \emph{sets} of states by formulae, following
Cleaveland~\cite[Def.~2.4]{Cleaveland91}:
\begin{defn}\label{defDistinguish}
  Let $(C,c)$ be an $F$-coalgebra. A formula $\phi$
  \emph{distinguishes} a set $X\subseteq C$ from a set $Y\subseteq C$
  if $X \subseteq \semantics{\phi}$ and
  $Y \cap \semantics{\phi} = \emptyset$. In case $X = \{x\}$ and
  $Y = \{y\}$, we just say that \emph{$\phi$ distinguishes $x$ from
    $y$}. We say that $\phi$ is a \emph{certificate} of~$X$ if~$\phi$
  distinguishes~$X$ from $C\setminus X$, that is if
  $\semantics{\phi}=X$.
\end{defn}
\noindent Note that~$\phi$ distinguishes~$X$ from~$Y$ iff $\neg\phi$
distinguishes~$Y$ from~$X$. Certificates have also been referred to as
\emph{descriptions}~\cite{FigueiraG10}.  If $\phi$ is a certificate of
a behavioural equivalence class~$[x]_\sim$, then by definition, $\phi$
distinguishes~$x$ from $y$ whenever $x \not\sim y$. To obtain
distinguishing formulae for behaviourally inequivalent states in a
coalgebra, it therefore suffices to construct certificates for all
behavioural equivalence classes, which indeed is what our algorithm
does. Of course, every certificate must be at least as large as a
smallest distinguishing formula. However, already on transition
systems, distinguishing formulae and certificates have the same
asymptotic worst-case size (cf.~\autoref{worstcase}).

A natural approach to computing certificates for behavioural
equivalence classes is to extend algorithms that compute these
equivalence classes. In particular, \emph{partition refinement}
algorithms compute a sequence $C/R_0, C/R_1,\ldots$ of consecutively
finer partitions (i.e.~$R_{i+1}\subseteq R_i$ for every $i \geq 0$) on
the state space, where every \emph{block} $B\in C/R_i$ is a union of
behavioural equivalence classes. Since $C$ is finite, this sequence
stabilizes, and the final partition is precisely
$C/\mathord{\sim}$. Indeed, Cleaveland's algorithm for computing
certificates on (labelled) transition systems~\cite{Cleaveland91}
correspondingly extends Kanellakis and Smolka's partition refinement
algorithm~\cite{KanellakisSmolka83,KanellakisS90}, which runs in
$\CO(m n)$ on systems with $n=|C|$ states and $m$ transitions. Our
generic algorithm will be based on a more efficient partition
refinement algorithm.

\subsection{Paige-Tarjan with Certificates}%
\label{paigeTarjan}
Before we turn to constructing certificates in coalgebraic generality,
we informally recall and extend the Paige-Tarjan
algorithm~\cite{PaigeTarjan87}, which computes the partition modulo
bisimilarity of a given transition system with $n$ states and~$m$
transitions in time $\CO((m+n) \log n)$. We fix a given finite
transition system, viewed as a $\Pow$-coalgebra $c\colon C\to \Pow C$.

The algorithm computes two sequences $(C/P_i)_{i\in \N}$ and
$(C/Q_i)_{i\in \N}$ of partitions of~$C$ (with~$Q_i,P_i$ equivalence
relations), where only the most recent partition is held in memory
and~$i$ indexes the iterations of the main loop.  Throughout the
execution, $C/P_i$ is finer than $C/Q_i$ (that is, $P_i\subseteq Q_i$
for every~$i \geq 0$), and the algorithm terminates when $P_i= Q_i$.
Intuitively,~$P_i$ is `one transition ahead' of $Q_i$: if $Q_i$
distinguishes states~$x$ and~$y$, then~$P_i$ is based on
distinguishing transitions to~$x$ from transitions to~$y$.

Initially, $C/Q_0 := \{C\}$ consists of only one block and $C/P_0$ of
two blocks: the live states and the deadlocks (i.e.~states with no
outgoing transitions). If $P_i\subsetneqq Q_i$, then there is a block
$B\in C/Q_i$ that is the union of at least two blocks in~$C/P_i$.
In such a situation, the algorithm chooses $S \subseteq B$ in $C/P_i$
to have at most half the size of $B$ and then splits the block $B$
into $S$ and $B\setminus S$ in the partition $C/Q_i$:
\[
  C/Q_{i+1} = (C/Q_i \setminus \{B\}) ~\cup~ \{S, B\setminus S\}.
\]
This is correct because every state in $S$ is already known to be
behaviourally inequivalent to every state in $B\setminus S$. By the
definition of bisimilarity, this implies that every block $T\in C/P_i$
with some transition to~$B$ may contain behaviourally inequivalent
states as illustrated in \autoref{fig:parttree}; that is,~$T$ may need
to be split into smaller blocks, as follows:

\newcommand{\caseref}[1]{\ref{#1}}
\begin{enumerate}[({C}1)]
\makeatletter
\renewcommand{\labelenumi}{\theenumi}
\renewcommand{\theenumi}{\text{\bfseries\color{black}(C\arabic{enumi})}}
\renewcommand{\p@enumi}{}
\makeatother
\item\label{edgeS} states in~$T$ with successors in $S$ but not in $B\setminus
  S$ (e.g.~$x_1$ in \autoref{fig:parttree}),
\item\label{edgeBoth} states in~$T$ with successors in $S$ and $B\setminus S$ (e.g.~$x_2$), and
\item\label{edgeBS} states in~$T$ with successors $B\setminus S$ but not in $S$ (e.g.~$x_3$).
\end{enumerate}

\begin{figure}[t] %
  \centering
  \begin{tikzpicture}[
    partitionBlock/.style={
            shape=rectangle,
            rounded corners=2.5mm,
            minimum height=5mm,
            minimum width=5mm,
            draw=black!30,
            line width=1pt,
            inner sep = 2mm,
    }
]
  \begin{scope}[
    coalgebra,
    every node/.append style={
      state,
      inner sep=0pt,
      outer sep=2pt,
      minimum width=1pt,
      minimum height=4pt,
      anchor=center,
    }
    ]
    \begin{scope}[yshift=1.5cm]
      \node (x1) at (-1.2,0) {$x_1$};
      \node (x2) at (0,0) {$x_2$};
      \node (x3) at (1.2,0) {$x_3$};
    \end{scope}
    \node (y1) at (-1.6,0) {$y_1$};
    \node (y2) at (-0.5,0) {$y_2$};
    \node (y3) at (0.5,0) {$y_3$};
    \node (y4) at (1.2,0) {$y_4$};
  \end{scope}
  \node[anchor=west,xshift=2mm] (y5) at (y4.east) {$\ldots$};
  \draw (y1) edge[draw=none] node {$\ldots$} (y2);
  \begin{scope}[
    every node/.append style={
      partitionBlock,
      inner sep=2pt,
    },
    every label/.append style={
      font=\footnotesize,
      anchor=south,
      outer sep=1pt,
      inner sep=1pt,
      minimum height=2mm,
      shape=rectangle,
    },
    ]
    \node[fit=(x1) (x2) (x3)] (x123){};
    \node[fit=(y1) (y5)] (C) {};
    \node[every label,overlay] at ([xshift=-1mm]x123.north east) {\normalsize $T$}; %
    \node[every label] at ([xshift=-1mm]C.north east) {\normalsize $B$};
    \begin{scope}[every node/.append style={minimum width=10mm}, %
      ]
      \newcommand{\nextblockdistance}{8mm} %
      \foreach \name/\nodesource/\anchor/\direction in
      {Peast/x123/east/,
        Qeast/C/east/,
        Pwest/x123/west/-,
        Qwest/C/west/-} {
        \begin{scope}
          \clip ([yshift=1mm]\nodesource.north \anchor)
          rectangle ([yshift=-1mm,xshift=\direction\nextblockdistance]
                     \nodesource.south \anchor);
          \node at ([xshift=\direction\nextblockdistance]\nodesource.\anchor) (\name) {};
        \end{scope}
      }
    \end{scope}
  \end{scope}
  \begin{scope}%
  \draw[thick, decoration={brace,mirror},decorate]
  ([yshift=-2mm]y1.south west) -- node[anchor=north,yshift=-2pt]{\footnotesize $S$} ([yshift=-2mm]y2.south east) ;
  \draw[thick, decoration={brace,mirror},decorate]
  ([yshift=-2mm]y3.south west) -- node[anchor=north,yshift=-2pt]{\footnotesize $B\setminus S$} ([yshift=-2mm]y3.south west -| y5.south east) ;
  \foreach \name in {Peast,Pwest,Qeast,Qwest} {
    \node[draw=none,minimum width=1pt,minimum height=1pt,text height=1pt,inner xsep=1pt]
       at (\name.center) {$\ldots$};
  }
  \end{scope}
  \begin{scope}[text depth=2pt]
  \node[outer sep=5mm,anchor=east] at (Pwest.center -| Qwest.center) (P) {$C/P:$};
  \node[outer sep=5mm,anchor=east] at (Qwest.center) (Q) {$C/Q:$};
  \coordinate (mapXCoordinate) at ([xshift=10mm]Qeast.center);
  \node[outer sep=1pt,anchor=center] at (Peast.center -| mapXCoordinate) (X) {$C$};
  \node[outer sep=1pt,anchor=center] at (Qeast.center -| mapXCoordinate) (PY) {$\Pow C$};
  \end{scope}
  \draw[commutative diagrams/.cd, every arrow, every label] (X) edge node {$c$} (PY);
  \begin{scope}[
    bend angle=10,
    space/.style={
      draw=white,
      line width=4pt,
    },
    edge/.style={
      -{latex},
      shorten <= 1pt,
      line width=0.8pt,
      draw=lmcsDarkBlue,
      preaction={draw,-,line width=1mm,white},
      every node/.append style={
        fill=white,
        shape=rectangle,
        inner sep=1pt,
        anchor=base,
        pos=0.55,
      },
    },
    ]
    \draw[edge,bend right] (x1) to (y1);
    \draw[edge,bend left] (x1) to (y2);
    \draw[edge,bend right] (x2) to (y2);
    \draw[edge,bend left] (x2) to (y3);
    \draw[edge,bend left] (x2) to (y4);
    \draw[edge,bend right] (x3) to (y3);
    \draw[edge] (x3) to (y4);
    \draw[edge,bend left] (x3) to (y5);
  \end{scope}
  \begin{scope}[%
    linestyle/.style={
      dashed,
      draw=black!50,
    },
    scissors/.style={
      text=black!50,
      font=\tiny,
      inner sep=0pt,
      overlay,
    },
    gapstyle/.style={
      draw=white,
      line width=1mm,
    }
    ]
    \foreach \leftnode/\rightnode/\northlen/\southlen in
    {x1/x2/4mm/4mm,x2/x3/4mm/4mm,y2/y3/4mm/4mm} {
      \coordinate (northend) at ($ (\leftnode) !.5! (\rightnode) + (0,\northlen)$);
      \coordinate (southend) at ($ (\leftnode) !.5! (\rightnode) - (0,\southlen)$);
      \node[anchor=east,rotate=-90,scissors] at (northend) {\scissors};
      \draw[gapstyle] (northend) -- (southend);
      \draw[linestyle] (northend) -- (southend);
    }
  \end{scope}
\end{tikzpicture}
  \caption{The refinement step as illustrated in~\cite[Figure 6]{concurSpecialIssue}.}%
  \label{fig:parttree}
\end{figure}

\noindent The partition $C/P_{i+1}$ arises from $C/P_i$ by splitting
all such predecessor blocks~$T$ of $B$ accordingly. The algorithm
terminates as soon as $P_{i+1} = Q_{i+1}$ holds. %
It is straightforward to construct certificates for the blocks arising
during the execution:
\begin{itemize}
\item
  The certificate for the only block $C \in C/Q_0$ is $\top$, and the
  blocks for live states and deadlocks in $C/P_0$ have certificates
  $\Diamond\top$ and~$\neg\Diamond\top$, respectively.

\item %
  In the refinement step, suppose that $\delta, \beta$ are
  certificates of $S\in C/P_i$ and $B\in C/Q_i$, respectively, where
  $S\subsetneqq B$. For every predecessor block $T$ of $B$, the three
  blocks obtained by splitting $T$ are distinguished
  (in the sense of \autoref{defDistinguish}) as follows:
  \begin{equation}
    \text{
    \caseref{edgeS}\quad $\neg\Diamond(\beta\wedge\neg\delta)$,
    \qquad\caseref{edgeBoth}\quad $\Diamond(\delta) \wedge \Diamond(\beta\wedge\neg \delta)$,
    \qquad\caseref{edgeBS}\quad $\neg\Diamond \delta$.
    }%
    \label{edgeCasesFormula}
  \end{equation}
  Of course these formulae only distinguish the states in $T$ from each other
  (e.g.~there may be states in other blocks with transitions to both
  $S$ and~$B$). Hence, given a certificate~$\phi$ of $T$, one obtains
  certificates of the three resulting blocks in $C/P_{i+1}$ via
  conjunction:
  \[
    \text{
    \caseref{edgeS}\quad $\phi\wedge \neg\Diamond(\beta\wedge\neg\delta)$,
    \qquad\caseref{edgeBoth}\quad $\phi \wedge\Diamond(\delta) \wedge \Diamond(\beta\wedge\neg \delta)$,
    \qquad\caseref{edgeBS}\quad $\phi\wedge\neg\Diamond \delta$.
    }
  \]
\end{itemize}
Upon termination, every bisimilarity class $[x]_\sim$ in the
transition system is annotated with a certificate. A key step in the
generic development will be to come up with a coalgebraic
generalization of the formulae for \refAllEdgeCases.

\subsection{Generic Partition Refinement}%
\label{sec:genPartRef}

The Paige-Tarjan algorithm has been adapted to other system types,
e.g.~weighted systems~\cite{ValmariF10}, and it has recently been
generalized to co\-al\-gebras~\cite{DorschEA17,concurSpecialIssue}. A
crucial step in this generalization is to rephrase the case
distinction \refAllEdgeCases in terms of the
functor $\Pow$: Given a predecessor block $T$ in $C/P_i$ for
$S\subsetneqq B\in C/Q_i$, we define the map $\chi_S^B\colon C \to 3$
by
\begin{equation}
\chi_S^B(x) =
\begin{cases}
    2 & \text{if $x \in S$}, \\
    1 & \text{if $x \in B \setminus S$}, \\
    0 & \text{if $x \in C \setminus B$},
  \end{cases}
  \qquad\text{for sets $S\subseteq B\subseteq C$}.%
  \label{eqChi3}
\end{equation}
We then consider the composite
\[
  C\xra{c} \Pow C\xra{\Pow \chi_S^B} \Pow 3.
\]
The three cases \refAllEdgeCases distinguish between the equivalence
classes \([x]_{\Pow \chi_S^B\cdot c}\) for $x\in T$; that is, every
case is a possible value of $t:=\Pow\chi_S^B(c(x)) \in \Pow 3$:
\[
  \text{\caseref{edgeS}~$2\in t \notni 1$,}
  \qquad\qquad
  \text{\caseref{edgeBoth}~$2\in t \ni 1$, and}
  \qquad\qquad
  \text{\caseref{edgeBS}~$2\notin t \ni 1$.}
\]
Since $T$ is a predecessor block of $B$, the `fourth case'
$2\not\in t \not\mkern1mu\ni 1$ is not possible. There is a transition
from $x$ to some state outside $B$ iff $0\in t$. However, because
of the previous refinement steps performed by the algorithm, either
every or no state of $T$ has an edge to~$C\setminus B$ (a
property called \emph{stability}~\cite{PaigeTarjan87}), hence no
distinction on $0\in t$ is necessary.

It is now easy to generalize from transition systems to coalgebras by
simply replacing the functor $\Pow$ with $F$ in the refinement step. We recall the algorithm:

\begin{notheorembrackets}
\begin{algoC}[{\cite[Alg.~4.9, (5.1)]{concurSpecialIssue}}]\label{coalgPT}
  Given a coalgebra $c\colon C\to FC$, put
  \[
  C/Q_0 := \{C\} \qquad\text{and}\qquad
  P_0 := \myker\big(C\xra{c}{FC}\xra{F!} F1\big).
  \]
  Starting at iteration $i=0$, repeat the following while $P_i\neq Q_i$:
  \begin{enumerate}[label={({A}1)},topsep=5pt]
\makeatletter
\renewcommand{\labelenumi}{\theenumi}
\renewcommand{\theenumi}{\text{\bfseries\color{black}(A\arabic{enumi})}}
\renewcommand{\p@enumi}{}
\makeatother
    \item\label{step1} Pick $S\in C/P_i$ and $B\in C/Q_i$ such that $S\subsetneqq B$ and $2\cdot
      |S| \le |B|$

      \vspace*{8pt} %
    \item\label{defQi1} $C/Q_{i+1} := (C/Q_i \setminus \{B\}) \cup
      \{S, B\setminus S\}$
    \item\label{defPi1} $P_{i+1} := P_i \cap \myker \big(
      C \xra{c} FX \xra{F\chi_S^B} F3
      \big)$
  \end{enumerate}
\end{algoC}
\end{notheorembrackets}
\noindent
This algorithm formalizes the intuitive steps from
\autoref{paigeTarjan}. Again, two sequences of partitions~$P_i$,~$Q_i$
are constructed, and $P_i = Q_i$
upon termination. Initially, $Q_0$ identifies all states, and $P_0$
distinguishes states by only their output behaviour.
\begin{expl}
  \begin{enumerate}
  \item For $F=\Pow$ and~$x\in C$, the value $\Pow!(c(x)) \in \Pow 1$
    is~$\emptyset$ if~$x$ is a deadlock, and~$\{1\}$ if $x$ is a live
    state.
  \item For $FX=2\times X^A$, the value
    $F!(c(x))\in F1= 2\times 1^A \cong 2$ indicates whether $x$ is a
    final or non-final state.
  \end{enumerate}
\end{expl}
\noindent In the main loop, blocks $S\in C/P_i$ and $B\in C/Q_i$
witnessing $P_i\subsetneqq Q_i$ are picked, and~$B$ is split into $S$
and $B\setminus S$, like in the Paige-Tarjan algorithm. Note that step~\ref{defQi1} is equivalent to directly defining the equivalence
relation $Q_{i+1}$ as
\[
  Q_{i+1} := Q_i \cap \ker{\chi_S^B}.
\]
A similar intersection of equivalence relations is performed in step~\ref{defPi1}. The intersection splits every block $T\in
C/P_i$ into smaller blocks such that $x,x'\in T$ end up in the same block iff
$F\chi_S^B(c(x)) = F\chi_S^B(c(x'))$, i.e.~$T$ is replaced with
$\{[x]_{F\chi_S^B(c(x))}\mid x\in T\}$.
Again, this corresponds to the distinction of the three
cases~\refAllEdgeCases.
\begin{expl}
  For $FX=2\times X^A$, there are $|F3| = 2\cdot 3^{|A|}$ cases to be
  distinguished, and so every $T\in C/P_i$ is split into at most that
  many blocks in $C/P_{i+1}$.
\end{expl}

The following property of $F$ is needed for correctness~\cite[Ex.~5.11]{concurSpecialIssue}.
\begin{notheorembrackets}
\begin{defiC}[{\cite{concurSpecialIssue}}]\label{defZippable}
  A functor $F$ is \emph{zippable} if the following maps are injective:
  \begin{equation*}
    \fpair{F(A+!), F(!+B)}\colon~ F(A+B) \longrightarrow F(A+1) \times
    F(1+B)
    \qquad
    \text{for all sets $A,B$}.
  \end{equation*}
\end{defiC}
\end{notheorembrackets}
\noindent Intuitively, $t\in F(A+B)$ is a structured collection of
elements from $A$ and $B$. If $F$ is zippable, then~$t$ is uniquely
determined by the two structured collections in $F(A+1)$ and $F(1+B)$
obtained by identifying all $B$- and all $A$-elements, respectively,
with $0\in 1$.
\begin{expl}
  The functor $FX=X\times X$ is zippable:
  $t=(\inl(a),\inr(b)) \in (A+B)^2$ is uniquely determined by
  $(\inl(a),\inr(0)) \in (A+1)^2$ and $(\inl(0),\inr(b)) \in (1+B)^2$,
  and similarly for the three other cases of~$t$.
\end{expl}

\noindent
In fact, all signature functors as well as $\Pow$ and all
monoid-valued functors (see \itemref{ex:powerset}{exMonVal}) are zippable. Moreover, the class of zippable
functors is closed under products, coproducts, and subfunctors but not
under composition, e.g.~$\Pow\Pow$ is not
zippable~\cite{concurSpecialIssue}.

The intuitive reason why $\Pow\Pow$ is not zippable is that $\Pow\Pow$ has two
\textqt{unordered levels}, and so we can not uniquely reconstruct $t\in
\Pow\Pow(A+B)$ given only the restrictions to $\Pow\Pow(A+1)$ and
$\Pow\Pow(1+B)$. In the following example, we take distinct $a_1,a_2\in A$ and
$b_1,b_2\in B$, and omit some of the coproduct injections for the sake of brevity:
\[
  \begin{tikzcd}[row sep=4mm]
  \\
  \Pow\Pow(A+B)
  \arrow{r}[yshift=2mm]{\fpair{\Pow\Pow(A+!), \Pow\Pow(!+B)}}
  & \Pow\Pow(A+1) \times \Pow\Pow(1+B)
  & \Pow\Pow(A+B)
  \arrow{l}[swap,yshift=2mm]{\fpair{\Pow\Pow(A+!), \Pow\Pow(!+B)}}
  \\
  \set[\big]{\set{a_1,b_1}, \set{a_2,b_2}}
  \arrow[mapsto]{r}
  \arrow[draw=none]{u}[anchor=center,sloped]{\in}
  &
  \begin{array}{r}
  \big(
  \set[\big]{\set{a_1,\inr(0)}, \set{a_2,\inr(0)}}, \\
  \set[\big]{\set{\inl(0),b_1}, \set{\inl(0), b_2}}
  \big)
  \end{array}
  \arrow[draw=none]{u}[anchor=center,sloped]{\in}
  &
  \set[\big]{\set{a_1,b_2}, \set{a_2,b_1}}
  \arrow[mapsto]{l}
  \arrow[draw=none]{u}[anchor=center,sloped]{\in}
  \end{tikzcd}
\]
Both the left-hand and the right-hand set of sets yield the same terms when
restricting to $A+1$ and $1+B$ separately, showing that the map in
\autoref{defZippable} is not injective for $\Pow\Pow$. This example extends to
a coalgebra for which partition refinement based on characteristic maps
$\chi_S^B$ would compute wrong results~\cite[Ex.~5.11]{concurSpecialIssue}.

\begin{rem}\label{multisort} To apply the algorithm to coalgebras for
  composites $FG$ of zippable functors, e.g.~$\Pow(A\times (-))$,
  there is a reduction~\cite[Section~8]{concurSpecialIssue} that
  embeds every $FG$-coalgebra into a coalgebra for the zippable
  functor $(F+G)(X) := FX + GX$. This reduction preserves and reflects
  behavioural equivalence, but introduces an intermediate state for
  every transition. The reduction factors through an encoding of
  composite functors via multisorted
  coalgebra~\cite{SchroderPattinson11}, in which, e.g., a composite
  functor $FG$ would be represented in a setting with two sorts $1,2$
  as a pair of functors~$\hat F$,~$\hat G$, one going from sort~$1$ to
  sort~$2$ and one going the other way around. The multisorted
  framework comes with a corresponding multisorted coalgebraic modal
  logic. There are conversion functors between multisorted coalgebras
  (e.g.\ for a multisorted functor made up of~$\hat F$ and~$\hat G$)
  and single-sorted coalgebras (e.g.\ for $FG$) which, in the end,
  guarantee compositionality (w.r.t.\ functor composition, including
  composition with multi-argument functors such as binary sum) of most
  semantic and algorithmic properties of coalgebraic modal logics,
  including the Hennessy-Milner property
  (see~\cite{SchroderPattinson11} for details).

  In principle, these results imply in particular that the algorithms
  and complexity results developed in the present paper are
  compositional w.r.t.\ functor composition. Establishing this
  formally will require transferring the framework of multisorted
  coalgebraic modal logic along the above-mentioned translation from
  multisorted coalgebras to single-sorted coalgebras for sums of
  functors. To keep the paper focused, we refrain from carrying this
  out in the present paper. We do note that this implies that we do
  not, at the moment, cover labelled transition systems, i.e.\ coalgebras
  for the composite functor $\Pow\circ(A\times(-))$, in full
  formality.
\end{rem}

\begin{notheorembrackets}
\begin{thmC}[{\cite[Thm. 4.20, 5.20]{concurSpecialIssue}}]
  On a finite coalgebra $(C,c)$ for a zippable functor,
  \autoref{coalgPT} terminates after $i\le |C|$ loop iterations, and
  the resulting partition identifies precisely the behaviourally
  equivalent states ($P_i = \mathord{\sim}$).
\end{thmC}
\end{notheorembrackets}
In the correctness proof, the zippability is used to show that it is sufficient
to incrementally refine the partition using the characteristic map $\chi_S^B$
under the functor $F$ in step~\ref{defPi1}. When constructing certificates in
the following, this refinement turns into a logical conjunction and the
characteristic map under the functor turns into a modal operator.

\subsection{Generic Modal Operators}\label{genericModalOp} The extended
Paige-Tarjan algorithm (\autoref{paigeTarjan}) constructs a
distinguishing formula according to the three cases
\refAllEdgeCases. In the coalgebraic \autoref{coalgPT}, these cases
correspond to elements of~$F3$, which determine in which block an
element of a predecessor block~$T$ ends up. Indeed, the elements
of~$F3$ will also serve as generic modalities in characteristic
formulae for blocks of states, essentially by the equivalence
between~\mbox{$n$-ary} predicate liftings and (in this case,
singleton) subsets of $F(2^n)$ (\autoref{predLiftYoneda}); such
singletons are also known as \emph{tests}~\cite{Klin05}.
\begin{defn}\label{defF3Mod}
  The signature of \emph{$F3$-modalities} for a functor $F$ is
  \[
    \Lambda = \{ \arity{\fmod{t}}{2} \mid t\in F3 \};
  \]
  that is, we write $\fmod{t}$ for the syntactic representation of a binary
  modality for every $t\in F3$. The interpretation of $\fmod{t}$ for $F3$ is
  given by
  \[
    \semantics{\fmod{t}}\colon
    (2^X)^2 \to 2^{FX},
    \qquad
    \semantics{\fmod{t}}(S,B) = \{t'\in FX\mid
    F\chi_{S\cap B}^B(t') = t
    \}.
  \]
\end{defn}
\begin{lemma}
  The above maps $\semantics{\fmod{t}}\colon (2^X)^2 \to 2^{FX}$ form
  a binary predicate lifting.
\end{lemma}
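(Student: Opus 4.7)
The plan is to verify the naturality square~\eqref{eq:naturality} for the family $\semantics{\fmod{t}}$, which by \autoref{predLiftYoneda} is the only thing needed to confirm that we have a binary predicate lifting. Concretely, given $f\colon X\to Y$ and $P_1,P_2\subseteq Y$, I want to show
\[
  Ff^{-1}\bigl[\semantics{\fmod{t}}_Y(P_1,P_2)\bigr]
  = \semantics{\fmod{t}}_X\bigl(f^{-1}[P_1],\, f^{-1}[P_2]\bigr).
\]

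The main step is an elementary set-theoretic identity about the characteristic map from~\eqref{eqChi3}: for any $f\colon X\to Y$ and $S,B\subseteq Y$, I claim
\[
  \chi_{f^{-1}[S]\cap f^{-1}[B]}^{f^{-1}[B]} \;=\; \chi_{S\cap B}^{B}\circ f
  \colon X\to 3.
\]
This is verified by a pointwise case distinction on whether $f(x)\in S\cap B$, $f(x)\in B\setminus S$, or $f(x)\notin B$; in each case both sides return the same element of~$3$, using also that $f^{-1}[S\cap B]=f^{-1}[S]\cap f^{-1}[B]$.

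With this identity in hand, the rest is routine unfolding together with functoriality of~$F$. For any $t'\in FX$,
\begin{align*}
  t'\in \semantics{\fmod{t}}_X\bigl(f^{-1}[P_1],f^{-1}[P_2]\bigr)
  &\iff F\chi_{f^{-1}[P_1]\cap f^{-1}[P_2]}^{f^{-1}[P_2]}(t') = t \\
  &\iff F\bigl(\chi_{P_1\cap P_2}^{P_2}\circ f\bigr)(t') = t \\
  &\iff F\chi_{P_1\cap P_2}^{P_2}\bigl(Ff(t')\bigr) = t \\
  &\iff Ff(t')\in \semantics{\fmod{t}}_Y(P_1,P_2) \\
  &\iff t'\in Ff^{-1}\bigl[\semantics{\fmod{t}}_Y(P_1,P_2)\bigr],
\end{align*}
which is exactly naturality.

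I do not expect any real obstacle here; the proof is essentially the observation that $\chi_S^B$ is defined by pulling back the block structure, so it is natural in $(S,B)$ with respect to preimages, and then functoriality of $F$ does the rest. The only subtlety worth flagging is to use $S\cap B$ rather than $S$ in the definition, which ensures the identity above holds without assuming $S\subseteq B$; this matches the way $\semantics{\fmod{t}}$ was stated in \autoref{defF3Mod}.
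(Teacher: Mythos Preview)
Your proof is correct but follows a different route from the paper. You verify the naturality condition~\eqref{eq:naturality} directly, via the pointwise identity $\chi_{f^{-1}[S]\cap f^{-1}[B]}^{f^{-1}[B]} = \chi_{S\cap B}^{B}\circ f$ and functoriality of~$F$. The paper instead goes through \autoref{predLiftYoneda}: it exhibits a quotient $q\colon 2^2\to 3$ with $\chi_{S\cap B}^B = q\cdot\fpair{\chi_S,\chi_B}$, takes the subset $(Fq)^{-1}[\{t\}]\subseteq F(2^2)$, and checks that the predicate lifting induced by this subset via the Yoneda correspondence is exactly $\semantics{\fmod{t}}$; naturality then comes for free from the lemma. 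Your approach is more elementary and self-contained; the paper's approach makes explicit where the lifting sits in the classification of \autoref{predLiftYoneda}, which is conceptually tidy but relies on that machinery. One small remark: your opening sentence cites \autoref{predLiftYoneda} as the reason naturality is ``the only thing needed'', but that is simply the definition of a predicate lifting; the lemma is not actually used in your argument.
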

\begin{proof}
  There is a canonical quotient $q\colon 2^2\to 3$
  given by
  \[
    q(1,1) = 2
    \qquad
    q(0,1) = 1
    \qquad
    q(0,0) = 0
    \qquad
    q(1,0) = 0.
  \]
  The map $q$ satisfies
  \[
    \chi_{S\cap B}^B = \big(
    X\xra{\fpair{\chi_S,\chi_B}}
    2\times 2\cong 2^2
    \xra{~q~} 3
    \big).
  \]
  For a fixed $t\in F3$, define a predicate lifting via the subset
  \[
    (Fq)^{-1}[\{t\}] \quad\subseteq\quad F(2^2)
  \]
  By \autoref{predLiftYoneda}, the corresponding predicate lifting is given by:
  \begin{align*}
    \semantics{\fmod{t}}(S,B) &= \{
    t'\in FX
    \mid
    F\fpair{\chi_S,\chi_B}(t') \in (Fq)^{-1}[\{t\}]
    \}
    \\&
    = \{
    t'\in FX
    \mid
    Fq(F\fpair{\chi_S,\chi_B}(t')) \in \{t\}
    \}
    \\&
    = \{
    t'\in FX
    \mid
    F(q\cdot \fpair{\chi_S,\chi_B})(t') = t
    \}
    \\&
    = \{
    t'\in FX
    \mid
    F\chi_{S\cap B}^B(t') = t
    \}
    \tag*{\qedhere}
  \end{align*}
\end{proof}
\noindent The intended use of $\fmod{t}$ is as follows: Suppose a
block~$B$ is split into subblocks $S\subseteq B$ and $B\setminus S$,
with certificates $\delta$ and $\beta$ for~$S$ and~$B$, respectively;
that is, $\semantics{\delta} = S$ and $\semantics{\beta}= B$. As in
\autoref{fig:parttree}, we then split every predecessor block $T$ of
$B$ into smaller parts, each of which is uniquely characterized by the
formula $\fmod{t}(\delta,\beta)$ for some $t\in F3$.

\begin{example}\label{examplePowF3Mod}
  For $F=\Pow$, the formula $\fmod{\set{0,2}}(\delta,\beta)$ is equivalent to
  \[
    \overbrace{\Diamond \neg \beta}^{\text{\textqt{0}}}
    \wedge
    \neg \overbrace{\Diamond (\beta\wedge\neg\delta)}^{\text{\textqt{1}}}
    \wedge
    \overbrace{\Diamond(\delta\wedge\beta)}^{\text{\textqt{2}}}.
  \]
\end{example}

\begin{lemma}\label{lemF3CoalgSem} Given an $F$-coalgebra $(C,c)$, a state
  $x \in C$, and formulae $\delta$ and $\beta$ such that
  $\semantics{\delta}\subseteq \semantics{\beta}\subseteq C$, we
  have
  \[
    x \in \semantics{\fmod{t}(\delta,\beta)}
    \quad\Longleftrightarrow\quad
    F\chi_{\semantics{\delta}}^{\semantics{\beta}}(c(x)) = t.
  \]
\end{lemma}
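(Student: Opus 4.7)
The plan is to unfold the definitions directly, using the hypothesis $\semantics{\delta}\subseteq\semantics{\beta}$ to simplify the intersection appearing in the predicate lifting.

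First, I would apply the definition of the extension of a modal formula to write
\[
  x\in \semantics{\fmod{t}(\delta,\beta)}_{(C,c)}
  \iff
  c(x) \in \semantics{\fmod{t}}_C(\semantics{\delta},\semantics{\beta}).
\]
Next, I would substitute the explicit description of $\semantics{\fmod{t}}$ from \autoref{defF3Mod}, yielding that the right-hand side is equivalent to
\[
  F\chi_{\semantics{\delta}\cap\semantics{\beta}}^{\semantics{\beta}}(c(x)) = t.
\]

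The only remaining step is to observe that the assumption $\semantics{\delta}\subseteq\semantics{\beta}$ implies $\semantics{\delta}\cap\semantics{\beta} = \semantics{\delta}$, so that the characteristic map $\chi_{\semantics{\delta}\cap\semantics{\beta}}^{\semantics{\beta}}$ coincides with $\chi_{\semantics{\delta}}^{\semantics{\beta}}$ as maps $C\to 3$, which gives the desired equivalence.

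There is no real obstacle here: the statement is essentially a reformulation of \autoref{defF3Mod} under the hypothesis that $\semantics{\delta}$ is contained in $\semantics{\beta}$, which precisely matches the shape $S\subseteq B\subseteq C$ appearing in the definition of $\chi_S^B$ in~\eqref{eqChi3}. The only thing to be careful about is not to confuse the two-argument characteristic map $\chi_S^B$ (which takes values in $3$) with the ordinary characteristic function of a subset; the proof consists of a chain of \textquotedblleft iff\textquotedblright{} lines and does not require any new ingredient beyond the preceding lemma that identifies $\semantics{\fmod{t}}$ as a predicate lifting.
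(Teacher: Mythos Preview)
Your proposal is correct and matches the paper's proof essentially step for step: both unfold the coalgebraic semantics of the modality, apply \autoref{defF3Mod}, and then use the hypothesis $\semantics{\delta}\subseteq\semantics{\beta}$ to simplify $\semantics{\delta}\cap\semantics{\beta}$ to $\semantics{\delta}$. The only cosmetic difference is that the paper phrases the argument as a chain of equalities of subsets of~$C$ rather than as a chain of equivalences for a fixed~$x$.
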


\begin{proof}
  This follows directly from \autoref{defF3Mod} applied to $S:=\semantics{\phi_S}$ and
  $B:=\semantics{\phi_B}$, using that $S\cap B = S$:
  \begin{align*}
    \semantics{\fmod{t}(\phi_S,\phi_B)}
    &=
      c^{-1}[\semantics{\fmod{t}}_C(\semantics{\phi_S},\semantics{\phi_B})]
    \\
    &= c^{-1}[\semantics{\fmod{t}}_C(S,B)]
    \\
    & = c^{-1}[\{t'\in FC \mid
      F\chi_{S\cap B}^B(t') = t\}]
    \\
    &= \{x \in C \mid F\chi_{S}^B(c(x)) = t\}.
      \tag*{\qedhere}
  \end{align*}
\end{proof}

In the initial partition $C/P_0$ on a transition system $(C,c)$, we
used the formulae $\Diamond\top$ and~$\neg\Diamond\top$ to distinguish
live states and deadlocks. In general, we can similarly describe the
initial partition using modalities induced by elements of~$F1$:

\begin{notation}\label{notationF1Mod} Define the injective map
  $j_1\colon 1\monoto 3$ by $j_1(0) = 2$. Then the injection
  $Fj_1\colon F1\monoto F3$ provides a way to interpret elements
  $t\in F1$ as nullary modalities $\fmod{t}$:
  \[
    \fmod{t} := \fmod{Fj_1(t)}(\top,\top)
    \qquad\text{for $t\in F1$.}
  \]
  (Alternatively, we could introduce $\fmod{t}$ directly as a nullary
  modality.)
\end{notation}
\begin{lemma}\label{lemF1CoalgSem}
  Given a coalgebra $c\colon C\to FC$, a state $x\in C$, and $t\in F1$, we have
  \[
    x\in \semantics{\fmod{t}} ~\Longleftrightarrow~ F!(c(x)) = t
  \]
\end{lemma}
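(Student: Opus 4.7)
The plan is to reduce this statement to Lemma~\ref{lemF3CoalgSem} by unfolding the definition of the nullary notation $\fmod{t}$ from \autoref{notationF1Mod}, which abbreviates $\fmod{Fj_1(t)}(\top,\top)$. Since $\semantics{\top} = C$, applying \autoref{lemF3CoalgSem} with $\delta = \beta = \top$ and with the $F3$-modality indexed by $Fj_1(t)$ immediately yields
\[
  x\in \semantics{\fmod{t}}
  \iff
  F\chi_{C}^{C}(c(x)) = Fj_1(t).
\]

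The next step is to identify the characteristic map $\chi_C^C\colon C\to 3$. Since every $x\in C$ lies in $S\cap B = C$, the map $\chi_C^C$ is the constant map with value $2\in 3$; equivalently, it factors as
\[
  \chi_C^C \;=\; \bigl(C \xra{!} 1 \xra{j_1} 3\bigr),
\]
because $j_1(0) = 2$ by definition. Applying functoriality of $F$, we obtain $F\chi_C^C = Fj_1 \cdot F!$, and so the right-hand side above becomes
\[
  Fj_1\bigl(F!(c(x))\bigr) = Fj_1(t).
\]

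To conclude, I would invoke injectivity of $Fj_1$. The map $j_1\colon 1\monoto 3$ is injective, and by \autoref{R:trnkovahull}\itemref{R:trnkovahull}{R:trnkovahull:2} we may assume without loss of generality that $F$ preserves injective maps (replacing $F$ by its Trnková hull $\barF$ does not alter coalgebras, coalgebra morphisms, or the semantics of modalities). Hence $Fj_1$ is injective, so $Fj_1(F!(c(x))) = Fj_1(t)$ is equivalent to $F!(c(x)) = t$, which is the desired equivalence.

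The only subtle point is the appeal to injectivity of $Fj_1$: this is exactly where the preservation-of-injections assumption is used, and it cannot be avoided since otherwise different elements of $F1$ could collapse under $Fj_1$ and make the implication from right to left fail. Everything else is a straightforward unfolding of definitions combined with functoriality.
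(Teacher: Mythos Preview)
The proposal is correct and follows essentially the same argument as the paper: unfold $\fmod{t}$ via \autoref{notationF1Mod}, apply \autoref{lemF3CoalgSem} with $\delta=\beta=\top$, rewrite $\chi_C^C = j_1\cdot{!}$, and cancel $Fj_1$ using that~$F$ preserves injections (\autoref{R:trnkovahull}\ref{R:trnkovahull:2}). Your additional remark explaining why the injectivity step is genuinely needed is a nice clarification but not part of the paper's proof.
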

\begin{proof}
  Note that for $\chi_C^C\colon C\to 3$, we have
  $\chi_C^C = (C \xra{!} 1 \xra{j_1} 3)$ where $j_1(0) = 2$.
  \begin{align*}
    \semantics{\fmod{t}}
    &= \semantics{\fmod{Fj_1(t)}(\top,\top)}
    &\text{(\autoref{notationF1Mod})}
    \\
    & = \{x \in C \mid F\chi_{C}^{C}(c(x)) = Fj_1(t)\}
    &\text{(\autoref{lemF3CoalgSem}, $\semantics{\top}=C$)}
    \\
    & = \{x \in C \mid Fj_1(F!(c(x))) = Fj_1(t)\}
    &\text{($\chi_C^C =j_1 \cdot !$)} % chktex 40
    \\
    & = \{x \in C \mid F!(c(x)) = t\}
    &\text{($Fj_1$ injective)}
  \end{align*}
  In the last step we use our running assumption that, w.l.o.g.,~$F$
  preserves injective maps
  (\autoref{R:trnkovahull}\ref{R:trnkovahull:2}).
\end{proof}

\subsection{Algorithmic Construction of Certificates}%
\label{certConstruct}
The $F3$-modalities introduced above (\autoref{defF3Mod}) induce an
instance of coalgebraic modal logic (\autoref{sec:coalgLogic}). We
refer to coalgebraic modal formulae employing the $F3$-modalities as
\emph{$F3$-modal formulae}, and write~$\Formulae$ for the set of
$F3$-modal formulae. As in the extended Paige-Tarjan algorithm
(\autoref{paigeTarjan}), we annotate every block arising during the
execution of \autoref{coalgPT} with a certificate in the shape of an
$F3$-modal formula.  Annotating blocks with formulae means that we
construct maps
\[
  \beta_i\colon C/Q_i \to \Formulae
  \qquad\text{and}\qquad
  \delta_i\colon C/P_i \to \Formulae
  \qquad
  \text{for $i \in \N$}.
\]
As in \autoref{coalgPT}, $i$ indexes the loop iterations. For blocks
$B, S$ in the respective par\-ti\-tion, we denote by $\beta_i(B)$ and
$\delta_i(S)$ the corresponding certificates. We shall prove in
\autoref{algoCertsCorrect} further below that the following
invariants hold, which immediately imply correctness:
\begin{equation}\label{eqCertCorrect}
  \forall B\in X/Q_i\colon
  \semantics{\beta_i(B)} = B
  \qquad\text{and}\qquad
  \forall S\in X/P_i\colon
  \semantics{\delta_i(S)}
  = S,
  \qquad\text{for every $i$}.
\end{equation}
We construct $\beta_i(B)$ and $\delta_i(S)$ iteratively, using certificates for the blocks
$S\subsetneqq B$ at every iteration:

\begin{algorithm}\label{algoCerts}
  We extend \autoref{coalgPT} as follows. We add initializations
  \[
    \beta_0(\{C\}) := \top \qquad\text{and}\qquad \delta_0([x]_{P_0})
    := \fmod{F!(c(x))}\quad\text{for every $[x]_{P_0} \in C/P_0$.}
  \]
  In the $i$-th iteration, we add the following assignments to steps~\ref{defQi1} and~\ref{defPi1}, respectively:\medskip
  \begin{enumerate}[({A$\!'$}1)]
\makeatletter
\renewcommand{\labelenumi}{\theenumi}
\renewcommand{\theenumi}{\text{\bfseries\color{black}(A$\!$'\arabic{enumi})}}
\renewcommand{\p@enumi}{}
\makeatother
  \refstepcounter{enumi}
  \item\label{defBetai1} $\mathrlap{\beta_{i+1}(D)}\phantom{\delta_{i+1}([x]_{P_{i+1}})} = \begin{cases}
      \delta_{i}(S) & \text{if }D = S \\
      \beta_{i}(B)\wedge \neg \delta_{i}(S) & \text{if }D = B\setminus S \\
      \beta_{i}(D) & \text{if }D \in C/Q_i \\
      \end{cases}$\medskip
    \item\label{defDeltai1} $\delta_{i+1}([x]_{P_{i+1}}) = \begin{cases}
        \delta_{i}([x]_{P_{i}}) &\text{if }[x]_{P_{i+1}} = [x]_{P_i} \\
        \delta_{i}([x]_{P_{i}}) \wedge \fmod{F\chi_S^B(c(x))}(\delta_i(S),\beta_i(B))
        &\text{otherwise.}\\
        \end{cases}$\medskip
  \end{enumerate}
  Upon termination, return $\delta_i$.
\end{algorithm}
\noindent Like in \autoref{paigeTarjan}, the only block of $C/Q_0$ has
$\beta_0(\{C\}) = \top$ as a certificate. The partition $C/P_0$
distinguishes by the `output' $F!(c(x))\in F1$ (e.g.~final
vs.~non-final states of an automaton), and by \autoref{lemF1CoalgSem},
the certificate of~$[x]_{P_0}$ specifies precisely this output; in
particular, $\delta_0([x]_{P_0})$ is well-defined.

In the $i$-th iteration of the main loop, we have certificates
$\delta_{i}(S)$ and $\beta_i(B)$ for $S\subsetneqq B$ in
step~\ref{step1} satisfying~\eqref{eqCertCorrect} available from the
previous iterations.  In~\ref{defBetai1}, the Boolean connectives
describe how $B$ is split into $S$ and $B\setminus S$. In~\ref{defDeltai1}, new certificates are constructed for every
predecessor block $T\in C/P_i$ that is refined. If $T$ does not
change, then neither does its certificate. Otherwise, the block
$T = [x]_{P_i}$ is split into the blocks $[x]_{F\chi_S^B(c(x))}$ for
$x \in T$ in step~\ref{defPi1}, which is reflected by the $F3$ modality
$\fmod{F\chi_S^B(c(x))}$ as per \autoref{lemF3CoalgSem}.
\begin{theorem}\label{algoCertsCorrect} For every zippable functor
  $F$, \autoref{algoCerts} is correct: the invariants
  in~\eqref{eqCertCorrect} hold.  Thus, upon termination $\delta_i$
  assigns certificates to each block of $C/\mathord{\sim} = C/P_i$.
\end{theorem}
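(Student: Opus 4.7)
The plan is to establish the invariants in~\eqref{eqCertCorrect} by induction on the iteration index~$i$, and then deduce the second assertion from the correctness theorem for \autoref{coalgPT} quoted in the excerpt.

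\textbf{Base case $i=0$.} For the single block $\{C\} \in C/Q_0$ we have $\beta_0(\{C\}) = \top$ and $\semantics{\top} = C$. For $P_0 = \ker(F! \cdot c)$, pick any $x \in C$; by definition $[x]_{P_0} = \{y \in C \mid F!(c(y)) = F!(c(x))\}$. Since $\delta_0([x]_{P_0}) = \fmod{F!(c(x))}$, \autoref{lemF1CoalgSem} yields $\semantics{\fmod{F!(c(x))}} = [x]_{P_0}$, so the invariant holds and $\delta_0$ is well-defined (the value does not depend on the choice of representative).

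\textbf{Inductive step.} Assume the invariants hold at iteration~$i$, and that step~\ref{step1} has selected $S \in C/P_i$ and $B \in C/Q_i$ with $S \subsetneqq B$. By the induction hypothesis, $\semantics{\delta_i(S)} = S$ and $\semantics{\beta_i(B)} = B$, and in particular $\semantics{\delta_i(S)} \subseteq \semantics{\beta_i(B)}$, which is precisely the side condition required to apply \autoref{lemF3CoalgSem} with $\delta := \delta_i(S)$ and $\beta := \beta_i(B)$. For $\beta_{i+1}$, the three clauses of~\ref{defBetai1} give $\semantics{\beta_{i+1}(S)} = S$, $\semantics{\beta_{i+1}(B \setminus S)} = B \cap (C \setminus S) = B \setminus S$, and $\semantics{\beta_{i+1}(D)} = D$ for $D \in C/Q_i$ unchanged; hence the invariant carries over to $C/Q_{i+1}$.

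\textbf{Refinement of $P_i$.} For $\delta_{i+1}$, consider any $x \in C$. If $[x]_{P_{i+1}} = [x]_{P_i}$, the invariant holds by the induction hypothesis. Otherwise, by clause~\ref{defDeltai1} we have
\[
\delta_{i+1}([x]_{P_{i+1}}) = \delta_i([x]_{P_i}) \wedge \fmod{F\chi_S^B(c(x))}(\delta_i(S), \beta_i(B)).
\]
Applying \autoref{lemF3CoalgSem} to the second conjunct with the values $\semantics{\delta_i(S)} = S$ and $\semantics{\beta_i(B)} = B$ from the hypothesis gives
\[
\semantics{\fmod{F\chi_S^B(c(x))}(\delta_i(S),\beta_i(B))} = \{ y \in C \mid F\chi_S^B(c(y)) = F\chi_S^B(c(x)) \}.
\]
Intersecting with $\semantics{\delta_i([x]_{P_i})} = [x]_{P_i}$ yields exactly the $(P_i \cap \ker(F\chi_S^B \cdot c))$-class of $x$, which by step~\ref{defPi1} is $[x]_{P_{i+1}}$. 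Again well-definedness is automatic since the right-hand side depends only on the block $[x]_{P_{i+1}}$.

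\textbf{Conclusion.} The invariants \eqref{eqCertCorrect} thus hold for all $i$. By the termination/correctness theorem for \autoref{coalgPT} (which is where zippability of $F$ enters, via the results cited from~\cite{concurSpecialIssue}), the algorithm halts after at most $|C|$ iterations with $P_i = \mathord{\sim}$, so $\delta_i$ assigns to every behavioural equivalence class a certificate. There is no real obstacle beyond careful bookkeeping in the inductive step; the only subtle point is ensuring that the hypothesis $\semantics{\delta_i(S)} \subseteq \semantics{\beta_i(B)}$ is available when applying \autoref{lemF3CoalgSem}, which is supplied by the induction together with $S \subsetneqq B$.
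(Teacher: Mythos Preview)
Your proof is correct and follows essentially the same approach as the paper: induction on the iteration index~$i$, with the base case handled via \autoref{lemF1CoalgSem} and the inductive step for $\delta_{i+1}$ via \autoref{lemF3CoalgSem}, using the induction hypothesis to supply $\semantics{\delta_i(S)}=S\subseteq B=\semantics{\beta_i(B)}$. The paper factors out the computation $\semantics{\fmod{F\chi_S^B(c(x))}(\phi_S,\phi_B)}=[x]_{F\chi_S^B\cdot c}$ as a separate preliminary observation, whereas you apply \autoref{lemF3CoalgSem} inline, but this is purely presentational.
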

\begin{proof}
  \begin{enumerate}
  \item We first observe that given $x\in C$,
    $S\subseteq B \subseteq C$, and certificates $\phi_S$ and~$\phi_B$
    of $S$ and $B$, respectively, we have:
    \begin{equation}\label{eqF3ModBS}
      \begin{aligned}
        \semantics{\fmod{F\chi_{S}^B(c(x))}(\phi_S,\phi_B)}
        &=%
        \{x' \in C\mid F\chi_{\semantics{\phi_S}}^{\semantics{\phi_B}}(c(x')) =
        F\chi_{S}^B(c(x)) \}
        \\
        &= [x]_{F\chi_S^B(c(x))},
      \end{aligned}
    \end{equation}
    where the first equation uses \autoref{lemF3CoalgSem} and the
    second one holds because $\semantics{\phi_B} = B$ and
    $\semantics{\phi_S} = S$.
  \item We verify~\eqref{eqCertCorrect} by induction on $i$.
  \begin{itemize}
  \item In the base case $i = 0$, we have
    $\semantics{\beta_0(\{C\})} = \semantics{\top} =
    C$ for the only block in $X/Q_0$. Since $P_0 = \ker (F!\cdot
    c)$,
    $\delta_0$ is well-defined, and by \autoref{lemF1CoalgSem} we have
    \[
      \semantics{\delta_0([x]_{P_0})}
      = \semantics{\fmod{F!(c(x))}}
      = \{y\in C \mid F!(c(x)) = F!(c(y))\}
      = [x]_{P_0}.
    \]
  \item The inductive hypothesis states that
    \[
      \semantics{\delta_i(S)} = S
      \qquad\text{and}\qquad
      \semantics{\beta_i(B)} = B.\tag*{\text{(IH)}}
    \]
    We prove that $\beta_{i+1}$ is correct:
    \begin{align*}
      & \semantics{\beta_{i+1}([x]_{Q_{i+1}})}
      \\ &=
      \begin{cases}
        \semantics{\delta_{i}(S)} & \text{if }[x]_{Q_{i+1}} = S
        \text{, hence }S = [x]_{P_i} \\
        \semantics{\beta_{i}(B)} ~\cap ~C\setminus \semantics{\delta_{i}(S)} & \text{if }[x]_{Q_{i+1}} = B\setminus S
        \text{, hence }B = [x]_{Q_i}
 \\
        \semantics{\beta_{i}([x]_{Q_i})} & \text{if }[x]_{Q_{i+1}} \in C/Q_i \\
      \end{cases}
      \\
    &\overset{\mathclap{\text{(IH)}}}{=}
      \begin{cases}
        S & \text{if }[x]_{Q_{i+1}} = S\\
        B ~\cap ~C\setminus S & \text{if }[x]_{Q_{i+1}} = B\setminus S \\
        [x]_{Q_i} & \text{if }[x]_{Q_{i+1}} \in C/Q_i \\
      \end{cases}
      \\ &=
      \begin{cases}
        [x]_{Q_{i+1}} & \text{if }[x]_{Q_{i+1}} = S = [x]_{P_i}\\
        [x]_{Q_{i+1}}& \text{if }[x]_{Q_{i+1}} = B\setminus S
        \qquad\text{(since $B \cap C\setminus S = B \setminus S$)}   \\
        [x]_{Q_{i+1}} & \text{if }[x]_{Q_{i+1}} \in C/Q_i
        \qquad\text{(since $[x]_{Q_i}$ is not split)}\\
      \end{cases}
      \\ &=
      [x]_{Q_{i+1}}.
    \end{align*}
    For $\delta_{i+1}$, we compute as follows:
    \begin{align*}
      & \semantics{\delta_{i+1}([x]_{P_{i+1}})}
      \\
      &=
                                              \qquad
          \begin{cases}
          \semantics{\delta_{i}([x]_{P_{i}})} &\text{if }[x]_{P_{i+1}} = [x]_{P_i} \\
          \semantics{\delta_{i}([x]_{P_{i}})}
          \cap \semantics{\fmod{F\chi_{S}^B(c(x))}(\delta_i(S), \beta_i(B))}
          &\text{otherwise}
          \end{cases}
            \\
        &\overset{\mathclap{\text{(IH) \&~\eqref{eqF3ModBS}}}}{=}
          \qquad
          \begin{cases}
          [x]_{P_{i}} &\text{if }[x]_{P_{i+1}} = [x]_{P_i} \\
          [x]_{P_{i}}
          \cap [x]_{F\chi_S^B(c(x))}
          &\text{otherwise}
          \end{cases}
            \\
        &\overset{\mathclap{\text{def. }P_{i+1}}}{=}
          \qquad
          \begin{cases}
          [x]_{P_{i+1}} &\text{if }[x]_{P_{i+1}} = [x]_{P_i} \\
          [x]_{P_{i+1}}
          &\text{otherwise}
          \end{cases}
            \\ &= \quad[x]_{P_{i+1}}
                 \tag*{\qedhere}
    \end{align*}
  \end{itemize}
\end{enumerate}
\end{proof}

\noindent
The assumption of zippability is used in the correctness of the underlying
partition refinement algorithm. But for the construction of certificates, this
assumption translates into the ability to describe certificates as only a
conjunction of $F3$-modalities in~\ref{defDeltai1}.
As a consequence, we obtain a Hennessy-Milner-type property of
our $F3$-modal formulae:
\begin{corollary}%
  \label{hennessyMilner}
  For zippable $F$, states $x,y$ in a finite $F$-coalgebra are
  behaviourally equivalent iff they agree on all $F3$-modal formulae.
\end{corollary}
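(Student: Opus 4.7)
The statement is a direct corollary of the adequacy of coalgebraic modal logic together with the correctness result \autoref{algoCertsCorrect} for \autoref{algoCerts}, so I expect the proof to be essentially a one-paragraph argument with no real obstacles.

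For the ``only if'' direction, I would simply invoke \autoref{bisimInvariant}: the $F3$-modalities were constructed as genuine predicate liftings (the naturality was used implicitly in their definition via the canonical quotient $q\colon 2^2 \to 3$), so any coalgebraic modal logic built from them is adequate, i.e.\ behaviourally equivalent states satisfy the same formulae.

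For the ``if'' direction, the key observation is that \autoref{algoCerts} produces, upon termination, a concrete $F3$-modal certificate for every behavioural equivalence class. More precisely, since $F$ is zippable and the coalgebra is finite, the underlying partition refinement algorithm terminates with $P_i = \mathord{\sim}$, and by \autoref{algoCertsCorrect} the invariant $\semantics{\delta_i(S)} = S$ holds for every block $S \in C/P_i = C/\mathord{\sim}$. So I would argue contrapositively: if $x \not\sim y$, then $[x]_\sim \neq [y]_\sim$, and taking $\phi := \delta_i([x]_\sim)$ gives an $F3$-modal formula with $x \in \semantics{\phi} = [x]_\sim$ but $y \notin [x]_\sim = \semantics{\phi}$, so $x$ and $y$ disagree on $\phi$.

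Putting these two directions together yields the claim. The only mildly delicate point is to make sure the hypothesis of \autoref{algoCertsCorrect} is met (zippability of $F$ and finiteness of the coalgebra), both of which are given in the statement of \autoref{hennessyMilner}; no separate Lindenbaum-style or canonical-model construction is needed, since the algorithm itself supplies the witnessing formulae.
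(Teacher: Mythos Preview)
Your proposal is correct and matches the paper's approach: the paper presents the corollary as an immediate consequence of \autoref{algoCertsCorrect} without giving an explicit proof, and your argument spells out exactly the intended reasoning, invoking \autoref{bisimInvariant} for adequacy and the certificates produced by \autoref{algoCerts} for expressiveness.
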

\begin{construction}\label{extractDistinguish} Given a coalgebra
  $c\colon C \to FC$ and states $x, y \in
  C$, a smaller formula distinguishing a state $x$ from a state
  $y$ can be extracted from the certificates of the behavioural
  equivalence classes of $x$ and $y$ in time
  $\CO(|C|)$: It is the leftmost conjunct that is different in the
  respective certificates of $x$ and
  $y$.  In other words, this is the subformula starting at the modal
  operator introduced in $\delta_i$ for the least
  $i$ with $(x,y)\notin P_i$; hence,
  $x$ satisfies $\fmod{t}(\delta,\beta)$ but
  $y$ satisfies $\fmod{t'}(\delta,\beta)$ for some $t'\neq t$ in $F3$.

  Hence, when only distinguishing formula for states $x$, $y$ is of interest,
  it is sufficient to run the algorithm until $x$ is split from $y$, and the
  distinguishing formula is conjunct added in step~\ref{defDeltai1}. This leads
  to an earlier termination in practice but does not change the run time
  complexity in $\CO$-notation.
\end{construction}
\begin{proposition}%
  \autoref{extractDistinguish} correctly extracts a formula
  distinguishing~$x$ from~$y$.
\end{proposition}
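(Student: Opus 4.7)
The plan is to pinpoint the iteration at which the algorithm first separates~$x$ and~$y$, and then argue that the modal conjunct introduced at exactly that point is a distinguishing formula. Let $k$ denote the final iteration, so $P_k = \mathord{\sim}$; since $x\not\sim y$, we have $(x,y)\notin P_k$, and hence there exists a least index~$i$ with $(x,y)\notin P_i$. By \autoref{extractDistinguish}, the formula returned is the leftmost conjunct at which $\delta_i([x]_{P_i})$ and $\delta_i([y]_{P_i})$ diverge, which will correspond precisely to the conjunct added by~\ref{defDeltai1} in the iteration where the separation happens.

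First I would handle the base case $i=0$. Here $(x,y)\notin P_0 = \ker(F!\cdot c)$ gives $F!(c(x)) \neq F!(c(y))$, so the initial single-conjunct certificates $\delta_0([x]_{P_0}) = \fmod{F!(c(x))}$ and $\delta_0([y]_{P_0}) = \fmod{F!(c(y))}$ already differ. By \autoref{lemF1CoalgSem}, the former nullary modality holds at~$x$ and fails at~$y$, which settles this case.

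For $i \geq 1$, I would unfold the update performed in iteration $i-1$. Let $S \in C/P_{i-1}$ and $B \in C/Q_{i-1}$ be the blocks chosen in~\ref{step1}, so that $P_i = P_{i-1} \cap \ker(F\chi_S^B\cdot c)$. By minimality of~$i$, $(x,y) \in P_{j}$ for all $j < i$; a short induction on~$j$ then shows $\delta_j([x]_{P_j}) = \delta_j([y]_{P_j})$, because $x$ and $y$ lie in the same block throughout and hence receive literally identical updates via~\ref{defDeltai1}. Since $(x,y)\notin P_i$, we must have $F\chi_S^B(c(x)) \neq F\chi_S^B(c(y))$, and~\ref{defDeltai1} therefore appends to this common prefix the two conjuncts
\[
  \fmod{F\chi_S^B(c(x))}(\delta_{i-1}(S),\beta_{i-1}(B))
  \quad\text{and}\quad
  \fmod{F\chi_S^B(c(y))}(\delta_{i-1}(S),\beta_{i-1}(B)),
\]
which differ in their outermost modal symbol. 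This is therefore the leftmost differing conjunct isolated by \autoref{extractDistinguish}.

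To close the argument I would invoke \autoref{lemF3CoalgSem} with $\delta_{i-1}(S)$ and $\beta_{i-1}(B)$: by the invariants~\eqref{eqCertCorrect} of \autoref{algoCertsCorrect} their extensions are $S$ and $B$ respectively, hence a state~$z$ satisfies $\fmod{t}(\delta_{i-1}(S),\beta_{i-1}(B))$ iff $F\chi_S^B(c(z))=t$. Taking $t := F\chi_S^B(c(x))$ makes the chosen conjunct tautologically true at~$x$ and false at~$y$, so it distinguishes them. The only genuinely delicate point is the bookkeeping for the claim that the certificates of~$x$ and~$y$ agree on every conjunct added strictly before iteration $i-1$; once that inductive invariant is in place, the distinguishing property follows immediately from \autoref{lemF1CoalgSem} and \autoref{lemF3CoalgSem}.
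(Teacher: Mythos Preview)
Your proof is correct and follows essentially the same approach as the paper: a case distinction on the least index at which $P_i$ separates $x$ and $y$, with the base case handled via \autoref{lemF1CoalgSem} and the inductive case via \autoref{lemF3CoalgSem}. You are slightly more explicit than the paper in two respects---you spell out the inductive invariant that the certificates of $x$ and $y$ literally coincide before the split, and you invoke the correctness invariants~\eqref{eqCertCorrect} to justify the hypothesis $\semantics{\delta_{i-1}(S)}\subseteq\semantics{\beta_{i-1}(B)}$ needed for \autoref{lemF3CoalgSem}---but these are refinements of the same argument rather than a different route.
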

\begin{proof}%
  In order to verify that the first differing conjunct is a
  distinguishing formula, we distinguish cases on the
  least~$i$ such that $(x,y)\notin P_i$:

  If $x$ and $y$ are already split by $P_0$, then the conjunct at
  index $0$ in the respective certificates of $[x]_{\sim}$ and
  $[y]_{\sim}$ differs, and we have $t = F!(c(x))$ and
  $t' = F!(c(y))$. By \autoref{lemF1CoalgSem}, $\fmod{t}$
  distinguishes $x$ from $y$ (and $\fmod{t'}$ distinguishes~$y$ from
  $x$).

  If $x$ and $y$ are split by $P_{i+1}$ (but $(x,y)\in P_i$), then
  \[
    \underbrace{F\chi_S^B(c(x))}_{t~:=} \neq \underbrace{F\chi_S^B(c(y))}_{t'~:=}.
  \]
  Thus, the conjuncts that differ in the respective certificates for
  $[x]_{\sim}$ and $[y]_{\sim}$ are the following ones at index
  $i+1$:
  \[
    \fmod{t}(\delta_i(S),\beta_i(B))
    \qquad
    \text{and}
    \qquad
    \fmod{t'}(\delta_i(S),\beta_i(B)).
  \]
  By \autoref{lemF3CoalgSem}, $\fmod{t}(\delta_i(S),\beta_i(B))$
  distinguishes $x$ from $y$ (and $\fmod{t}(\delta_i(S),\beta_i(B))$
  distinguishes $y$ from $x$).
\end{proof}

\subsection{Complexity Analysis}%
\label{complexityAnalysis}
The operations introduced by \autoref{algoCerts} can be implemented
with only constant run time overhead. To this end, one implements
$\beta$ and $\delta$ as arrays of formulae of length $|C|$ (note that
at any point, there are at most $|C|$-many blocks). In the
refinable-partition data structure~\cite{ValmariLehtinen08}, every
block has an index (a natural number) and there is an array of
length~$|C|$ mapping every state $x\in C$ to the block it is contained
in. Hence, for both partitions $C/P$ and $C/Q$, one can look up a
state's block and a block's certificate in constant time.

It is very likely that the certificates contain a particular
subformula multiple times and that certificates of different blocks
share common subformulae. For example, every certificate of a block
refined in the $i$-th iteration using $S\subsetneqq B$ contains the
subformulae $\delta_i(S)$ and~$\beta_i(B)$.  Therefore, it is
advantageous to represent all certificates constructed as one directed
acyclic graph (dag) with inner nodes labelled by either a modal
operator or conjunction and having precisely two outgoing edges, and
leaf nodes labelled by either~$\top$ or a nullary modal
operator. Moreover, edges have a binary flag indicating whether they
represent negation $\neg$.  Initially, there is only one (leaf) node
representing~$\top$, and the operations of \autoref{algoCerts}
allocate new nodes and update the arrays for~$\beta$ and~$\delta$ to
point to the right nodes.  For example, if the predecessor block
$T\in C/P_i$ is refined in step~\ref{defDeltai1}, yielding a new block
$[x]_{P_{i+1}}$, then a new node labelled $\wedge$ is allocated with
edges to the nodes $\delta_i(T)$ and to another new node labelled
$F\chi_S^B(c(x))$ with edges to the nodes $\delta_i(S)$ and
$\delta_i(B)$.

For purposes of estimating the size of formulae generated by the
algorithm, we use a notion of \emph{transition} in coalgebras,
inspired by the notion of canonical graph~\cite{Gumm2005}.

\begin{definition}\label{coalgebraEdge}
  For states $x,y$ in an $F$-coalgebra $(C,c)$, we say that there is a
  \emph{transition $x\to y$} if $c(x)\in FC$ is not in the image
  $Fi[F(C\setminus \{y\})]~(\subseteq FC)$, where
  $i\colon C\setminus \{y\}\monoto C$ is the inclusion map.
\end{definition}
\begin{theorem}\label{certSize} For a coalgebra with~$n$ states
  and~$m$ transitions, the formula dag constructed by
  \autoref{algoCerts}
  has at most
  \[
    2\cdot m\cdot (\log_2 n + 1) + 2\cdot n
  \]
  nodes, each with outdegree $\le 2$, and a height of at most~${n+1}$. Hence, the dag size is
  in $\CO(m\cdot \log_2 n + n)$.
\end{theorem}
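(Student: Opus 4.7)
The plan is to establish the node-count and height bounds separately, partitioning the dag by which phase of \autoref{algoCerts} created each node.

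\emph{Node count.} I would count three disjoint groups of nodes. Initialization creates $\top$ together with one nullary-modality leaf $\fmod{F!(c(x))}$ per block of $C/P_0$, hence at most $n + 1$ nodes. Each iteration of the main loop creates exactly one conjunction node via step A'1 (for $\beta_{i+1}(B \setminus S)$); since every iteration enlarges $C/Q$ by one block and $|C/Q_0| = 1$, the loop runs at most $n - 1$ times, contributing at most $n - 1$ A'1 nodes. Step A'2 creates two nodes (a conjunction and a binary $F3$-modality) for each block of $C/P_{i+1}$ arising from a proper refinement in iteration $i$. The central estimate is to bound the total count of refinement events by $m(\log_2 n + 1)$, via the Paige-Tarjan amortized analysis underpinning \autoref{coalgPT}: the size condition $|S| \le |B|/2$ in step \ref{step1} guarantees that each state $y$ lies in a splitter $S$ during at most $\log_2 n + 1$ distinct iterations, and in every such iteration the new blocks created among predecessors of $y$ are distinguished by distinct values of $F\chi_S^B(c(x))$ as $x$ ranges over predecessors of $y$, so the new-block count chargeable to $y$ is at most its in-degree. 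Summing over states and occurrences yields at most $m(\log_2 n + 1)$ events, contributing at most $2m(\log_2 n + 1)$ A'2 nodes. Combining all three groups gives at most $2m(\log_2 n + 1) + 2n$ nodes.

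\emph{Outdegree and height.} The outdegree claim is immediate from inspection: leaves ($\top$ and nullary modalities) have outdegree $0$, while conjunctions and binary modalities have exactly two outgoing edges, with the negation flag encoded on edges rather than as a separate node. For the height, I would induct on the iteration index, showing that after iteration $i$ every $\delta_j(T)$ and $\beta_j(B)$ with $j \le i+1$ has dag-height at most $i + 2$. The inductive step must control the interaction between the fresh conjunction and the fresh modality in step A'2: a naive estimate would add two levels per iteration, but the critical-path argument exploits that the two conjuncts in $\delta_{i+1}(T) = \delta_i([x]_{P_i}) \wedge \fmod{t}(\delta_i(S),\beta_i(B))$ draw from the same previous-iteration dag, so at most one new level appears on any root-to-leaf path per iteration. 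Since at most $n$ iterations occur, the final height is at most $n + 1$.

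The main obstacle is the amortized count in step A'2, for which I must carefully translate the Paige-Tarjan runtime accounting from \cite{concurSpecialIssue} into a structural count of dag nodes and justify that each refinement event can be charged to a single incoming transition at a splitter state. A secondary technical point is the height induction, where the naive bound of two levels per iteration must be sharpened to one by appealing to the sharing structure of the generated dag.
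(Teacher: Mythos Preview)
Your overall strategy matches the paper's: bound the total number of blocks ever appearing in $(C/P_i)_i$ via the Paige--Tarjan amortization (each state lies in a splitter at most $\log_2 n + 1$ times), and translate this into a node count. The paper packages this into a chain of lemmas culminating in \autoref{totalBlockCount}, which gives $|\{T : T \in C/P_i \text{ for some } i\}| \le 2m(\log_2 n + 1) + n$; it then counts one modality per block and adds at most~$n$ further conjunction nodes for the $\beta$-updates.

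There is a genuine gap in your refinement-event count. When a predecessor block $T$ of $S$ is split in step~\ref{defPi1}, one of the resulting sub-blocks may contain \emph{no} predecessor of~$S$ at all: states $x \in T$ with no edge into~$S$ satisfy $F\chi_S^B(c(x)) = F\chi_\emptyset^B(c(x))$ (\autoref{noEdgeNoChiS}), and by stability (\autoref{stableP}) all such states land in a single common sub-block. Your charging scheme (``the new-block count chargeable to $y$ is at most its in-degree'') only sees new blocks that contain some predecessor of some $y \in S$; the edge-free sub-block is invisible to it. The paper handles this explicitly in \autoref{newBlockCount}: at most one such extra sub-block arises per split~$T$, giving $k \le |T \cap \pred(S)| + 1$ and hence the factor~$2$ in the per-iteration bound $2|\pred(S)|$. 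With that factor the total number of refinement events is $2m(\log_2 n + 1)$ rather than $m(\log_2 n + 1)$. Combined with your (correct) observation that each event allocates two nodes, this overshoots the stated constant, though the asymptotic $\CO(m\log n + n)$ survives.

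Your height argument is also not sound as written. Step~\ref{defDeltai1} places a new conjunction node \emph{and} a new modality node above $\delta_i(S)$ and $\beta_i(B)$, so a root-to-leaf path through the fresh modality gains two edges, not one; the sharing you invoke does not shorten that particular path. The paper dispatches the height claim in one line (``it is immediate that $\delta_i$ and $\beta_i$ have height at most $i+1$''); to make your version rigorous you need to state precisely which height notion you are inducting on and carry the two bounds for~$\delta_i$ and~$\beta_i$ through the recursion simultaneously.
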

\noindent
Before proving \autoref{certSize}, we need to establish a sequence of
lemmas on the underlying partition refinement algorithm. Recall from
\autoref{R:trnkovahull}\ref{R:trnkovahull:2} that we may assume that~$F$
preserves finite intersections by working with its Trnkov\'a hull
instead.

Let $(C,c)$ be a coalgebra for $F$. We define a binary relation~$\to$
on $\Pow(C)$ by
\[
  T\to S \qquad\Longleftrightarrow\qquad
  \exists x\in T, y\in  S\colon x\to y
\]
for $T,S\subseteq C$.  In other words, we write $T\to S$ if there is a
transition from some state of $T$ to some state of $S$. Also we define
the set $\pred(S)$ of predecessor states of a set~$S$ as
\[
  \pred(S) = \big\{x\in C\mid \{x\}\to S\big\}.
\]
\begin{lemma}\label{noEdgeNoChiS} For every $F$-coalgebra $(C,c)$,
  $x\in C$, and $S\subseteq B\subseteq C$ with $S$ finite, we have
  \[
    \{x\}\not\to S
    \qquad\Longrightarrow\qquad
    F\chi_S^B(c(x)) = F\chi_{\emptyset}^B(c(x)).
  \]
\end{lemma}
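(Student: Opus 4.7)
The plan is to exploit two facts: first, the characteristic maps $\chi_S^B$ and $\chi_\emptyset^B$ agree on $C\setminus S$ (both send $B\setminus S$ to $1$ and $C\setminus B$ to $0$; they differ only on $S$, where $\chi_S^B$ sends to $2$ and $\chi_\emptyset^B$ to $1$); and second, under the hypothesis $\{x\}\not\to S$, the element $c(x)\in FC$ factors through $F(C\setminus S)$.

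First I would unfold the definition of transition (\autoref{coalgebraEdge}). The assumption $\{x\}\not\to S$ means $x\not\to y$ for every $y\in S$, i.e.\ $c(x)\in Fi_y[F(C\setminus\{y\})]$ for each $y\in S$, where $i_y\colon C\setminus\{y\}\monoto C$ is the inclusion. Viewing $F$ as preserving injections (\autoref{R:trnkovahull}\ref{R:trnkovahull:2}), we read each $Fi_y$ as an inclusion of subsets of $FC$, so this says
\[
  c(x)\in\bigcap_{y\in S} F(C\setminus\{y\}).
\]

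Next, I would invoke the key property from \autoref{R:trnkovahull}\ref{R:trnkovahull:2} that (the Trnkov\'a hull of) $F$ preserves finite intersections. Since $S$ is finite, the intersection $\bigcap_{y\in S}(C\setminus\{y\}) = C\setminus S$ is a finite intersection, so
\[
  \bigcap_{y\in S} F(C\setminus\{y\}) = F(C\setminus S)\quad\text{as subsets of }FC.
\]
Hence $c(x)\in F(C\setminus S)$, i.e.\ $c(x)=Fj(t')$ for some $t'\in F(C\setminus S)$, where $j\colon C\setminus S\monoto C$ is the inclusion. This is the step that genuinely uses finiteness of $S$ and the intersection-preservation property; it is also the main technical hurdle, though all the work has already been done in the cited result.

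Finally, from the observation that $\chi_S^B\cdot j=\chi_\emptyset^B\cdot j\colon C\setminus S\to 3$ (both maps have the same value on every element of $C\setminus S$), functoriality of $F$ yields
\[
  F\chi_S^B(c(x)) = F(\chi_S^B\cdot j)(t') = F(\chi_\emptyset^B\cdot j)(t') = F\chi_\emptyset^B(c(x)),
\]
which is the desired equation.
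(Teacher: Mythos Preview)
Your proof is correct and follows essentially the same approach as the paper: factor $c(x)$ through $F(C\setminus S)$ using preservation of finite intersections, then use that $\chi_S^B$ and $\chi_\emptyset^B$ agree on $C\setminus S$. The paper's writeup routes the final step through two small commuting triangles (with the restriction of $\chi_\emptyset^B$ to $C\setminus S$ as an intermediate map), whereas you directly observe $\chi_S^B\cdot j=\chi_\emptyset^B\cdot j$; these are the same argument.
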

\begin{proof}
  For every $y\in S$, we have that $x\not\to y$. Hence, for every
  $y\in S$, there exists $t_y\in F(C\setminus\{y\})$ such that
  \[
    c(x) = Fi(t_y)
    \qquad\text{for }i\colon C\setminus\{y\} \monoto C.
  \]
  The set $C\setminus S$ is the intersection of all sets
  $C\setminus\{y\}$ with $y \in S$:
  \[
    C\setminus S = \bigcap_{y\in S} (C\setminus\{y\}).
  \]
  Since $F$ preserves finite intersections and $S$ is
  finite, we have that
  \[
    F(C\setminus S) = \bigcap_{y\in S} F(C\setminus\{y\}).
  \]
  Since $c(x)\in FC$ is contained in every $F(C\setminus\{y\})$ (as witnessed by $t_y$)
  it is also contained in their intersection. That is, for $m\colon C\setminus
  S\monoto C$ being the inclusion map, there is $t'\in F(C\setminus
  S)$ such that $Fm(t') = c(x)$. Now consider the following diagrams:
  \[
    \begin{tikzcd}[column sep=2mm]
      c(x)
      \descto{r}{\(\in\)}
      & FC
      \arrow{r}{F\chi_S^B}
      &[8mm] F3
      \\
      t'
      \arrow[mapsto]{u}
      \descto{r}{\(\in\)}
      & F(C\setminus S)
      \arrow{u}{Fm}
      \arrow{ur}[swap]{F\chi_\emptyset^B}
    \end{tikzcd}
    \qquad\text{and}\qquad
    \begin{tikzcd}[column sep=2mm]
      FC
      \arrow{r}{F\chi_\emptyset^B}
      &[8mm] F3
      \\
      F(C\setminus S)
      \arrow{u}{Fm}
      \arrow{ur}[swap]{F\chi_\emptyset^B}
    \end{tikzcd}
  \]
  Both triangles commute because $\chi_\emptyset^B = \chi_S^B\cdot m$ and
  $\chi_{\emptyset}^B = \chi_\emptyset^B\cdot m$. Thus, we conclude
  \[
    F\chi_S^B(c(x))
    = F\chi_S^B(Fm(t'))
    = F\chi_\emptyset^B(t')
    = F\chi_\emptyset^B(Fm(t'))
    = F\chi_\emptyset^B(c(x)).\tag*{\qedhere}
  \]
\end{proof}

The classical Paige-Tarjan algorithm maintains the invariant that the
partition $P_i$ is \emph{stable}~\cite{PaigeTarjan87}, meaning that
for all states $(x,x')\in P_i$ in the same block, either both $x, x'$
or none of~$x,x'$ have a transition to a given block in $C/Q_i$. In
terms of characteristic functions~$\chi_\emptyset^B$ and for
general~$F$, this can be rephrased as follows.
\begin{lemma}\label{stableP}
  For all $(x,x') \in P_i$ and $B\in C/Q_i$ in \autoref{coalgPT}, we have
  \[
    F\chi_\emptyset^B(c(x)) = F\chi_\emptyset^B(c(x')).
  \]
\end{lemma}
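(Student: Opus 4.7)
The plan is to proceed by induction on $i$, with the key observation that the characteristic function $\chi_\emptyset^B$ always factors through a more informative characteristic map that the algorithm has already used to refine $P$.

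For the base case $i = 0$, we have $C/Q_0 = \{C\}$, so $B = C$, and $\chi_\emptyset^C \colon C \to 3$ is constantly $1$. Hence $\chi_\emptyset^C = j \cdot {!}$ where $j \colon 1 \to 3$ maps $0 \mapsto 1$. Since $P_0 = \ker(F! \cdot c)$, applying $Fj$ to the equality $F!(c(x)) = F!(c(x'))$ yields the claim.

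For the inductive step, fix $B' \in C/Q_{i+1}$ and $(x,x') \in P_{i+1}$. The block $B'$ arises in one of three ways. If $B' \in C/Q_i$ (so $B' \neq B$, the block being split), then $P_{i+1} \subseteq P_i$ gives the claim directly from the induction hypothesis. Otherwise, $B' = S$ or $B' = B \setminus S$. In each case I define a map $3 \to 3$ that postcomposes $\chi_S^B$ to $\chi_\emptyset^{B'}$:
\[
  q_S(0) = q_S(1) = 0,\quad q_S(2) = 1, \qquad q_{B\setminus S}(0) = q_{B\setminus S}(2) = 0,\quad q_{B\setminus S}(1) = 1.
\]
A case check on $x \in S$, $x \in B \setminus S$, $x \notin B$ shows $q_S \cdot \chi_S^B = \chi_\emptyset^S$ and $q_{B\setminus S} \cdot \chi_S^B = \chi_\emptyset^{B\setminus S}$. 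Since $P_{i+1} \subseteq \ker(F\chi_S^B \cdot c)$ by step~\ref{defPi1}, we have $F\chi_S^B(c(x)) = F\chi_S^B(c(x'))$; applying $Fq_S$ or $Fq_{B\setminus S}$ and using functoriality gives the desired equality $F\chi_\emptyset^{B'}(c(x)) = F\chi_\emptyset^{B'}(c(x'))$.

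No step requires the zippability hypothesis or injective-preservation of $F$; the argument is purely about how $P_{i+1}$ is constructed and how the $3$-valued characteristic maps factor. The only potential pitfall is keeping straight which integer in $3$ denotes which subset (the convention from~\eqref{eqChi3}), so that the postcomposition maps $q_S$ and $q_{B\setminus S}$ are correctly defined; once this bookkeeping is fixed, the proof reduces to applying $F$ to elementary equations between set maps.
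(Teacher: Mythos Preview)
Your proof is correct and takes a genuinely different route from the paper's. The paper invokes an invariant established elsewhere (\cite[Prop.~4.12]{concurSpecialIssue}): in every iteration there exists a map $c_i\colon C/P_i\to F(C/Q_i)$ with $F[-]_{Q_i}\cdot c = c_i\cdot [-]_{P_i}$. Given this, $\chi_\emptyset^B$ factors through $[-]_{Q_i}$ (since $B\in C/Q_i$), and the result follows in one line. You instead give a direct, self-contained induction on~$i$, exploiting that for each new block $B'\in C/Q_{i+1}$ the map $\chi_\emptyset^{B'}$ factors either through $\chi_\emptyset^B$ for some $B\in C/Q_i$ (unchanged blocks) or through $\chi_S^B$ (the two new blocks $S$ and $B\setminus S$), and $P_{i+1}$ is contained in the kernels of both $F\chi_\emptyset^B\cdot c$ (by the inductive hypothesis and $P_{i+1}\subseteq P_i$) and $F\chi_S^B\cdot c$ (by step~\ref{defPi1}). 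Your approach avoids the external citation and makes the argument entirely elementary; the paper's approach is shorter once the cited invariant is in hand, and that invariant is of independent use elsewhere in the development.
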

\begin{proof}
  One can show~\cite[Prop.~4.12]{concurSpecialIssue} that in every
  iteration $i$ we have a map $c_i\colon C/P_i\to F(C/Q_i)$ that
  satisfies $F[-]_{Q_i}\cdot c= c_i\cdot [-]_{P_i}$:
  \[
    \begin{tikzcd}
      C
      \arrow{r}{c}
      \arrow[->>]{d}[swap]{[-]_{P_i}}
      & FC
      \arrow[->>]{d}{F[-]_{Q_i}}
      \\
      C/P_i
      \arrow{r}{c_i}
      & F(C/Q_i)
    \end{tikzcd}
  \]
  where the maps $[-]_{P_i}, [-]_{Q_i}$ send elements of $C$ to their
  equivalence class (\autoref{eqEqClass}). The map $\chi_\emptyset^B\colon
  C\to 3$ for $B\in C/Q_i$ can be decomposed as follows:
  \[
    \begin{tikzcd}
      C
      \arrow[->>]{d}[swap]{[-]_{Q_i}}
      \arrow{r}{\chi_\emptyset^B}
      & 3
      \\
      C/Q_i
      \arrow{ur}[swap]{\chi_\emptyset^{\{B\}}}
    \end{tikzcd}
  \]
  Combining these two diagrams, we obtain
  \begin{equation}\label{eqChiEq}
    F\chi_\emptyset^B\cdot c
    = F\chi_\emptyset^{\{B\}}\cdot F[-]_{Q_i}\cdot c
    = F\chi_\emptyset^{\{B\}}\cdot c_i\cdot [-]_{P_i}.
  \end{equation}
  Since for all $(x,x') \in P_i$, we have $[x]_{P_i} =
  [x']_{P_{i}}$,
  we conclude that
  \[
    F\chi_\emptyset^B(c(x))
    \overset{\eqref{eqChiEq}}= F\chi_\emptyset^{\{B\}}(c_i([x]_{P_i}))
    = F\chi_\emptyset^{\{B\}}(c_i([x']_{P_i}))
    \overset{\eqref{eqChiEq}}= F\chi_\emptyset^B(c(x')).
    \tag*{\qedhere}
  \]
\end{proof}
\noindent Combining the previous two lemmas, we obtain that in the
refinement step for $S\subsetneqq B$, only predecessor blocks of $S$
need to be adjusted in~\ref{defPi1}, so that only the formulae for
these blocks need to be updated:
\begin{lemma}\label{noEdgeNoSplit}
  For $S\subsetneqq B\in C/Q_i$ in the $i$th iteration of \autoref{coalgPT},
  a block $T\in C/P_i$ with no edge to $S$ is not modified; in
  symbols:

  \[
    T\not\to S \quad\Longrightarrow\quad
    T\in C/P_{i+1}
    \tag*{for all \(T\in C/P_i\).}
  \]
\end{lemma}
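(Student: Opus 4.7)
The plan is to unfold the definition $P_{i+1} = P_i \cap \ker(F\chi_S^B \cdot c)$ and check that any two states $x,x' \in T$ remain equivalent under this refined relation. Since $T \in C/P_i$ already gives $(x,x') \in P_i$, it suffices to prove $F\chi_S^B(c(x)) = F\chi_S^B(c(x'))$ for all $x,x' \in T$.

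To establish this, I would route the comparison through $F\chi_\emptyset^B$, chaining the two preceding lemmas. First, from the hypothesis $T \not\to S$ we get $\{x\} \not\to S$ for every $x \in T$, and since $S$ is a finite subset of the finite carrier $C$, \autoref{noEdgeNoChiS} applies and yields
\[
  F\chi_S^B(c(x)) = F\chi_\emptyset^B(c(x))
  \qquad \text{for every } x \in T.
\]
Second, since $(x,x') \in P_i$ for any $x,x' \in T$ and $B \in C/Q_i$, \autoref{stableP} gives
\[
  F\chi_\emptyset^B(c(x)) = F\chi_\emptyset^B(c(x')).
\]
Concatenating these three equalities (the first applied to $x$, the second, and the first again applied to $x'$) yields $F\chi_S^B(c(x)) = F\chi_S^B(c(x'))$, so $T$ is a single $\ker(F\chi_S^B \cdot c)$-class and hence remains a block of $C/P_{i+1}$.

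There is no substantive obstacle here: the work has already been done in \autoref{noEdgeNoChiS} (which handles the absence of transitions to $S$ via the finite-intersection preservation of the Trnkov\'a hull of $F$) and in \autoref{stableP} (which encodes stability of $P_i$ w.r.t.\ $Q_i$). The only thing to be careful about is that \autoref{noEdgeNoChiS} requires $S$ to be finite, which holds because the whole setting is that of a finite coalgebra, and that one reads off $\{x\} \not\to S$ from the relational definition of $T \not\to S$ applied to the singleton $\{x\} \subseteq T$.
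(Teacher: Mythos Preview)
Your proof is correct and essentially identical to the paper's: both reduce the claim to showing $F\chi_S^B(c(x)) = F\chi_S^B(c(x'))$ for $x,x'\in T$, and both chain \autoref{noEdgeNoChiS} (applied to~$x$), \autoref{stableP}, and \autoref{noEdgeNoChiS} (applied to~$x'$) in exactly this order. Your explicit remark that finiteness of~$S$ is needed for \autoref{noEdgeNoChiS} is a nice clarification the paper leaves implicit.
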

\begin{proof}
  Since $T\not\to S$, we have
  $\{x\}\not\to S$ and $\{x'\}\not\to S$ for all $x,x'\in T$. Thus,
  \begin{align*}
    F\chi_S^B(c(x))
    &= F\chi_\emptyset^B(c(x))
    &\text{(\autoref{noEdgeNoChiS}, $\{x\}\not\to S$)}
    \\
    &= F\chi_\emptyset^B(c(x'))
    &\text{(\autoref{stableP}, $(x,x')\in P_i$)}
    \\
    &= F\chi_S^B(c(x'))
    &\text{(\autoref{noEdgeNoChiS}, $\{x'\}\not\to S$)\rlap{.}}
      \tag*{\qedhere}
  \end{align*}
\end{proof}

\noindent The above lemma shows that the number of blocks $T$ that are
split is bounded by the number of predecessor blocks of the state set $S$.
Since most of the resulting smaller blocks $T'$ in the new partition
$C/P_{i+1}$ must have an edge to $S$, the number of these new blocks must also
be bounded essentially in the predecessors of $S$, as we show next:

\begin{lemma}%
  \label{newBlockCount}
  For  $S\subseteq C$ and finite $C$ in the $i$th iteration of
  \autoref{coalgPT},
  \[
    |\{T'\in C/P_{i+1}\mid T'\not\in C/P_i\}| ~\le~ 2\cdot |\pred(S)|.
  \]
\end{lemma}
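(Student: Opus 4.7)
The plan is to separate the new blocks $T'\in C/P_{i+1}\setminus C/P_i$ into two types based on whether they meet $\pred(S)$, and bound each type by $|\pred(S)|$. First, by \autoref{noEdgeNoSplit}, any $T\in C/P_i$ with $T\not\to S$ persists in $C/P_{i+1}$ unchanged; so every new block $T'$ is a proper subset of some split block $T\in C/P_i$ satisfying $T\to S$, and such a $T$ necessarily contains a state of $\pred(S)$. Since distinct blocks of $C/P_i$ are disjoint, the number of split $T$'s is at most $|\pred(S)|$.

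The key step is to show that within any fixed split block $T\in C/P_i$, all non-predecessors of $S$ end up in a single sub-block of $T$ in $C/P_{i+1}$. Indeed, for $x,x'\in T$ with $\{x\}\not\to S$ and $\{x'\}\not\to S$, combining \autoref{noEdgeNoChiS} and \autoref{stableP} (applied to $(x,x')\in P_i$) yields
\[
  F\chi_S^B(c(x)) = F\chi_\emptyset^B(c(x)) = F\chi_\emptyset^B(c(x')) = F\chi_S^B(c(x')),
\]
so $x$ and $x'$ are not separated in step~\ref{defPi1}. Hence $T$ contains at most one new sub-block disjoint from $\pred(S)$.

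Now I classify each new block $T'\in C/P_{i+1}\setminus C/P_i$. Type (a): $T'\cap \pred(S)=\emptyset$. Each split $T$ gives rise to at most one such $T'$, so there are at most $|\pred(S)|$ of them. Type (b): $T'\cap\pred(S)\neq\emptyset$. Since distinct new sub-blocks are disjoint, one can pick a distinct witness in $\pred(S)$ for each, giving at most $|\pred(S)|$ new blocks of this type. Adding the two bounds yields $2\cdot|\pred(S)|$, as required.

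The main obstacle is the clustering claim in the middle paragraph: one needs both \autoref{noEdgeNoChiS} (to replace $\chi_S^B$ by $\chi_\emptyset^B$ on non-predecessors) and the stability invariant \autoref{stableP} (to see that $F\chi_\emptyset^B\circ c$ is constant on each $P_i$-block). Without stability, non-predecessors in $T$ could in principle have differing $F\chi_\emptyset^B$-values and scatter across several sub-blocks, which would break the factor of $2$ in the bound.
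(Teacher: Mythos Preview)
Your proof is correct and follows essentially the same approach as the paper's: both rely on \autoref{noEdgeNoSplit} to restrict attention to split blocks $T$ with $T\to S$, and both use the combination of \autoref{noEdgeNoChiS} and \autoref{stableP} to establish that at most one sub-block of each split $T$ is disjoint from $\pred(S)$. The only difference is bookkeeping in the final count: the paper bounds the number of sub-blocks of a split $T$ by $|T\cap\pred(S)|+1$ and then sums, using $|T\cap\pred(S)|\ge 1$ to absorb the $+1$, whereas you split the new blocks globally into the two types (a) and (b) and bound each by $|\pred(S)|$ separately; these two countings are trivially equivalent.
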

\begin{proof}
  Let $S\subsetneqq B\in C/Q_i$ be used for splitting in iteration
  $i$.  If $T'\in C/P_{i+1}$ and $T'\not\in C/P_i$, then the block
  $T\in C/P_i$ with $T'\subseteq T$ satisfies $T\not\in C/P_{i+1}$ and
  therefore, by \autoref{noEdgeNoSplit}, has a transition to~$S$. By
  finiteness of $C$,~$T$ is split into finitely many blocks
  $T_1,\ldots,T_k\in C/P_{i+1}$, representing the equivalence classes
  of the kernel of $F\chi_S^B\cdot c\colon C\to F3$.  By
  \autoref{noEdgeNoChiS} we know that if $x\in T$ has no transition
  to~$S$, then $F\chi_S^B(c(x)) = F\chi_\emptyset^B(c(x))$. Moreover,
  all elements of $T\in C/P_i$ are sent to the same value by
  $F\chi_\emptyset^B\cdot c$ (\autoref{stableP}). Hence, there is at
  most one block~$T_j$ with no transition to $S$, and all other blocks
  $T_{j'}$, $j'\neq j$, have transitions to
  $S$. %
  Therefore, the number~$k$ of blocks $T_j$ is at most
  $|T\cap \pred(S)| + 1$.  Summing over all predecessor blocks~$T$
  of~$S$, we obtain
  \begin{align*}
    &|\{T'\in C/P_{i+1}\mid T'\not\in C/P_i\}|
    \\
    \le~&
    |\{T'\in C/P_{i+1}\mid T'\subseteq T\in C/P_i\text{ and }T\to S\}|
    & \text{(\autoref{noEdgeNoSplit})}
    \\
    =~&\sum_{\substack{T\in C/P_i\\ T\to S}} |\{T'\in C/P_{i+1}\mid T'\subseteq T\}|
    \\
    \leq~&\sum_{\substack{T\in C/P_i\\ T\to S}} (|T\cap \pred(S)| + 1)
    & \text{(bound on $k$ above)}
    \\
    \le~& 2\cdot \sum_{\substack{T\in C/P_i\\ T\to S}} |T\cap \pred(S)|
    &\text {($|T\cap \pred(S)| \ge 1$)}
    \\
    \le~& 2\cdot |\pred(S)|
    &\text{($T\in C/P_i$ are disjoint)} \tag*{\qedhere}
  \end{align*}

\end{proof}
We can now show a bound for the total number of blocks that exist at some point
during the execution of the partition refinement algorithm:
\begin{lemma}\label{totalBlockCount}
  Given an input coalgebra $(C,c)$ with $n=|C|$ states and $m$
  transitions, the following holds throughout the execution of
  \autoref{coalgPT}:
  \[
    |\{T\subseteq C\mid T\in C/P_i\text{ for some }i\}|
    \le 2\cdot m \cdot \log_2 n + 2\cdot m + n.
  \]
\end{lemma}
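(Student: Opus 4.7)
The plan is to split the count into initial blocks in $C/P_0$ and new blocks added in each refinement step, then bound the latter by a swap-of-sums argument. First I would observe that $|C/P_0| \le n$ since $P_0$ is a partition of an $n$-element set, and apply \autoref{newBlockCount} at each iteration to obtain
\[
  \Big|\bigcup_i C/P_i\Big|
  \;\le\; |C/P_0| + \sum_i |C/P_{i+1} \setminus C/P_i|
  \;\le\; n + 2\sum_i |\pred(S_i)|,
\]
where $S_i \subsetneqq B_i$ denotes the splitter pair picked in step~\ref{step1} at iteration~$i$. The task then reduces to showing $\sum_i |\pred(S_i)| \le m(\log_2 n + 1)$.

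The key step is the classical Paige--Tarjan double-counting trick. Using the trivial union bound $|\pred(S)| \le \sum_{y \in S}|\pred(\{y\})|$ and swapping the order of summation,
\[
  \sum_i |\pred(S_i)|
  \;\le\; \sum_i \sum_{y \in S_i} |\pred(\{y\})|
  \;=\; \sum_{y \in C} |\pred(\{y\})| \cdot N(y),
\]
where $N(y) := |\{i \mid y \in S_i\}|$ counts how often $y$ lies inside a chosen splitter. By \autoref{coalgebraEdge}, $\sum_{y\in C}|\pred(\{y\})|$ is exactly the total number of transitions~$m$, so it suffices to show $N(y) \le \log_2 n + 1$ for every state~$y$.

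To establish this halving bound I would track the size of the $Q_i$-block containing~$y$. Whenever $y \in S_i$, step~\ref{step1} guarantees $2|S_i| \le |B_i|$, where $B_i \in C/Q_i$ is the $Q_i$-block containing~$y$ (since $S_i \subseteq B_i$ and $y \in S_i$). Step~\ref{defQi1} then installs $S_i$ as the new $Q_{i+1}$-block of~$y$. Hence at each such iteration the $Q$-block of~$y$ shrinks by at least a factor of two; since it starts at size at most~$n$ and must remain nonempty throughout, this can happen at most $\lfloor \log_2 n \rfloor \le \log_2 n + 1$ times. Combining everything yields $\sum_i |\pred(S_i)| \le m(\log_2 n + 1)$ and hence the stated bound.

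The main obstacle is keeping the roles of~$P$ and~$Q$ cleanly separated in the halving argument: the splitter $S_i$ lives in the finer partition $C/P_i$, whereas what halves is the $C/Q_i$-block of~$y$, and the passage from $B_i$ to $S_i$ under the update $Q_i \mapsto Q_{i+1}$ is precisely what step~\ref{defQi1} performs. Once this correspondence is pinned down, the remainder is a routine calculation.
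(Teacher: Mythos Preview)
Your argument is correct and follows essentially the same route as the paper's proof: decompose into the initial partition $C/P_0$ plus new blocks per iteration, apply \autoref{newBlockCount}, swap the order of summation over states and iterations, and conclude via the halving bound on the $Q$-block of each state. The only minor difference is that the paper cites the bound $N(y)\le \log_2 n + 1$ from prior work~\cite[Lem.~7.15]{concurSpecialIssue}, whereas you derive it inline from the condition $2|S_i|\le|B_i|$ in step~\ref{step1} and the update in step~\ref{defQi1}; both arguments are equivalent.
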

\noindent
Note that the proof is similar to arguments given in the complexity analysis
of the Paige-Tarjan algorithm (cf.~\cite[p.~980]{PaigeTarjan87}).
\begin{proof}
  Since $|S|\le \frac{1}{2}\cdot |B|$ holds in step~\ref{step1} of
  \autoref{coalgPT}, one can show that every state $x\in C$ is
  contained in the set~$S$ picked in step~\ref{step1} in at most
  $\log_2(n)+1$ iterations~\cite[Lem.~7.15]{concurSpecialIssue}.
  More formally, let
  $S_i\subsetneq B_i\in C/Q_i$ be the blocks picked in the $i$th iteration
  of \autoref{coalgPT}. Then we have
  \begin{equation}
    |\{S_i\mid x \in S_i\}| \le \log_2 n + 1
    \qquad\text{for all }x\in C.%
    \label{timesInSubblock}
  \end{equation}
  Let the algorithm terminate after $\ell$ iterations,
  returning~$C/P_\ell$. Then, the number of new blocks introduced by
  step~\ref{defPi1} is bounded as follows: \allowdisplaybreaks%
  \begin{align*}
    &\sum_{0\le i< \ell}
      |\{T'\in C/P_{i+1}\mid T' \notin C/P_i\}|
    \\ \le~& \sum_{0\le i< \ell} 2\cdot |\pred(S_i)|
    &\text{(\autoref{newBlockCount})}
    \\ \le~& 2\cdot \sum_{0\le i< \ell} \,\sum_{x\in S_i}|\pred(\{x\})|
    \\ =~& 2\cdot \sum_{x\in C}\,\sum_{0\le i< \ell\:\mid\: x\in S_i} |\pred(\{x\})|
    \\ =~& 2\cdot \sum_{x\in C}\, |\pred(\{x\})|\cdot \sum_{0\le i< \ell\:\mid\:  x\in S_i} 1
    \\ \le~& 2\cdot \sum_{x\in C}\, |\pred(\{x\})|\cdot (\log_2n + 1)
    & \text{(by~\eqref{timesInSubblock})}
    \\ =~& 2\cdot m \cdot (\log_2n + 1)
           = 2\cdot m\cdot \log_2 n + 2\cdot m
  \end{align*}
  The only blocks we have not counted so far are the blocks of $C/P_0$. Since
  $|C/P_0|\le n$, we have at most $2\cdot m\cdot \log_2 n + 2\cdot m+ n$
  different blocks in $(C/P_i)_{0\le i <\ell}$.
  \qedhere
\end{proof}

Every block in the final or intermediate partitions $C/P_i$ corresponds to a certificate
that is constructed at some point. Hence, the bound on the blocks is also a
bound for the dag size of formulae created by \autoref{algoCerts}.
\begin{proof}[Proof of \autoref{certSize}]
  Regarding the height of the dag, it is immediate that~$\delta_i$ and
  $\beta_i$ have height at most $i+1$. Since
  $|C/Q_{i}| < |C/Q_{i+1}| \le |C|=n$ for all $i$, there are at most
  $n$ iterations, with the final partition being
  $C/P_{n+1} = C/Q_{n+1}$.

  In \autoref{algoCerts} we create a new modal operator in the dag
  whenever \autoref{coalgPT} creates a new block in $C/P_i$ (either by
  initialization or in step~\ref{defDeltai1}). By
  \autoref{totalBlockCount}, the number of modalities in the dag is
  thus bounded by
  \[
    2\cdot m \cdot \log_2 n + 2\cdot m + n.
  \]
  In every iteration of the main loop, $\beta$ is extended by two new
  formulae, one for $S$ and one for $B\setminus S$. The formula
  $\beta_{i+1}(S)$ does not increase the size of the dag, because no
  new node needs to be allocated. For $\beta_{i+1}(B\setminus S)$, we
  need to allocate one new node for the conjunction, so there are at
  most $n$ new such nodes allocated throughout the execution of the
  whole algorithm\lsnote{}.  Thus, the
  total number of nodes in the dag is bounded by
  \[
    2\cdot m \cdot \log_2 n + 2\cdot m + 2\cdot n
  \]
  and each such node has an outdegree of at most 2 by construction.
\end{proof}
\begin{theorem}\label{runTimePreserved} \autoref{algoCerts} adds only
  constant run time overhead per step of \autoref{coalgPT}, and thus
  has the same asymptotic run time as \autoref{coalgPT}.
\end{theorem}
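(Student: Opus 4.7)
The plan is to show that each of the extra assignments introduced in \autoref{algoCerts} can be performed in constant time given the data structures already used in \autoref{coalgPT}, so that the asymptotic run time of the combined algorithm coincides with that of the underlying partition refinement algorithm.

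First, I would fix the data structures. The refinable-partition structure of~\cite{ValmariLehtinen08} already stores the partitions $C/P$ and $C/Q$ so that each block has a unique integer index and each state points to the block containing it, both in constant time. I would add two arrays $\beta$ and $\delta$, indexed by the block index, of pointers to nodes in a formula dag. A separate dag data structure stores each allocated node (either a modal operator or a binary conjunction, with a Boolean negation flag on each of its at most two outgoing edges) and supports allocation of a new node, together with its two out-pointers, in $\CO(1)$. With this layout, for any block one can fetch or overwrite its certificate pointer in $\CO(1)$, and for any state $x$ one can obtain a pointer to the certificate of the block containing~$x$ in $\CO(1)$.

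Next, I would walk through the steps added to \autoref{coalgPT}. The initialization of $\beta_0$ assigns the single block $\{C\}$ the pre-existing node $\top$, which is $\CO(1)$, while $\delta_0$ is initialized by, for each block of $C/P_0$, allocating one modality node $\fmod{F!(c(x))}$; this requires a constant amount of work per block, hence the same overhead as the initialization already performed by \autoref{coalgPT}. For step~\ref{defBetai1}, when $B$ is split into $S$ and $B\setminus S$, the new certificate for $S$ is simply the already allocated $\delta_i(S)$ (a pointer copy), the certificate for $B\setminus S$ requires allocating exactly one new conjunction node whose two children are $\beta_i(B)$ and $\delta_i(S)$ with a negation flag on the second edge, and all other blocks in $C/Q_i$ retain their pointers; this is $\CO(1)$ work in addition to the relabelling~\ref{defQi1} already done by \autoref{coalgPT}. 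For step~\ref{defDeltai1}, each newly created block $[x]_{P_{i+1}} \neq [x]_{P_i}$ needs the allocation of one modality node $\fmod{F\chi_S^B(c(x))}$ (whose children are $\delta_i(S)$ and $\beta_i(B)$) and one conjunction node on top of it, which is again $\CO(1)$ per new block, matching the constant bookkeeping per new block already charged to \ref{defPi1} of \autoref{coalgPT}.

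Putting this together, every unit of work performed by \autoref{algoCerts} can be associated to a step of \autoref{coalgPT} with only a constant multiplicative overhead: initializations line up with initializations, splittings of $Q$-blocks line up with~\ref{defQi1}, and creations of new $P$-blocks line up with the corresponding refinement in~\ref{defPi1}. Consequently the total run time remains in the same asymptotic class as that of \autoref{coalgPT}. The only point one has to be a little careful about is that looking up $\beta_i(B)$, $\delta_i(S)$ and the value $F\chi_S^B(c(x))$ is genuinely $\CO(1)$; for the first two this follows from the array representation, and for the third it is already assumed in the complexity analysis of \autoref{coalgPT}, where the same value is computed in~\ref{defPi1}, so no additional work is incurred.
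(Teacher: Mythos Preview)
Your proposal is correct and follows the same approach as the paper: represent certificates as pointers into a shared formula dag, carried in block-indexed arrays, and charge each certificate update to the corresponding step of \autoref{coalgPT}.

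There is one implementation subtlety the paper treats explicitly that you gloss over. In step~\ref{defDeltai1}, when a predecessor block~$T$ is split, at most one resulting sub-block~$T'\subseteq T$ may have \emph{no} transition to~$S$ (cf.~the proof of \autoref{newBlockCount}), and in the refinable-partition structure this block inherits~$T$'s index. For $x\in T'$, the value $F\chi_S^B(c(x))$ is \emph{not} computed by the efficient implementation of~\ref{defPi1}, which only visits predecessors of~$S$; your claim that ``the same value is computed in~\ref{defPi1}'' is therefore not quite right for this block. The paper handles this by invoking \autoref{noEdgeNoChiS} and \autoref{stableP} to show $F\chi_S^B(c(x))=F\chi_\emptyset^B(c(y))$ for any $y\in T$, and then describes the concrete trick: save the old pointer $\delta_i(T)$ in a temporary, overwrite $\delta$ at index~$T$ with the certificate built from $\fmod{F\chi_\emptyset^B(c(y))}$, and use the saved pointer when building certificates for the remaining sub-blocks (which receive fresh indices). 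This is still $\CO(1)$ per new block, so your conclusion stands, but the argument needs this extra step.
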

\noindent Like for the run time analysis of the underlying
\autoref{coalgPT}, we assume that the memory model supports random
access, i.e.~array access runs in constant time.
\begin{proof}
  The arrays for~$\beta$ and $\delta$ are re-used in every
  iteration. Hence, the index $i$ can be neglected; it is only used to
  refer to a value before or after the loop iteration. The analysis
  then proceeds as follows:
  \begin{enumerate}
  \item Initialization step:
    \begin{itemize}[leftmargin=!]
    \item The only block $\{C\}$ in $C/Q_0$ has index 0, and so we make $\beta(0)$
      point to the node $\top$, which takes constant time.
    \item For every block $T$ in $C/P_0$, \autoref{coalgPT} has
      computed $F!(c(x)) \in F1$ for some (in fact every) $x\in
      T$. Since $F1$ canonically embeds into $F3$
      (\autoref{notationF1Mod}), we create a new node labelled
      $\fmod{Fj_1(F!(c(x)))}$ with two edges
      to~$\top$.\lsnote{}  For every $T\in C/P_0$, this runs in constant
      time, which we allocate to the step where \autoref{coalgPT}
      creates~$T$.
    \end{itemize}

  \item In the refinement step, we can look up the certificates
    $\delta_i(S)$ and~$\beta_i(B)$ in constant time using the indices
    of the blocks $S$ and $B$\lsnote{}. Whenever the
    original algorithm creates a new block, we also immediately
    construct the certificate of this new block by creating at most
    two new nodes in the dag (with at most four outgoing
    edges). However, if a block does not change (that is,
    $[x]_{Q_i} = [x]_{Q_{i+1}}$ or $[x]_{P_i} = [x]_{P_{i+1}}$,
    resp.), then the corresponding certificate is not changed in
    step~\ref{defBetai1} or step~\ref{defDeltai1}, respectively.

    \smallskip
    \noindent
    In the loop body we update the certificates as follows:
    \begin{enumerate}[label=(I),font=\normalfont,align=left,leftmargin=0pt,labelindent=0pt,listparindent=\parindent,labelwidth=0pt,itemindent=!,topsep=3pt,parsep=0pt,itemsep=3pt,start=1]
    \item[\ref{defBetai1}] The new block $S\in C/Q_{i+1}$ just points
      to the certificate $\delta_i(S)$ constructed earlier. For the
      new block $(B\setminus S) \in C/Q_{i+1}$, we allocate a new node
      $\wedge$, with one edge to $\beta_i(B)$ and one negated edge to
      $\delta_i(S)$.

    \item[\ref{defDeltai1}] Not all resulting blocks have a transition
      to $S$. There may be (at most) one new block $T'\in C/P_{i+1}$,
      $T'\subseteq T$ with no transition to $S$ (see the proof of
      \autoref{newBlockCount}). In the refinable partition structure,
      such a block will inherit the index from $T$ (i.e.~the index of
      $T$ in $C/P_i$ equals the index of $T'$ in
      $C/P_{i+1}$). Moreover, every $x\in T'$ satisfies
      $F\chi_S^B(c(x)) = F\chi_\emptyset^B(c(x))$ (by
      \autoref{noEdgeNoChiS}), %
      and $F\chi_\emptyset^B(c(x)) = F\chi_\emptyset^B(c(y))$ for
      every $y\in T$ (by
      \autoref{stableP}). %

    Now, one first saves the node of the certificate $\delta_i(T)$ in some
    variable $\delta'$, say. Then the array $\delta$ is updated at index $T$ by the formula
    \[
      \fmod{F\chi_\emptyset^B(c(y))}(\delta_i(S),\beta_i(B))
      \qquad\text{for an arbitrary $y\in T$.}
    \]
    Consequently, a block $T'$ inheriting the index of $T$
    automatically has the correct certificate.

    The allocation of nodes for this formula is completely analogous
    to the one for an ordinary block $[x]_{P_{i+1}} \subsetneqq T$ having edges to $S$: One
    allocates a new node labelled $\wedge$ with edges to the saved node $\delta'$
    (the original value of $\delta_i(T)$) and to another newly allocated node labelled
    $\fmod{F\chi_S^B(c(x))}$ with edges to the nodes~$\delta_i(S)$ and $\delta_i(B)$.
    \qedhere
    \end{enumerate}
\end{enumerate}
\end{proof}

\noindent
In order to keep the formula size smaller, one can implement the
following optimization. Intuitively, note that for $S\in X/P_i$ and
$B\in X/Q_i$ such that $S\subseteq B\subseteq C$, every conjunct of
$\beta_i(B)$ is also a conjunct of $\delta_i(S)$.  In
$\beta_i(B)\wedge \neg \delta_i(S)$, one can hence remove all
conjuncts of $\beta_i(B)$ from $\delta_i(S)$, obtaining a formula
$\delta'$, and then equivalently use $\beta_i(B)\wedge \neg \delta'$
in the definition of $\beta_{i+1}(D)$.

\begin{proposition}\label{cancelConjunct} In step~\ref{defBetai1},
  $\beta_{i+1}(D)$ can be simplified to be no larger than
  $\delta_i(S)$ without increasing the overall run time.%
\end{proposition}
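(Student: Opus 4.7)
The plan is to formalize the intuition given just before the proposition: whenever $S\subseteq B$ with $S\in C/P_i$ and $B\in C/Q_i$, every top-level conjunct of $\beta_i(B)$ already appears as a top-level conjunct of $\delta_i(S)$. Granting this syntactic invariant, one may write $\delta_i(S) \equiv \beta_i(B) \wedge \delta'$, where $\delta'$ collects the remaining conjuncts of $\delta_i(S)$, and the propositional tautology $\phi\wedge\neg(\phi\wedge\psi) \equiv \phi\wedge\neg\psi$ then yields the desired simplification
\[
  \beta_{i+1}(B\setminus S) \;=\; \beta_i(B)\wedge\neg \delta_i(S) \;\equiv\; \beta_i(B)\wedge\neg \delta'.
\]

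I would prove the conjunct-containment invariant by induction on $i$. The base case is immediate since $\beta_0(\{C\}) = \top$ has no conjuncts. For the inductive step, I would case-split on the assignment in~\ref{defBetai1}. If $B = S_i$, then $\beta_{i+1}(B) = \delta_i(S_i)$, and the invariant propagates to every $P_{i+1}$-block $S\subseteq B$ because~\ref{defDeltai1} only appends conjuncts to the $\delta$-values. The harder case is $B = B_i\setminus S_i$, in which $\beta_{i+1}(B) = \beta_i(B_i)\wedge\neg\delta_i(S_i)$: the conjuncts of $\beta_i(B_i)$ survive inside $\delta_{i+1}(S)$ for every $P_{i+1}$-block $S\subseteq B_i\setminus S_i$ by applying the IH to the $P_i$-block containing $S$, and for the extra conjunct $\neg\delta_i(S_i)$ one uses that every such $S$ acquires in~\ref{defDeltai1} the fresh modal conjunct $\fmod{F\chi_{S_i}^{B_i}(c(x))}(\delta_i(S_i),\beta_i(B_i))$; because $S\cap S_i = \emptyset$ forces the corresponding value $t = F\chi_{S_i}^{B_i}(c(x))$ to rule out $S_i$-membership, this modal conjunct can play the role of $\neg\delta_i(S_i)$ in the conjunct list. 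Making this last identification precise (or, equivalently, strengthening the invariant to allow conjunct containment up to this implication) is the main obstacle.

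For the size bound I would work in the dag representation in which, as already set up in \autoref{complexityAnalysis}, negation is encoded by edge flags rather than nodes. Then $\beta_i(B)\wedge\neg\delta'$ uses exactly the same subdag nodes as $\delta_i(S) = \beta_i(B)\wedge\delta'$ together with one additional $\wedge$-node, so the two formulae have equal dag size and the bound $|\beta_{i+1}(D)|\le|\delta_i(S)|$ follows. For the run time, I would maintain a pointer from each dag node $\delta_i(S)$ to its residual $\delta'$, initialized when $\delta_0([x]_{P_0})$ is created and updated in $\CO(1)$ inside~\ref{defDeltai1} every time a new conjunct is appended to $\delta_i$. The simplification in~\ref{defBetai1} is then a single dag allocation that attaches a new $\wedge$-node with a negated edge to the precomputed $\delta'$, so the per-step cost is still $\CO(1)$ and the overall run time of \autoref{runTimePreserved} is preserved.
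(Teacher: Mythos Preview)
Your proposal has two gaps, one of which you honestly flag yourself.

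\textbf{Correctness.} The syntactic conjunct-containment invariant you set out to prove does not hold. In the case $D=B_i\setminus S_i$ the (simplified or unsimplified) $\beta_{i+1}(D)$ acquires a negated conjunct $\neg\delta_i(S_i)$ (or $\neg\delta'$), and this will never appear syntactically as a conjunct of any $\delta_j(S)$, since the latter only ever contain positive modal conjuncts. You note this obstacle but do not resolve it; the suggested fix of letting the new modal conjunct ``play the role of'' $\neg\delta_i(S_i)$ changes the invariant into a semantic one, and you do not spell out what that stronger invariant is or prove it. Moreover, a block $[x]_{P_i}\subseteq B_i\setminus S_i$ that happens not to be split in iteration~$i$ receives no new conjunct at all in~\ref{defDeltai1}, so there is not even a candidate modal conjunct to substitute.

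\textbf{Run time.} Your per-block residual pointer is not maintainable in $\CO(1)$ per step. The residual $\delta'$ of $\delta_i(S)$ is defined relative to $\beta_i(B)$ for the current $Q_i$-block $B\supseteq S$. When $B$ is split using a splitter $S'\neq S$, every $P$-block inside $B\setminus S'$ has its enclosing $Q$-block change, so all those residual pointers become stale simultaneously; updating them touches arbitrarily many blocks.

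The paper's proof takes a different, more operational route that sidesteps both issues. Instead of a per-block pointer, it places a boolean flag on each \emph{modal-conjunct node} in the dag, marking whether that node has already been absorbed into some $\beta$-formula. Because conjunct nodes are shared across all $\delta$-chains that contain them, flipping a flag automatically updates the status for every block at once; this is why no per-block update is needed. The `false' conjuncts of any $\delta_i(S)$ then form a contiguous prefix from the root, so the residual $\delta'$ can be read off greedily. The run-time bound is obtained by amortization: each modal conjunct is flipped from `false' to `true' exactly once over the whole execution, so the total cost of all residual scans is bounded by the total number of conjuncts ever created, which is already accounted for. The paper does not attempt a direct inductive invariant of the kind you propose.
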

\begin{proof}
  Mark every modal operator node $\fmod{t}(\delta,\beta)$ in the
  formula dag with a boolean flag expressing whether
  \begin{center}
    $\fmod{t}(\delta,\beta)$ is a conjunct of some $\beta_i$-formula.
  \end{center}
  Thus, every new modal operator in~\ref{defDeltai1} is marked `false'
  initially. When the block~$B$ in $C/Q_i$ is split into~$S$ and
  $B\setminus S$ in step~\ref{defBetai1}, the formula for block
  $B\setminus S$ is a conjunction of $\beta_i(B)$ and the negation of
  all conjuncts of~$\delta_i(S)$ marked `false'. Afterwards these
  conjuncts are all marked `true', because they are inherited by
  $\beta_i(S)$. The conjuncts marked `false' always form a prefix of
  all conjuncts of a formula in $\delta_i$. It therefore suffices to
  greedily take conjuncts from the root of a formula dag while they
  are marked `false'.

  As a consequence, step~\ref{defDeltai1} no longer runs in constant
  time but instead takes as many steps as there are conjuncts marked
  `false' in~$\delta_i(S)$. However, over the whole execution of the
  algorithm this eventually amortizes because every newly allocated
  modal operator is initially marked `false' and later marked `true'
  precisely once. Thus, in the analysis conducted in the proof of
  \autoref{runTimePreserved}, the additional run time can be neglected
  asymptotically.
\end{proof}

\noindent For a tighter run time analysis of the underlying partition refinement
algorithm, one additionally requires that~$F$ is equipped with a \emph{refinement
  interface}~\cite[Def.~6.4]{concurSpecialIssue}, which is based on a
given encoding of $F$-coalgebras in terms of \emph{edges} between
states (encodings serve only as data structures and have no direct
semantic meaning, in particular do not entail a semantic reduction to
relational structures). This notion of edge yields the same numbers (in
$\CO$-notation) as \autoref{coalgebraEdge} for
all functors considered.
All zippable functors we consider here have
refinement interfaces~\cite{concurSpecialIssue,wdms21}. In presence
of a refinement interface, step~\ref{defPi1} can be implemented
efficiently, with resulting overall run time
$\CO((m+n) \cdot \log n\cdot p(c))$ where $n=|C|$, $m$ is the number
of edges in the encoding of the input coalgebra $(C,c)$, and the
\emph{run-time factor} $p(c)$ is associated with the refinement interface. In
most instances, e.g.~for~$\Pow$, $\R^{(-)}$, one has $p(c) = 1$; in particular,
the generic algorithm has the same run time as the Paige-Tarjan
algorithm. Usually, it is less of a challenge to find some refinement interface
for a functor $F$ but more to find one with low run time, i.e.~low $p(c)$. For
example, for general monoid-valued functors $FX = M^{(X)}$ the refinement
interface uses binary search trees as additional data structures resulting in
an additional logarithmic factor $p(c) = \log m$~\cite{wdms21}.

\begin{rem}
  The claimed run time relies on close attention to a number of
  implementation details. This includes use of an efficient data
  structure for the
  partition~$C/P_i$~\cite{Knuutila2001,ValmariLehtinen08}; the other
  partition $C/Q_i$ is only represented implicitly in terms of a queue
  of blocks $S\subsetneqq B$ witnessing $P_i\subsetneqq Q_i$,
  requiring additional care when splitting blocks in the queue~\cite[Fig.~3]{ValmariF10}. Moreover, grouping the elements of a
  block by $F3$ involves the consideration of a \emph{possible
    majority candidate}~\cite{ValmariF10}.
\end{rem}

\begin{theorem}\label{thmLogRunTime} For a zippable set functor with
  a refinement interface with factor~$p(c)$ and an input coalgebra
  with $n$ states and $m$ transitions, \autoref{algoCerts} runs in
  time
  \[
    \CO((m+n)\cdot \log n\cdot p(c)).
  \]
\end{theorem}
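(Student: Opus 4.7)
The plan is to derive this theorem as a direct consequence of the complexity bound for the underlying partition refinement algorithm combined with \autoref{runTimePreserved}. The existing result for \autoref{coalgPT}, equipped with a refinement interface of factor $p(c)$, gives a run time of $\CO((m+n)\cdot \log n\cdot p(c))$ on a coalgebra with $n$ states and $m$ edges (in the sense of the chosen encoding). This is the starting point I would cite; the rest is about showing that the additional bookkeeping carried out by \autoref{algoCerts} does not increase this asymptotic cost.

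First, I would invoke \autoref{runTimePreserved}, which establishes that each extra operation introduced by \autoref{algoCerts} (updating the arrays $\beta$ and $\delta$, allocating at most two new dag nodes for each newly created block in $C/Q_{i+1}$ or $C/P_{i+1}$, and wiring pointers to the previously constructed certificates for $S$ and $B$) runs in constant time under the standard random-access memory model. Since the refinable-partition data structure already assigns an index to each block and since certificates are stored in arrays indexed by these block indices, looking up $\delta_i(S)$ and $\beta_i(B)$ remains $\CO(1)$. Hence the overhead per step of \autoref{coalgPT} is $\CO(1)$, so the total overhead is dominated by the number of steps performed by the partition refinement algorithm.

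Next, I would point out that the total work for the certificate construction can also be bounded directly using \autoref{totalBlockCount}: the number of blocks ever created is in $\CO(m \log n + n)$, and each such block triggers at most a constant-size update to the formula dag. This confirms that the overall certificate-related work is $\CO(m\log n + n) \subseteq \CO((m+n)\log n \cdot p(c))$ and is thus absorbed by the runtime of the underlying algorithm. One subtle point, which I would address briefly, is the optimization of \autoref{cancelConjunct}: its per-step cost is no longer constant, but \autoref{cancelConjunct} already argues that the extra work amortizes to $\CO(1)$ per created node, since each modal operator is marked `false' once and `true' at most once throughout the entire execution.

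Putting these observations together yields the bound
\[
  \CO((m+n)\cdot \log n\cdot p(c))
\]
for \autoref{algoCerts}. The main (mild) obstacle is not the asymptotic argument itself but rather being careful that the constant-time-per-step claim of \autoref{runTimePreserved}, which was stated relative to \autoref{coalgPT}'s abstract steps, is compatible with the refined implementation that underlies the $(m+n)\log n \cdot p(c)$ bound for \autoref{coalgPT} — in particular, that the certificate updates can be interleaved with the grouping by $F3$ values (including the possible-majority-candidate trick) without affecting the amortization. Once this compatibility is observed, the bound follows immediately.
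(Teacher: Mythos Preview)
Your proposal is correct and follows essentially the same approach as the paper: cite the known $\CO((m+n)\cdot \log n\cdot p(c))$ bound for \autoref{coalgPT} under a refinement interface, and then invoke \autoref{runTimePreserved} to conclude that \autoref{algoCerts} inherits this bound. The paper's own proof is in fact just this one-sentence argument; your additional elaboration via \autoref{totalBlockCount} and the discussion of implementation compatibility is sound but not required.
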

\noindent
Indeed, the time bound holds for the underlying \autoref{coalgPT}, and
is inherited by \autoref{algoCerts} due to \autoref{runTimePreserved}.

If the functor $F$ satisfies additional assumptions, we can simplify
the certificates even further, as discussed next.

\section{Cancellative Functors}%
\label{sec:cancellative}
Our use of binary modalities relates to the fact that, as observed
already by Paige and Tarjan, when splitting a block according to an
existing partition of a block~$B$ into $S\subseteq B$ and~$B\setminus S$,
it is not in general sufficient to look only at the successors
in~$S$. However, this does suffice for some transition types.
E.g.~Hopcroft's algorithm for deterministic automata~\cite{Hopcroft71}
and Valmari and Franceschinis' algorithm for weighted systems
(e.g.~Markov chains)~\cite{ValmariF10} both split only with respect
to~$S$. In the following, we exhibit a criterion on the level of
functors that captures that splitting w.r.t.~only $S$ is sufficient:
\begin{samepage}
\begin{defn}\label{funCancellative}
  A functor $F$ is \emph{cancellative} if the map
  \[
    \fpair{F\chi_{\set{1,2}},F\chi_{\set{2}}}\colon F3\to F2\times F2
  \]
  is injective.
\end{defn}
\end{samepage}
\noindent To understand the role of the above map, recall the function
$\chi_S^B\colon C\to 3$ from~\eqref{eqChi3} and note that
\begin{equation}\label{eq:chi}
  \chi_{\set{1,2}}\cdot\chi_S^B=\chi_B
  \qquad\text{and}\qquad
  \chi_{\set{2}}\cdot\chi_S^B=\chi_S,
\end{equation}
so the composite
$\fpair{F\chi_{\set{1,2}},F\chi_{\set{2}}}\cdot F\chi_S^B$ yields
information about the accumulated transition weights into~$B$ and~$S$
but not about the one into $B\setminus S$. The injectivity condition
means that for cancellative functors, this information suffices in the
splitting step for $S\subseteq B\subseteq C$.  The term
\emph{cancellative} stems from the respective property on monoids;
recall that a monoid~$M$ is \emph{cancellative} if $s + b_1 = s+b_2$
implies $b_1 = b_2$ for all $s,b_1,b_2\in M$.

\subsection{Properties of Cancellative Functors}

Before presenting the optimized algorithm, we gather properties of cancellative
functors and compare them to zippability and related notions, starting with the
property that gave cancellative functors their name.

\begin{proposition}%
  \label{monoidValCancellative}
  The monoid-valued functor $M^{(-)}$ for a commutative monoid $M$ is
  cancellative if and only if~$M$ is a cancellative monoid.
\end{proposition}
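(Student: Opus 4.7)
The plan is to unfold the definitions of $M^{(-)}$ and of the map $\fpair{F\chi_{\{1,2\}}, F\chi_{\{2\}}}$ at the set $3$, and then translate injectivity of this map directly into the cancellation law for $M$.

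First, I would spell out what the map does. An element of $M^{(3)}$ is a triple $(\mu(0), \mu(1), \mu(2)) \in M^3$. Using the formula $(M^{(f)})(\mu)(y) = \sum_{f(x)=y} \mu(x)$ together with the explicit values
\[
  \chi_{\{1,2\}}(0)=0,\quad \chi_{\{1,2\}}(1)=\chi_{\{1,2\}}(2)=1
  \qquad\text{and}\qquad
  \chi_{\{2\}}(0)=\chi_{\{2\}}(1)=0,\quad \chi_{\{2\}}(2)=1,
\]
a short calculation shows
\[
  \fpair{M^{(\chi_{\{1,2\}})}, M^{(\chi_{\{2\}})}}(\mu)
  \;=\;
  \bigl((\mu(0),\,\mu(1)+\mu(2)),\;(\mu(0)+\mu(1),\,\mu(2))\bigr).
\]
In particular, the components $\mu(0)$ and $\mu(2)$ can always be read off the image directly.

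For the direction \emph{$M$ cancellative $\Rightarrow$ $M^{(-)}$ cancellative}, I would assume that two triples $\mu, \mu' \in M^{(3)}$ have equal image. Then $\mu(0)=\mu'(0)$ and $\mu(2)=\mu'(2)$ are immediate, and the first-pair equality $\mu(1)+\mu(2) = \mu'(1)+\mu'(2)$ reduces, by commutativity and substitution of $\mu(2)=\mu'(2)$, to $\mu(2)+\mu(1)=\mu(2)+\mu'(1)$. Cancellation in~$M$ then yields $\mu(1)=\mu'(1)$, so $\mu=\mu'$.

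For the converse \emph{$M$ not cancellative $\Rightarrow$ $M^{(-)}$ not cancellative}, I would pick witnesses $s,b_1,b_2\in M$ with $b_1 \neq b_2$ and $s+b_1 = s+b_2$, and exhibit the two distinct elements
\[
  \mu = (s, b_1, s), \qquad \mu' = (s, b_2, s) \quad\in M^{(3)}.
\]
Using commutativity and $s+b_1=s+b_2$, a direct check shows that both map to the same pair $\bigl((s,\, s{+}b_1),\;(s{+}b_1,\, s)\bigr)$, so the map is not injective. The only mild subtlety is to make sure the counterexample respects both components of the pair simultaneously; placing the ``problematic'' summand $s$ in \emph{both} positions $0$ and $2$ (rather than just one) is what makes both the $\mu(0)+\mu(1)$ and $\mu(1)+\mu(2)$ sums collapse at once. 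No deeper obstacle arises; this is essentially bookkeeping around the definitions.
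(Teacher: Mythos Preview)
Your proof is correct and essentially identical to the paper's: both unfold the map explicitly to $(\mu(0),\mu(1)+\mu(2),\mu(0)+\mu(1),\mu(2))$, deduce $\mu(0),\mu(2)$ directly and $\mu(1)$ by cancellation for one direction, and exhibit the same counterexample $(c,a,c)$ versus $(c,b,c)$ for the other (the paper phrases this direction as the direct implication rather than the contrapositive, but the construction is the same).
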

\begin{proof}%
  First note that for $FX=M^{(X)}$, the maps $F\chi_{\set{1,2}}$,
  $F\chi_{\set{2}}$ used in \autoref{funCancellative} are given by
  \[
    \begin{array}{r@{\ }l@{\qquad}l}
      M^{(\chi_{\set{1,2}})}\colon & M^{(3)}\to M^{(2)}, & t \mapsto
      (t(0), t(1)+t(2)), \\[5pt]
      M^{(\chi_{\set{2}})}\colon & M^{(3)}\to M^{(2)}, & t \mapsto (t(0)+t(1), t(2)),
    \end{array}
  \]
  where we write $s\in M^{(2)}$ as the pair $(s(0), s(1))$.

  For ``$\Leftarrow$'', let $s,t\in M^{(3)}$ such that
  \[
    \fpair{M^{(\chi_{\set{1,2}})},M^{(\chi_{\set{2}})}}(s)
    = \fpair{M^{(\chi_{\set{1,2}})},M^{(\chi_{\set{2}})}}(t),
  \]
  which is written point-wise as follows:
  \begin{align*}
    (s(0), s(1)+s(2))
    &= (t(0), t(1)+t(2)) \\
    (s(0)+s(1), s(2))
    &= (t(0)+t(1), t(2)).
  \end{align*}
  We thus have $s(0) = t(0)$, $s(2) = t(2)$, and
  \[
    s(1) + s(2) = t(1) + t(2) = t(1) + s(2).
  \]
  Since $M$ is cancellative, it follows that $s(1) = t(1)$, so
  $s=t$. Thus, the map
  $\fpair{M^{(\chi_{\set{1,2}})},\allowbreak M^{(\chi_{\set{2}})}}$ is injective.

  For ``$\Rightarrow$'', let $a,b,c\in M$ such that $c+a =
  c+b$. Define $s,t\in M^{(3)}$ by
  \[
    s(0) = s(2) = c,\quad s(1) = a
    \qquad\text{and}\qquad
    t(0) = t(2) = c,\quad t(1) = b.
  \]
  Thus,
  \[
  \begin{aligned}
    M^{(\chi_{\set{1,2}})}(s) &= (s(0), s(1) + s(2)) \\
    &= (c,a+c) \\
    &= (c,b+c)
    \\
    &= (t(0), t(1) + t(2))
    \\
    &= M^{(\chi_{\set{1,2}})}(t),
  \end{aligned}
  \qquad\qquad
  \begin{aligned}
    M^{(\chi_{\set{2}})}(s)
    & = (s(0) + s(1), s(2))\\
    &= (c+a,c)\\
    &= (c+b,c) \\
    &= (t(0) + t(1), t(2)) \\
    &= M^{(\chi_{\set{2}})}(t).
  \end{aligned}
  \]
  Since $\fpair{M^{(\chi_{\set{1,2}})},M^{(\chi_{\set{2}})}}$ is
  injective, it follows that $s =t $. Thus, we have
  $a=s(1) = t(1) = b$, so $M$ is cancellative.
\end{proof}
\noindent The property of cancellativity nicely extends a list of
correspondences between properties of the monoid-valued functor
$M^{(-)}\colon \Set\to\Set$ on the one hand and the underlying
commutative monoid $M$ on the other hand, see
\autoref{tabMonoidVsFunctor} and work by Gumm and
Schröder~\cite{GummS01,SchroederPhd01} for more details.
\begin{example}
  The functor $\R^{(-)}$ is cancellative, but $\Powf$, being naturally
  isomorphic to $M^{(-)}$ for the (non-cancellative) Boolean
  monoid~$M=2$, is not.
\end{example}
\noindent
All signature functors are cancellative:

\begin{table}
  \begin{tabular}{@{}lllr@{}}
    \toprule
    The functor $M^{(-)}\colon \Set\to\Set$ $\ldots$
    & $\Leftrightarrow$ & The monoid $M$ $\ldots$ &
    \\
    \midrule
    is cancellative
    & $\Leftrightarrow$ & is cancellative & (\autoref{monoidValCancellative})
    \\
    \midrule
    preserves inverse images
    & $\Leftrightarrow$ & is positive & \cite[4.74]{SchroederPhd01} \&~\cite{GummS01} \\ % chktex 2
    preserves weak kernel pairs
    & $\Leftrightarrow$ & is refinable & \cite[4.74]{SchroederPhd01} \&~\cite{GummS01}\\ % chktex 2
    preserves weak pullbacks
    & $\Leftrightarrow$ & is positive and refinable & \cite[4.35]{SchroederPhd01}\\ % chktex 2
    \bottomrule
  \end{tabular}
  \caption{Correspondence between properties of the functor and the
    monoid}%
  \label{tabMonoidVsFunctor}
\end{table}

\begin{proposition}\label{cancellativeClosure}
  The class of cancellative functors
  contains the identity functor and all constant functors, and
  is closed under subfunctors, products, and coproducts.
\end{proposition}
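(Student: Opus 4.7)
My plan is to handle the five closure properties one after another, in each case exploiting that the map in \autoref{funCancellative} is natural in $F$, so that it interacts predictably with constructions on functors.

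For the identity functor, I would simply compute $\fpair{\chi_{\set{1,2}},\chi_{\set{2}}}\colon 3\to 2\times 2$ and observe that $0\mapsto(0,0)$, $1\mapsto(1,0)$, $2\mapsto(1,1)$, which is injective. For a constant functor $C_A$ sending every set to~$A$ and every map to $\id_A$, the map in \autoref{funCancellative} becomes the diagonal $\Delta\colon A\to A\times A$, which is trivially injective.

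For subfunctors, suppose $\iota\colon G\monoto F$ is a subfunctor with $F$ cancellative. Naturality of $\iota$ yields a commutative square
\[
  \begin{tikzcd}
    G3 \arrow{r}{\fpair{G\chi_{\set{1,2}},G\chi_{\set{2}}}} \arrow[monoto]{d}[swap]{\iota_3}
    & G2\times G2 \arrow[monoto]{d}{\iota_2\times\iota_2} \\
    F3 \arrow{r}[swap]{\fpair{F\chi_{\set{1,2}},F\chi_{\set{2}}}}
    & F2\times F2
  \end{tikzcd}
\]
whose bottom edge is injective by assumption and whose left edge is injective since $\iota$ is a subfunctor; commutativity then forces the top edge to be injective. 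For products $F_1\times F_2$, the map in \autoref{funCancellative} decomposes componentwise as $\fpair{F_1\chi_{\set{1,2}},F_1\chi_{\set{2}}}\times \fpair{F_2\chi_{\set{1,2}},F_2\chi_{\set{2}}}$ (after reordering factors), which is injective because each factor is, by cancellativeness of $F_1$ and $F_2$.

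The coproduct case is the one requiring a touch of care, but is still straightforward. For $F_1+F_2$, the map $(F_1+F_2)\chi = F_1\chi + F_2\chi$ preserves the coproduct summand, so the map $\fpair{(F_1+F_2)\chi_{\set{1,2}}, (F_1+F_2)\chi_{\set{2}}}$ sends an element in the $F_i3$-summand of $F_13+F_23$ to a pair both of whose components lie in the $F_i2$-summand of $F_12+F_22$. Thus, if two elements have the same image, they must lie in the same summand (say $F_i3$), and then injectivity reduces to cancellativeness of $F_i$ via the fact that the coproduct injections $\inj_i\colon F_i2\monoto F_12+F_22$ are injective. The only mild subtlety anywhere in the proof is to keep careful track of these coproduct injections in the naturality reshuffling; no step involves genuine difficulty.
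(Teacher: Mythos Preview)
Your proof is correct and follows essentially the same approach as the paper: direct computation for the identity and constant functors, the naturality-square argument for subfunctors, and componentwise/summandwise reasoning for products and coproducts. The only minor difference is that the paper handles arbitrary $I$-indexed products and coproducts while you treat the binary case; the extension is immediate.
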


\begin{proof}
  \begin{enumerate}
  \item The identity functor is cancellative because the map $\langle
    \chi_{\set{1,2}},\chi_{\set{2}}\rangle$ is clearly injective.
  \item For the constant functor $C_X$ with value $X$,  $C_X(\chi_S)$ is the
    identity map on $X$ for every set~$S$. Therefore $C_X$ is cancellative.

  \item Let $\alpha\colon F\monoto G$ be a natural transformation with injective
    components and let $G$ be cancellative. Combining the naturality squares of
    $\alpha$ for $\chi_{\set{1,2}}$ and $\chi_{\set{2}}$, we obtain the
    commutative square
    \[
      \begin{tikzcd}
        F3
        \arrow{r}{\fpair{F\chi_{\set{1,2}}, F{\chi_{\set{2}}}}}
        \arrow[>->]{d}[swap]{\alpha_{3}}
        &[15mm]
        F2\times F2
        \arrow{d}{\alpha_{2}\times \alpha_{2}}
        \\
        G3
        \arrow[>->]{r}{\fpair{G\chi_{\set{1,2}}, G{\chi_{\set{2}}}}}
        & G2\times G2,
      \end{tikzcd}
    \]
    in which the composite $F3\to G2\times G2$ is injective by
    hypothesis. Hence, $\fpair{F\chi_{\set{1,2}},F\chi_{\set{2}}}$ is
    injective as well, showing that the subfunctor $F$ is
    cancellative.

  \item Let $(F_i)_{i\in I}$ be a family of cancellative
    functors, and suppose that we have elements $s,t\in
    (\prod_{i\in I}F_i)(3) = \prod_{i \in I}F_i3$ with
    \[
      \big(\prod_{i\in I}F_i\chi_{\set{1,2}}\big)(s)
      = \big(\prod_{i\in I}F_i\chi_{\set{1,2}}\big)(t)
      \quad\text{and}\quad
      \big(\prod_{i\in I}F_i\chi_{\set{2}}\big)(s)
      = \big(\prod_{i\in I}F_i\chi_{\set{2}}\big)(t).
    \]
    Write $\pr_i$ for the $i$th projection function from the product.
    For every $i \in I$ we have:
    \[
      F_i\chi_{\set{1,2}}(\pr_i(s))
      = F_i\chi_{\set{1,2}}(\pr_i(t))
      \qquad\text{and}\qquad
      F_i\chi_{\set{2}}(\pr_i(s))
      = F_i\chi_{\set{2}}(\pr_i(t)).
    \]
    Since every $F_i$ is cancellative, we have $\pr_i(s)=\pr_i(t)$ for every
    $i\in I$. This implies~$s=t$ since the product projections $(\pr_i)_{i\in I}$
    are jointly injective.
    \smnote{}

  \item Again, let $(F_i)_{i\in I}$ be a family of cancellative
    functors. Suppose that we have elements $s,t\in
    (\coprod_{i\in I}F_i)(3) = \coprod_{i \in I}F_i3$ satisfying%
    \smnote{}
    \[%
      \big(\coprod_{i\in I}F_i\chi_{\set{1,2}} \big)(s)
      =  \big(\coprod_{i\in I}F_i\chi_{\set{1,2}} \big)(t)
      \quad\text{and}\quad
       \big(\coprod_{i\in I}F_i\chi_{\set{2}} \big)(s)
      =  \big(\coprod_{i\in I}F_i\chi_{\set{2}} \big)(t).
    \]
    This implies that there exists an $i\in I$ and $s',t'
    \in F_i3$ with $s = \inj_i (s')$, $t=\inj_i(t')$, and
    \[
      F_i\chi_{\set{1,2}}(s')
      = F_i\chi_{\set{1,2}}(t')
      \qquad\text{and}\qquad
      F_i\chi_{\set{2}}(s')
      = F_i\chi_{\set{2}}(t').
    \]
    Since $F_i$ is cancellative, we have $s'=t'$, which implies $s=t$.
    \qedhere
  \end{enumerate}
\end{proof}

\noindent
A consequence of closure under subfunctors is that, for example,
$\Dist$ is cancellative, being a subfunctor of $\R^{(-)}$, but $\Pow$
is not, as we have already seen that its subfunctor~$\Powf$ fails to
be cancellative.
\begin{proposition}\label{cancellativeVsZippable}
  Cancellative functors are neither closed under quotients nor under composition.
  Zippability and cancellativity are independent properties.
\end{proposition}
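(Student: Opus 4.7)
The proposition bundles three claims, each of which I plan to establish by an explicit counterexample built from the monoid-valued construction via \autoref{monoidValCancellative}.

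For non-closure under quotients, I would use the natural surjection $\mathrm{supp}\colon \Bag \twoheadrightarrow \Powf$ sending a finite multiset to its underlying support set. By \autoref{monoidValCancellative}, $\Bag = \N^{(-)}$ is cancellative since $(\N,+)$ is, whereas its quotient $\Powf = \BoolMonoid^{(-)}$ is not cancellative, because the Boolean monoid fails cancellation ($1\vee 1 = 1\vee 0$ yet $1 \neq 0$).

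For non-closure under composition, the plan is to show that $\Bag\circ\Bag$ is not cancellative. The intended witness is the pair of distinct bags-of-bags $\beta_1 = \bag{\bag{0,1},\bag{1,2}}$ and $\beta_2 = \bag{\bag{0,2},\bag{1,1}}$ in $\Bag(\Bag\,3)$. A direct calculation shows that $\Bag\Bag(\chi_{\{1,2\}})$ (which collapses $1,2\mapsto 1$) sends both $\beta_1$ and $\beta_2$ to $\bag{\bag{0,1},\bag{1,1}}$, and that $\Bag\Bag(\chi_{\{2\}})$ (which collapses $0,1\mapsto 0$) sends both to $\bag{\bag{0,0},\bag{0,1}}$, so the joint map from \autoref{funCancellative} fails to be injective despite $\beta_1 \neq \beta_2$.

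For independence, the direction ``zippable does not imply cancellative'' is witnessed by $\Powf$, which is zippable as a monoid-valued functor but not cancellative by the argument above. The converse direction ``cancellative does not imply zippable'' is the main obstacle, since the obvious candidates for non-zippable functors — $\Pow\circ\Pow$, $\Bag\circ\Bag$, or $\Powf$ applied to $X\times X$ — all fail cancellativity as well, through the same mechanism of non-singleton joint-projection fibres appearing once the outer functor performs summation. My plan is to address this direction with a tailored construction: a subfunctor of such a composite, engineered so that the multiplicity patterns giving rise to cancellation failure are excluded, while the mixed-level pattern $\bag{\bag{a_1,b_1},\bag{a_2,b_2}}$ versus $\bag{\bag{a_1,b_2},\bag{a_2,b_1}}$ that defeats zippability is retained. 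The delicate combinatorial check is that this subfunctor is genuinely functorial (closed under images of $\Bag\Bag f$ for arbitrary $f$) and, at the same time, cancellative in the sense of \autoref{funCancellative}.
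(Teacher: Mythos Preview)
Your treatment of non-closure under quotients and under composition is correct. For quotients you use the support map $\Bag \twoheadrightarrow \Powf$, which differs from the paper's choice (the paper quotients the signature functor $X \mapsto \coprod_{n\in\N} X^n$ onto $\Powf$) but works equally well; for composition your witnesses $\beta_1,\beta_2 \in \Bag\Bag\,3$ are exactly the ones the paper uses.

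The gap is in the final direction of independence, ``cancellative does not imply zippable''. You correctly identify this as the hard case, but your proposal stops at an unspecified plan to carve out a subfunctor of $\Bag\Bag$ that retains the zippability failure while excluding the cancellativity failure. This is not carried out, and it is not clear it can be made to work cleanly: the obstruction you want to remove (non-injectivity of $\langle F\chi_{\{1,2\}}, F\chi_{\{2\}}\rangle$ on $F3$) and the obstruction you want to keep (non-injectivity of the unzip map on $F(2{+}2)$) arise from the same summing mechanism in bag functors, and ensuring closure under all $\Bag\Bag f$ while separating them is delicate.

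The paper sidesteps this entirely with a much simpler observation: if $|F3| = 1$, then the cancellativity map has a singleton domain and is vacuously injective, while if in addition $|F(2{+}2)| > 1$, then the zippability map $F(2{+}2) \to F(2{+}1) \times F(1{+}2) \cong F3 \times F3 \cong 1$ cannot be injective. Any functor with these two cardinality properties is automatically cancellative but not zippable. The paper's concrete instance is $FX = \{S \subseteq X : |S| \in \{0,4\}\}$, with $Ff(S) = f[S]$ if $|f[S]| = 4$ and $Ff(S) = \emptyset$ otherwise; functoriality is routine, and then $F3 = \{\emptyset\}$ while $F4 = \{\emptyset, 4\}$. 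This ``trivialise the hypothesis by shrinking $F3$'' trick is the missing idea in your approach.
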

\begin{table}
  \subfloat[][Operations]{
  \begin{tabular}{@{}lll@{}}
    \toprule
    Operation & cancellative & non-cancellative \\
    \midrule
    Quotient & $X  \mapsto \coprod_{n \in \N}X^n$ & $\Powf$ \\[1mm]
    Composition & $\Bag=\N^{(-)}$ & $\Bag\Bag$ \\
    \bottomrule
  \end{tabular}
  }
  \hfill
  \subfloat[][Independence of zippability]{
    \begin{tabular}{@{}lll@{}}
      \toprule
      & cancellative & non-cancel. \\
      \midrule
      zippable &
                $X\mapsto X$
                 \phantom{\rlap{$\coprod_{\N}X^n$}} %
                    & $\Powf$
      \\[1mm]
      non-zippable & see~\eqref{size4sets}
                     \phantom{\rlap{$\N^{(-)}$}} %
                    & $\Powf\Powf$
      \\
      \bottomrule
    \end{tabular}
  }
  \caption{Counter examples regarding cancellative functors.}%
  \label{cancelCounterEx}
\end{table}

\begin{proof}
  \autoref{cancelCounterEx} shows an overview of all counterexamples used in
  the present proof.
  \begin{enumerate}
\item Cancellative functors are not closed under quotients: e.g.~the
  non-cancellative functor~$\Powf$ is a quotient of the signature
  functor $X  \mapsto \coprod_{n \in \N}X^n$ (which is cancellative by \autoref{cancellativeClosure}).

\item Cancellative functors are not closed under composition. For the additive
  monoid $(\N,+,0)$ of natural numbers, the monoid-valued functor $\Bag =
  \N^{(-)}$ sends $X$ to the set of finite multisets on~$X$ (`bags'). Since $\N$
  is cancellative, $\Bag$ is a cancellative functor. However,
  $\Bag\Bag$ is not (below we write $\bag{\cdots}$ to denote
  multisets, so~$\bag{0,1} = \bag{1,0}$ but $\bag{1}\neq \bag{1,1}$):
  \begin{align*}
    &\fpair{\Bag\Bag\chi_{\set{1,2}},\Bag\Bag\chi_{\set{2}}}\big(\bag[\big]{\bag{0,1},\bag{1,2}}\big)
    \\
    &= \big(\bag[\big]{\bag{0,1},\bag{1,1}}, \bag[\big]{\bag{0,0},\bag{0,1}}\big) \\
    &= \big(\bag[\big]{\bag{0,1},\bag{1,1}}, \bag[\big]{\bag{0,1},\bag{0,0}}\big) \\
    &=\fpair{\Bag\Bag\chi_{\set{1,2}},\Bag\Bag\chi_{\set{2}}}\big(\bag[\big]{\bag{0,2}, \bag{1,1}}\big).
  \end{align*}
  Thus, the map
  $\fpair{\Bag\Bag\chi_{\set{1,2}},\Bag\Bag\chi_{\set{2}}}$ is not injective.

\item The identity functor $X\mapsto X$ is both
  zippable~\cite{concurSpecialIssue} and cancellative~(\autoref{cancellativeClosure}).
\item The monoid-valued functor $\Powf = \BoolMonoid^{(-)}$ is
  zippable~\cite{concurSpecialIssue}, but not
  cancellative~(\autoref{monoidValCancellative}), because $\BoolMonoid$ is a
  non-cancellative monoid.

  \item
  The functor $\Pow\Pow$ is neither
  zippable~\cite[Ex.~5.10]{concurSpecialIssue} nor cancellative because
  \begin{align*}
    \fpair{\Pow\Pow\chi_{\set{1,2}},\Pow\Pow\chi_{\set{2}}} (\big\{\set{0}, \set{2}\big\})
    &= (\set[\big]{\set{0}, \set{1}}, \set[\big]{\set{0}, \set{1}})
    \\ &= \fpair{\Pow\Pow\chi_{\set{1,2}},\Pow\Pow\chi_{\set{2}}} (\big\{\set{0},
    \set{1}, \set{2}\big\}).
  \end{align*}

  \item
  Every functor $F$ satisfying $|F(2+ 2)| > 1$ and $|F3| = 1$ is
  cancellative but not zippable:
  \begin{itemize}[leftmargin=!]
  \item Indeed, every map with domain $1$ is injective, in particular
    the map
    \[
      \fpair{F\chi_{\set{1,2}},F\chi_{\set{2}}}\colon 1\cong F3 \longrightarrow F2\times F2,
    \]
    whence $F$ is cancellative.

  \item If $|F(2+2)|>1$ and $|F3|=1$ we have that the map
    \[
      \fpair{2+\mathord{!},\mathord{!}+2}
      \colon \underbrace{F(2+2)}_{|-|\mathrlap{\, > 1}} \to
      \underbrace{F(2+1)}_{\cong F3\cong 1} \times
      \underbrace{F(1+2)}_{\cong F3\cong 1}\cong 1
    \]
    is not injective, whence $F$ is not zippable.
  \end{itemize}
  A concrete example of such a functor is given by\twnote{}
  \begin{equation}
    FX = \{ S \subseteq X : |S| = 0 \text{ or }|S| = 4\}%
    \label{size4sets}
  \end{equation}
  which sends a map $f\colon X\to Y$ to the map $Ff\colon FX\to FY$ defined by
  \[
    Ff(S) = \begin{cases}
      f[S] & \text{if }|f[S]| = 4 \\
      \emptyset & \text{otherwise}.
    \end{cases}
    \qedhere
  \]
\end{enumerate}
\end{proof}
\noindent
In related work, König~\etal~\cite{KoenigEA20} construct
distinguishing formulae in coalgebraic generality. Their assumption is
a generalized version of zippability, where the binary coproduct is
replaced with an $m$-ary coproduct for $m\in \N$:

\begin{defiC}[{\cite{KoenigEA20}}]
A functor $F$
is $m$-zippable
if the canonical map
\[
  \unzip_m\colon~~ F(A_1+A_2+\ldots + A_m) % chktex 11
  ~~\longrightarrow~~ F(A_1 + 1)
  \times F(A_2 + 1)
  \times\cdots
  \times F(A_m + 1)
\]
is injective. Explicitly, $\unzip_m$ is given by
\[\textstyle
  \fpair{F[\Delta_{i,j}]_{j\in \bar m}}_{i\in \bar m}\colon~~
  F\big(\coprod_{j = 1}^{m} A_j\big)
  \longrightarrow
  \prod_{i = 1}^{m} F(A_i + 1)
\]
where $\bar m$ is the set $\bar m = \set{1,\ldots,m}$ and the map $\Delta_{i,j}$ is defined by
\[
  \Delta_{i,j}\colon A_j\to A_i + 1
  \qquad
  \Delta_{i,j} := \begin{cases}
    A_j \xra{\inl} A_i + 1
    &\text{if }i = j \\
    A_j \xra{!} 1 \xra{\inr} A_i + 1
    & \text{if }i \neq j.
  \end{cases}
\]
\end{defiC}

Ordinary zippability is then equivalent to $2$-zippability.
\begin{proposition}
  Every zippable and cancellative set functor is $m$-zippable for
  every $m$.%
\end{proposition}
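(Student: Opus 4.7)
I plan to prove this by induction on~$m$. The base cases are immediate: for $m=1$, the map $\unzip_1=F(\inl)$ is injective since $F$ preserves injective maps (\autoref{R:trnkovahull}\ref{R:trnkovahull:2}); for $m=2$, injectivity of $\unzip_2$ is precisely zippability. So assume $F$ is $m$-zippable for some $m\ge 2$, and take $t,t'\in F(A_1+\cdots+A_{m+1})$ with $\unzip_{m+1}(t)=\unzip_{m+1}(t')$; I will show that $t=t'$.

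The first move is to peel off the last summand using zippability applied to the splitting $(A_1+\cdots+A_m) \mid A_{m+1}$: this reduces the goal to showing agreement of~$t$ and~$t'$ in $F((A_1+\cdots+A_m)+1)$ and in $F(1+A_{m+1})$. The second agreement is immediate, as (up to the canonical iso $F(1+A_{m+1})\cong F(A_{m+1}+1)$) it is the $(m{+}1)$st component of $\unzip_{m+1}$. For the first, I view $(A_1+\cdots+A_m)+1$ as an $m$-ary sum $A_1+\cdots+A_{m-1}+(A_m+1)$ and invoke the inductive hypothesis. A routine composition of collapse maps shows that the resulting components in $F(A_j+1)$ for $j\le m-1$ coincide with the known $u_j$; what remains is to check agreement in the final component $F((A_m+1)+1)\cong F(A_m+2)$.

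To handle this last component I apply zippability a second time, now to the splitting $A_m \mid (1+1)$. The first side lands in $F(A_m+1)$ and, after tracing composites, matches $u_m$; agreement is again automatic. The second side lands in $F(1+2)=F3$ and has the form $F\pi(t)$, where $\pi\colon A_1+\cdots+A_{m+1}\to 3$ collapses $A_1,\ldots,A_{m-1}$ to one point, $A_m$ to a second, and $A_{m+1}$ to a third. Cancellativity (\autoref{funCancellative}) now enters decisively: by precomposition with permutations of~$3$, it yields injectivity of $F3\to F2\times F2$ via any two of the three maps $F\sigma$, where $\sigma\colon 3\to 2$ preserves one of the three points of~$3$ and collapses the other two. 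I choose the two that preserve $\pi(A_m)$ and $\pi(A_{m+1})$, respectively; each composite $F(\sigma\circ\pi)(t)$ is then obtainable by post-composing the appropriate component $u_m$ or $u_{m+1}$ with a suitable map $F(A_j+1)\to F2$, so both agree between~$t$ and~$t'$. Cancellativity therefore delivers $F\pi(t)=F\pi(t')$, and reassembling the two zippability steps plus the inductive hypothesis gives $t=t'$.

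The main subtlety is the selection of which two out of three singleton-preserving collapses to feed into cancellativity. The third collapse of $\pi$—the one preserving the shared image of $A_1,\ldots,A_{m-1}$—is \emph{not} recoverable from any single $u_j$, since each $u_j$ singles out one specific $A_j$ only. It is precisely the flexibility built into cancellativity (requiring only two of three singleton-preserving maps, via \autoref{funCancellative} together with suitable permutations of~$3$) that allows us to avoid this unavailable collapse and close the induction.
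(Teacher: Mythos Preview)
Your proof is correct and uses the same ingredients as the paper (two applications of zippability and one of cancellativity per inductive step), but arranges them in a different order. The paper first establishes a separate auxiliary injection
\[
g_{A,B}\colon F(A+1+B)\longrightarrow F(A+1)\times F(1+B),
\]
proved once using zippability twice and cancellativity once, and then in the inductive step applies $\unzip_m$ with $A_m+A_{m+1}$ grouped as a single summand, obtaining a factor $F((A_m+A_{m+1})+1)\cong F(A_m+1+A_{m+1})$ to which~$g$ applies directly. You instead split off $A_{m+1}$ first via zippability, then apply the inductive hypothesis to the $m$-ary sum $A_1+\cdots+A_{m-1}+(A_m+1)$, then use zippability once more on $A_m\mid 2$, and finally invoke cancellativity on the resulting $F3$-factor. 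The paper's route is more modular (the helper $g_{A,B}$ is isolated, and the inductive step becomes a short chain of injections), while yours is more direct but carries more bookkeeping inside the step; on the other hand, your explicit analysis of which two of the three singleton-preserving collapses $3\to 2$ are recoverable from the components $u_m,u_{m+1}$ (and why the third, corresponding to $A_1,\ldots,A_{m-1}$, is not) nicely pinpoints exactly where cancellativity is doing the essential work.
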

\begin{proof}
  First, we show that for a zippable and cancellative set functor $F$,
  the map%
  \smnote{}
  \[
    g_{A,B} ~~:=~~
    F(A+1+B)
    \xra{\fpair{F(A+!), F(!+B)}}
    F(A+1)\times F(1+B)
  \]
  is injective for all sets $A,B$. Indeed, we have
  the following chain of injective maps, where the index at the $1$ is only
  notation to distinguish coproduct components more easily:
  \allowdisplaybreaks%
  \begin{align*}
    & F(A+(1_M +B))
    \\
    & \quad\monodown \fpair{F(A+!), F(!+(1_M+B))} \tag{$F$ is zippable}
    \\
    & F(A+1) \times F(1_A+1_M+B)
    \\
    & \quad\monodown \id\times \fpair{F(!+B),F(1_A+1_M+!)}\tag{$F$ is zippable}
    \\
    & F(A+1)\times F(1+B) \times F(1_A+1_M+1_B)
    \\
    & \quad\monodown \id\times \id \times \fpair{F\chi_{1_M+1_B}, F\chi_{1_B}} \tag{$F$ is cancellative, $1_A+1_M+1_B \cong \{0,1,2\}$} % chktex 1
    \\
    & F(A+1)\times F(1+B) \times F2 \times F2
  \end{align*}
  Call this composite~$f$.  It factorizes through $g_{A,B}$, because
  it matches with $g_{A,B}$ on the components $F(A+1)$ and $F(1+B)$,
  and for the other components, one has the map
  \[
    h ~:=~
    F(A+1)\times F(1+B)
    \xra{F\chi_{1}\times F\chi_B}
    F2 \times F2
  \]
  with $f = \fpair{\id_{F(A+1)\times F(1+B)}, h}\cdot g_{A,B}$. Since $f$ is
  injective, $g_{A,B}$ must be injective, too.

  Also note that a rewriting \autoref{funCancellative} along the
  isomorphisms $1+1+1\cong 3$ and $1+1 \cong 2$, we obtain that a
  functor~$F$ is cancellative iff the map
  \[
    \fpair{F(1+!), F(!+1)}\colon
    F(1+1+1)\longrightarrow F(1+1)\times F(1+1)
  \]
  (where $!\colon 1+1 \to 1$) is injective.

  We now proceed with the proof of the desired implication by
  induction on $m$.  In the base cases $m=0$ and $m=1$, there is
  nothing to show because every functor is $0$- and $1$-zippable, and
  for $m=2$, the implication is trivial (zippability coincides with
  $2$-zippability by definition). In the inductive step, given that
  $F$ is $2$-zippable, $m$-zippable ($m\ge 2$), and cancellative, we
  show that $F$ is $(m+1)$-zippable.

  We have the following chain of injective maps, where we again annotate some of
  the singleton sets $1$ with indices to indicate from which coproduct
  components they come:
  \begin{align*}
    & F(A_1+\ldots+A_{m-1}+(A_m+A_{m+1})) % chktex 11
    \\
    & \qquad\monodown \unzip_m
      \tag{$F$ is $m$-zippable}
    \\
    &
    \big(\textstyle\prod_{i=1}^{m-1} F(A_i + 1) \big)
    \times F(A_m + A_{m+1} + 1_{1..(m-1)})
    \\
    \cong~
    &
    \big(\textstyle\prod_{i=1}^{m-1} F(A_i + 1)\big)
    \times F(A_m + 1_{1..(m-1)} + A_{m+1})
    \\
    &\quad \monodown {\id\times g_{A_m,A_{m+1}}}
      \tag{the above injective helper map $g$}
      \\
    &
    \big(\textstyle\prod_{i=1}^{m-1} F(A_i + 1) \big)
    \times F(A_m + 1)
    \times F(1+A_{m+1})
    \\
    \cong~
    &
    \big(\textstyle\prod_{i=1}^{m-1} F(A_i + 1) \big)
    \times F(A_m + 1)
    \times F(A_{m+1} + 1)
  \end{align*}
  This composite thus is injective as well, and coincides with
  $\unzip_{m+1}$, showing that $F$ is $(m+1)$-zippable.
\end{proof}

The converse, however, does not hold because the finite powerset functor $\Powf$
is $m$-zippable for all $m$~\cite[Ex.~10, Lem.~14]{KoenigEA20}, but not
cancellative as we have seen.

\subsection{Optimized Partition Refinement and Certificates}
The optimization present in the algorithms for
Markov chains~\cite{ValmariF10} and automata~\cite{Hopcroft71} can now
be adapted to coalgebras for cancellative functors, where it suffices
to split only according to transitions into $S$, ignoring
transitions into $B\setminus S$. More formally, this means that we
replace the three-valued $\chi_S^B\colon C \to 3$ with
$\chi_S\colon C\to 2$ in the refinement step~\ref{defPi1}: %

\begin{proposition}\label{optimizedPcancel}
  Let $F$ be a cancellative set functor. For $S\in C/P_i$ in the $i$-th
  iteration of \autoref{coalgPT}, we have
  \(
  P_{i+1} = P_i \cap \ker (C\xra{c}{FC}\xra{F\chi_S} F2).
  \)
\end{proposition}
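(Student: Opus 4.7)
The plan is to prove the equality of the two refinements by showing inclusions in both directions, with the nontrivial direction relying on cancellativity together with the stability of $P_i$ established in \autoref{stableP}.

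For the inclusion $P_{i+1}\subseteq P_i\cap\ker(F\chi_S\cdot c)$, I would observe that, by~\eqref{eq:chi}, $\chi_S=\chi_{\set{2}}\cdot\chi_S^B$, so $F\chi_S\cdot c=F\chi_{\set{2}}\cdot F\chi_S^B\cdot c$. Hence $\ker(F\chi_S^B\cdot c)\subseteq \ker(F\chi_S\cdot c)$, giving the inclusion immediately from the definition of $P_{i+1}$ in step~\ref{defPi1}.

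For the converse inclusion $P_i\cap \ker(F\chi_S\cdot c)\subseteq P_{i+1}$, fix $(x,x')\in P_i$ with $F\chi_S(c(x))=F\chi_S(c(x'))$; the goal is to show $F\chi_S^B(c(x))=F\chi_S^B(c(x'))$. Here I would use \autoref{stableP}: since $B\in C/Q_i$ and $(x,x')\in P_i$, we have $F\chi_\emptyset^B(c(x))=F\chi_\emptyset^B(c(x'))$. Note that $\chi_\emptyset^B\colon C\to 3$ factors as $\iota\cdot\chi_B$, where $\iota\colon 2\monoto 3$ is the inclusion $0\mapsto 0$, $1\mapsto 1$; since $F$ preserves injective maps (\autoref{R:trnkovahull}\ref{R:trnkovahull:2}), $F\iota$ is injective, so we deduce $F\chi_B(c(x))=F\chi_B(c(x'))$. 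Together with the assumption, we thus have agreement of both $F\chi_B\cdot c$ and $F\chi_S\cdot c$ on $(x,x')$.

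Now I would invoke cancellativity. By~\eqref{eq:chi}, $F\chi_{\set{1,2}}\cdot F\chi_S^B=F\chi_B$ and $F\chi_{\set{2}}\cdot F\chi_S^B=F\chi_S$. Applying these to $c(x)$ and $c(x')$ yields
\[
  \fpair{F\chi_{\set{1,2}},F\chi_{\set{2}}}\bigl(F\chi_S^B(c(x))\bigr)
  =\fpair{F\chi_{\set{1,2}},F\chi_{\set{2}}}\bigl(F\chi_S^B(c(x'))\bigr).
\]
Since $F$ is cancellative, the left-hand map is injective (\autoref{funCancellative}), so $F\chi_S^B(c(x))=F\chi_S^B(c(x'))$, i.e.\ $(x,x')\in P_{i+1}$ as required. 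The main conceptual step is recognizing that stability of $P_i$ already supplies the ``missing'' information about $B$, which together with agreement on $S$ can be recombined into agreement on the three-valued $\chi_S^B$ precisely when $F$ is cancellative.
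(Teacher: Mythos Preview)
Your proof is correct and uses essentially the same ingredients as the paper: the factorisations $\chi_B=\chi_{\set{1,2}}\cdot\chi_S^B$ and $\chi_S=\chi_{\set{2}}\cdot\chi_S^B$ from~\eqref{eq:chi}, the injectivity of $\fpair{F\chi_{\set{1,2}},F\chi_{\set{2}}}$ (cancellativity), and the fact that $P_i\subseteq\ker(F\chi_B\cdot c)$. The only presentational difference is that the paper packages the argument algebraically via kernel identities (showing $\ker(\fpair{F\chi_B,F\chi_S}\cdot c)=\ker(F\chi_S^B\cdot c)$ and then intersecting with~$P_i$), whereas you argue the two inclusions element-wise; also, you derive the stability fact $P_i\subseteq\ker(F\chi_B\cdot c)$ explicitly from \autoref{stableP} via $\chi_\emptyset^B=\iota\cdot\chi_B$ and preservation of injectives, while the paper states it more directly from the algorithm invariant.
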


\begin{proof}
  From the definition~\eqref{eqKer} of the kernel, we immediately
  obtain the following properties for all maps $f,g\colon Y\to Z$,
  $h \colon X\to Y$:
  \begin{align}
    f\text{ injective} ~~&\Longrightarrow~~ \ker(f\cdot h) = \ker(h)%
                         \label{kerInjective}
                         \\
    \ker(f) = \ker(g) ~~&\Longrightarrow~~ \ker(f\cdot h) = \ker(g\cdot h)%
                         \label{kerPreCompose}
                          \\
    \ker(\fpair{f,g}) &= \ker(f)\cap\ker(g).%
                         \label{kerIntersect}
  \end{align}
  For every coalgebra $c\colon C\to FC$ and $S\subseteq B\subseteq C$
  we have by~\eqref{eq:chi} that
  \[
    \fpair{F\chi_B,F\chi_S} = \fpair{F\chi_{\set{1,2}}, F\chi_{\set{2}}}\cdot F\chi_S^B.
  \]
  Since $F$ is cancellative, $\fpair{F\chi_{\set{1,2}}, F\chi_{\set{2}}}$ is
  injective, and we thus obtain
  \begin{equation}
    \ker(\fpair{F\chi_B,F\chi_S})
    = \ker(\fpair{F\chi_{\set{1,2}}, F\chi_{\set{2}}}\cdot F\chi_S^B)
    \overset{\text{\eqref{kerInjective}}}{=} \ker(F\chi_S^B).
  \end{equation}
  By~\eqref{kerPreCompose}, this implies that
  \begin{equation}
    \ker(\fpair{F\chi_B,F\chi_S}\cdot c)
    = \ker(F\chi_S^B\cdot c).%
    \label{cancelBinaryChi}
  \end{equation}

  Let $B\in C/Q_i$ be the block that is split into $S$ and $B\setminus S$ in
  iteration $i$. Since $P_i$ is finer than $Q_i$ and $B\in C/Q_i$, we have
  $P_i\subseteq Q_i\subseteq \ker(F\chi_B\cdot c)$; thus:
  \begin{equation}
    P_i = P_i \cap \ker(C\xra{c}{FC} \xra{F\chi_B} F2).%
    \label{PiKerB}
  \end{equation}
  Now we verify the desired property:\smnote{}
  \begin{align*}
    P_{i+1} &= ~P_i\cap \ker(C\xra{c}{FC}\xra{F\chi_S^B} F2)
    & \text{(by~\ref{defPi1})}
    \\
    &\overset{\mathclap{\text{}}}{=}
    ~P_i\cap \ker(\fpair{F\chi_B,F\chi_S}\cdot c)
    & \text{(by~\eqref{cancelBinaryChi})}
    \\
    &= P_i\cap \ker(\fpair{F\chi_B\cdot c,F\chi_S\cdot c})
    \\ &
    = P_i\cap \ker(F\chi_B\cdot c)\cap\ker(F\chi_S\cdot c)
    &\text{(by~\eqref{kerIntersect})}
      \\ &
    = P_i\cap\ker(F\chi_S\cdot c)
    &\text{(by~\eqref{PiKerB})} & \qedhere
  \end{align*}
\end{proof}
\noindent Note that this result is independent of certificate
construction and already improves the underlying partition refinement
algorithm.
\begin{expl}
  Suppose that $F$ is a a signature functor $\Sigma$ or a
  monoid-valued functor~$M^{(-)}$ for a cancellative monoid~$M$. Given
  an input coalgebra for $F$, the refinement step~\ref{defPi1} of
  \autoref{coalgPT} can be optimized to compute $P_{i+1}$ according to
  \autoref{optimizedPcancel}.
\end{expl}
\noindent Observe that, in the optimized step~\ref{defPi1}, $B$ is no
longer mentioned. It is therefore unsurprising that we do not need a
certificate for it when constructing certificates for the blocks of
$P_{i+1}$. Instead, we can reflect the map
$F\chi_S\cdot c\colon C\to F2$ in the coalgebraic modal formula and
take (unary) modal operators from $F2$.  Just like $F1$ in
\autoref{notationF1Mod}, the set~$F2$ canonically embeds into $F3$.

\begin{notation}\label{notationF2Mod} Define the injective map
  $j_2\colon 2\monoto 3$ by $j_2(0) = 1$ and $j_2(1)=2$. The
  injection $Fj_2\colon F2\monoto F3$ provides a way to interpret
  elements $t\in F2$ as unary modalities~$\fmod{t}$:
  \[
    \fmod{t}(\delta) := \fmod{Fj_2(t)}(\delta,\top).
  \]
\end{notation}

\begin{rem}
  There are several different ways to define $\fmod{t}(\delta)$ for $t\in
  F2$, depending on the definition of the inclusion $j_2$.
  \begin{center}
    \def\arraystretch{1.2}
    \begin{tabular}{l@{\qquad}l@{\qquad}l}
      \toprule
      $j_2\colon 2\monoto 3$ & $j_2\cdot \chi_S$ for $S\subseteq C$
      & Definition for $t\in F2$ \\
      \midrule
      $0\mapsto 0, 1\mapsto 1$
      & $j_2\cdot \chi_S = \chi_{\emptyset}^S$
      & $\fmod{t}(\delta) := \fmod{Fj_2(t)}(\bot,\delta)$
        \\
      $0\mapsto 0, 1\mapsto 2$
      & $j_2\cdot \chi_S = \chi_{S}^S$
      & $\fmod{t}(\delta) := \fmod{Fj_2(t)}(\delta,\delta)$
        \\
      $0\mapsto 1, 1\mapsto 2$
      & $j_2\cdot \chi_S = \chi_{S}^C$
      & $\fmod{t}(\delta) := \fmod{Fj_2(t)}(\delta,\top)$
        \\
      \bottomrule
    \end{tabular}
  \end{center}
  All these variants make the following \autoref{lemF2CoalgSem} true
  because in each case, the data $j_2,\phi,\psi$ used in the
  definition of $\fmod{t}(\delta)$ as $\fmod{Fj_2(t)}(\phi,\psi)$ satisfy
  \begin{equation*}
    j_2\cdot \chi_{\semantics{\delta}} = \chi_{\semantics{\phi}}^{\semantics{\psi}}.%
    \label{eqF2ModProp}
  \end{equation*}
\end{rem}
\noindent In analogy to \autoref{lemF3CoalgSem}, we can show:
\begin{lemma}\label{lemF2CoalgSem} Given a cancellative functor~$F$,
  an $F$-coalgebra $(C,c)$, $t\in F2$, a formula~$\delta$, and
  a state $x \in C$, we have
  \[
    x\in \semantics{\fmod{t}(\delta)} ~\Longleftrightarrow~ F\chi_{\semantics{\delta}}(c(x)) = t.
  \]
\end{lemma}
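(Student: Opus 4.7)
The plan is to reduce this lemma to \autoref{lemF3CoalgSem} by unfolding \autoref{notationF2Mod} and exploiting the factorization of $\chi_S^C$ through $j_2$; the proof parallels that of \autoref{lemF1CoalgSem}. Note that although the lemma appears in the section on cancellative functors, cancellativity itself is not actually needed for the proof — what is required is merely that $Fj_2$ be injective, which follows from our running convention (\autoref{R:trnkovahull}\ref{R:trnkovahull:2}) that $F$ preserves injective maps.

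First, I would unfold the definition from \autoref{notationF2Mod}, writing $\fmod{t}(\delta) = \fmod{Fj_2(t)}(\delta,\top)$ with $j_2\colon 2\monoto 3$ given by $j_2(0)=1$, $j_2(1)=2$. Since $\semantics{\top}_{(C,c)} = C$ and trivially $\semantics{\delta}\subseteq C$, \autoref{lemF3CoalgSem} applied to $\phi_S := \delta$ and $\phi_B := \top$ yields
\[
  x\in\semantics{\fmod{Fj_2(t)}(\delta,\top)}
  \quad\Longleftrightarrow\quad
  F\chi_{\semantics{\delta}}^{C}(c(x)) = Fj_2(t).
\]

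Next, I would observe the key pointwise identity $j_2\cdot \chi_S = \chi_S^C$ for any $S\subseteq C$: indeed, if $x\in S$ then both sides equal $2$, and if $x\in C\setminus S$ then both sides equal $1$ (the value $0$ of $\chi_S^C$ never occurs since $B=C$). Applying $F$ to this identity and composing with $c$ gives $F\chi_S^C\cdot c = Fj_2\cdot F\chi_S\cdot c$, so the condition above becomes $Fj_2(F\chi_{\semantics{\delta}}(c(x))) = Fj_2(t)$.

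Finally, since $j_2$ is injective and $F$ preserves injective maps, $Fj_2$ is itself injective, so we may cancel it on both sides to conclude $F\chi_{\semantics{\delta}}(c(x)) = t$, as desired. The whole argument is a routine unfolding, with no real obstacle; the only subtlety worth flagging is that the statement would fail without the Trnková-hull convention on $F$, as cancellation of $Fj_2$ is essential in the last step.
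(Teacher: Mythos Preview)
Your proposal is correct and follows essentially the same approach as the paper's own proof: unfold \autoref{notationF2Mod}, apply \autoref{lemF3CoalgSem} with $\semantics{\top}=C$, use the identity $j_2\cdot\chi_S=\chi_S^C$, and cancel the injective map $Fj_2$ via the Trnkov\'a-hull convention. Your observation that cancellativity is not actually used in the argument is accurate and worth noting; the paper's proof likewise relies only on injectivity of $Fj_2$.
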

\begin{proof}
  By the definition of~$j_2$, we have $j_2\cdot \chi_{S} = \chi_S^C$
  for all $S\subseteq C$. Thus,
  \begin{align*}
    \semantics{\fmod{t}(\delta)}
    &= \semantics{\fmod{Fj_2(t)}(\delta,\top)}
    &\text{(\autoref{notationF2Mod})}
    \\
    & = \{x \in C \mid F\chi_{\semantics{\delta}}^{C}(c(x)) = Fj_2(t)\}
    &\text{(\autoref{lemF3CoalgSem}, $\semantics{\top} = C$)}
    \\
    & = \{x \in C \mid Fj_2(F\chi_{\semantics{\delta}}(c(x))) = Fj_2(t)\}
    &\text{($\chi_{\semantics{\delta}}^C = j_2\cdot \chi_{\semantics{\delta}}$)}
    \\
    & = \{x \in C \mid F\chi_{\semantics{\delta}}(c(x)) = t\}
    &\text{($Fj_2$ injective)}
  \end{align*}
  In the last step, we use that $F$ preserves injective maps (\autoref{R:trnkovahull}\ref{R:trnkovahull:2}).
\end{proof}

In \autoref{algoCerts}, the family $\beta$ is only used in the
definition of $\delta_{i+1}$ to characterize the larger block $B$ that
has been split into the smaller blocks $S\subseteq B$ and
$B\setminus S$. For a cancellative functor, we can replace
\[
    \fmod{F\chi_S^B(c(x))}(\delta_i(S),\beta_i(B))
    \qquad\text{with}\qquad
    \fmod{F\chi_S(c(x))}(\delta_i(S))
\]
in the definition of $\delta_{i+1}$. Hence, we can omit $\beta_i$ from
\autoref{algoCerts} altogether, obtaining the following algorithm,
which is again based on coalgebraic partition refinement
(\autoref{coalgPT}).

\begin{algorithm}\label{algoCertsCancel}
  We extend \autoref{coalgPT} as follows. Initially, define
  \[
    \delta_0([x]_{P_0}) = \fmod{F!(c(x))}.
  \]
  In the $i$-th iteration, extend step~\ref{defPi1} by the additional assignment
  \begin{enumerate}[label=({A'}1),topsep=5pt]
    \setcounter{enumi}{2}
    \makeatletter
    \renewcommand{\labelenumi}{\theenumi}
    \renewcommand{\theenumi}{\text{\bfseries\color{black}(A\!'\arabic{enumi})}}
    \renewcommand{\p@enumi}{}
    \makeatother

  \item\label{defDeltai1Optimized} $
      \delta_{i+1}([x]_{P_{i+1}}) = \begin{cases}
           \delta_{i}([x]_{P_{i}}) &\text{if }[x]_{P_{i+1}} = [x]_{P_i} \\
          \delta_{i}([x]_{P_{i}}) \wedge \fmod{F\chi_S(c(x))}(\delta_i(S))
          &\text{otherwise.}\\
          \end{cases}
        $
  \end{enumerate}
\end{algorithm}

\begin{theorem}\label{algoPatchCorrect}
  For cancellative functors, \autoref{algoCertsCancel} is correct;
  that is, we have:
  \[
    \forall S\in X/P_i\colon
    \semantics{\delta_i(S)}
    = S
    \qquad\text{for all $i\in \N$}.
  \]
\end{theorem}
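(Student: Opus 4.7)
The plan is to mirror the inductive proof of \autoref{algoCertsCorrect}, but with the family $\beta$ eliminated and binary $F3$-modalities replaced by unary $F2$-modalities, using the two ingredients specific to the cancellative setting: \autoref{lemF2CoalgSem} (semantics of $\fmod{t}(\delta)$ for $t\in F2$) and \autoref{optimizedPcancel} ($P_{i+1} = P_i\cap\ker(F\chi_S\cdot c)$).

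For the base case $i=0$, the definition $\delta_0([x]_{P_0}) = \fmod{F!(c(x))}$ is unchanged from \autoref{algoCerts}, so \autoref{lemF1CoalgSem} immediately gives $\semantics{\delta_0([x]_{P_0})} = \{y\in C\mid F!(c(y)) = F!(c(x))\} = [x]_{P_0}$, since $P_0 = \ker(F!\cdot c)$.

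For the inductive step, assume $\semantics{\delta_i(S)} = S$ for all $S\in C/P_i$. If a block is unchanged, i.e.\ $[x]_{P_{i+1}} = [x]_{P_i}$, then $\delta_{i+1}([x]_{P_{i+1}}) = \delta_i([x]_{P_i})$ and the claim follows directly from the induction hypothesis. Otherwise, step~\ref{defDeltai1Optimized} sets
\[
  \delta_{i+1}([x]_{P_{i+1}}) = \delta_i([x]_{P_i}) \wedge \fmod{F\chi_S(c(x))}(\delta_i(S)),
\]
and I would compute its extension as follows:
\begin{align*}
  \semantics{\delta_{i+1}([x]_{P_{i+1}})}
  &= \semantics{\delta_i([x]_{P_i})} \cap \semantics{\fmod{F\chi_S(c(x))}(\delta_i(S))} \\
  &= [x]_{P_i} \cap \{y\in C \mid F\chi_{\semantics{\delta_i(S)}}(c(y)) = F\chi_S(c(x))\} \\
  &= [x]_{P_i} \cap \{y\in C \mid F\chi_S(c(y)) = F\chi_S(c(x))\},
\end{align*}
where the second step uses the induction hypothesis and \autoref{lemF2CoalgSem}, and the third step uses the induction hypothesis $\semantics{\delta_i(S)} = S$ once more. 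By \autoref{optimizedPcancel}, the right-hand side is exactly $[x]_{P_{i+1}}$.

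The proof is quite routine given the preparatory lemmas; the only genuinely substantive point is that \autoref{optimizedPcancel} (which in turn rests on cancellativity of $F$) is precisely what licenses replacing the binary modality $\fmod{F\chi_S^B(c(x))}(\delta_i(S),\beta_i(B))$ by the unary $\fmod{F\chi_S(c(x))}(\delta_i(S))$ while still recovering $[x]_{P_{i+1}}$ as the intersection of $[x]_{P_i}$ with the fibre of $F\chi_S\cdot c$. This also explains why the $\beta$-family is no longer needed: the block $B$ does not appear anywhere in the refined partition formula, so no certificate for it is required.
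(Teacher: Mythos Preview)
Your proof is correct and follows essentially the same approach as the paper's own proof: induction on~$i$, with the base case handled via \autoref{lemF1CoalgSem} and the inductive step computed using \autoref{lemF2CoalgSem} together with \autoref{optimizedPcancel} to identify the intersection $[x]_{P_i}\cap\{y\mid F\chi_S(c(y))=F\chi_S(c(x))\}$ as $[x]_{P_{i+1}}$. The paper's calculation is laid out in slightly more granular steps, but the reasoning is identical.
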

\begin{rem}
  \noindent Note that the optimized \autoref{algoCertsCancel} can also
  be implemented treating the unary modal operators in~$F2$ as first
  class citizens, in lieu of embedding them into~$F3$ as we did in
  \autoref{notationF2Mod}.  The only difference between the two
  implementation approaches w.r.t.\ the size of the formula dag is one
  edge per modality, namely the edge to the node~$\top$ from the node
  $\fmod{Fj_2(F\chi_S(c(x)))}(\delta_i(\delta_i),\top)$ that arises
  when step~\ref{defDeltai1Optimized} is expanded according to
  \autoref{notationF2Mod}.
\end{rem}

\begin{proof}[Proof of \autoref{algoPatchCorrect}]
  Induction over~$i$, the index of loop iterations.

  The definition of $\delta_0$ is identical to the definition in
  \autoref{algoCerts}, whence
  \[
    \semantics{\delta_0(S)} = S
    \qquad
    \text{for all }S\in C/P_0,
  \]
  proved completely analogously as in the proof of \autoref{algoCertsCorrect}.

  In the $i$-th iteration with chosen block $S\in C/P_i$, we
  distinguish cases on whether a block $[x]_{P_{i+1}} \in C/P_{i+1}$
  remains the same or is split into other blocks:
  \begin{itemize}%
  \item If $[x]_{P_{i+1}} = [x]_{P_{i}}$, then we have
    \[
      \semantics{\delta_{i+1}([x]_{P_{i}})}
      \overset{\text{\ref{defDeltai1Optimized}}}{=}
      \semantics{\delta_{i}([x]_{P_{i}})}
      \overset{\text{I.H.}}{=}
      [x]_{P_i} = [x]_{P_{i+1}}.
    \]
  \item If $[x]_{P_{i+1}} \neq [x]_{P_{i}}$, we compute as follows:
    \begin{align*}
      \semantics{\delta_{i+1}([x]_{P_{i+1}})}
      &=
      \semantics{\delta_{i}([x]_{P_{i}})\wedge
        \fmod{F\chi_S(c(x))}(\delta_i(S))}
      &\text{\ref{defDeltai1Optimized}}
        \\
      &=
      \semantics{\delta_{i}([x]_{P_{i}})}\cap\semantics{ \fmod{F\chi_S(c(x))}(\delta_i(S))}
      \\
      &=
      [x]_{P_{i}} \cap\semantics{ \fmod{F\chi_S(c(x))}(\delta_i(S))}
      &
      \text{(I.H.)}
      \\
      &=
      [x]_{P_{i}} \cap
      \{x'\in C\mid F\chi_{\semantics{\delta_i(S)}}(c(x'))
      = F\chi_S(c(x)) \}
      &\text{(\autoref{lemF2CoalgSem})}
      \\ &=
      [x]_{P_i} \cap
      \{x'\in C\mid F\chi_{S}(c(x'))
      = F\chi_S(c(x)) \}
      &\text{(I.H.)}
      \\ &=
       [x]_{P_i} \cap
       \{x'\in C\mid (x,x') \in \ker(F\chi_{S}\cdot c) \}
       &\text{(def.~$\ker$)}
      \\ &=
       [x]_{P_i} \cap
       [x]_{F\chi_{S}\cdot c}
       &\text{(def.~$[x]_{R}$)}
      \\ &= [x]_{P_{i+1}}. %
    \end{align*}
    The last step is follows from
    $P_{i+1} = P_i\cap \ker(F\chi_S\cdot c)$ (see \autoref{optimizedPcancel}).
    \qedhere
  \end{itemize}
\end{proof}
\noindent
The formulae resulting from the optimized construction involve only
$\wedge$, $\top$, and modalities from the set~$F2$ (or $F3$ with the
second parameter fixed to $\top$), which we term
\emph{$F2$-modalities}. Thus, Hennessy-Milner Theorem
(\autoref{hennessyMilner}) can be sharpened for cancellative functors
as follows.
\begin{corollary}
  For a zippable and cancellative set functor~$F$, states in a finite
  $F$-coalgebra are behaviourally equivalent iff they agree on modal
  formulae built using $\top$, $\wedge$, and unary $F2$-modalities.
\end{corollary}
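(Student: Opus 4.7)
The plan is to get the forward implication from adequacy and the backward one from the correctness of the optimized algorithm built up in this section.

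For the \emph{forward direction}, \autoref{bisimInvariant} gives the invariance of any coalgebraic modal formula under behavioural equivalence. Because unary $F2$-modalities are, by \autoref{notationF2Mod}, syntactic sugar for particular binary $F3$-modalities, they are honest predicate liftings; so behaviourally equivalent states satisfy exactly the same formulae built from $\top$, $\wedge$, and unary $F2$-modalities.

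For the \emph{backward direction}, I would fix a finite $F$-coalgebra $(C,c)$ and states $x,y\in C$ agreeing on every formula in the advertised fragment, and run \autoref{algoCertsCancel} on $(C,c)$. Since $F$ is zippable, the underlying \autoref{coalgPT} terminates with $P_i = \mathord{\sim}$, and \autoref{algoPatchCorrect} provides, for each $\sim$-class $[z]_\sim$, a certificate $\delta_i([z]_\sim)$ with $\semantics{\delta_i([z]_\sim)} = [z]_\sim$. Applying this at $[x]_\sim$ gives a formula that holds at $x$ and whose extension equals $[x]_\sim$; since $x$ and $y$ agree on it, $y\in[x]_\sim$ and hence $x\sim y$.

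It remains to verify, by inspecting \autoref{algoCertsCancel}, that every $\delta_i(S)$ really lives in the restricted fragment. The inductive step \ref{defDeltai1Optimized} only ever introduces a $\wedge$ together with a unary $F2$-modality applied to a previously constructed $\delta_i(S)$, so it is unproblematic. The genuine bookkeeping point—and the only mild obstacle—is the base case: the initial formula $\fmod{F!(c(x))}$ is defined via \autoref{notationF1Mod} as a binary $F3$-modality of the form $\fmod{Fj_1(F!(c(x)))}(\top,\top)$. I would observe that $j_1 = j_2\cdot k$ for the map $k\colon 1\to 2$ sending $0$ to $1$, so that $Fj_1 = Fj_2\cdot Fk$ and hence
\[
  \fmod{F!(c(x))} \;=\; \fmod{Fj_2(Fk(F!(c(x))))}(\top,\top) \;=\; \fmod{Fk(F!(c(x)))}(\top)
\]
in the unary $F2$-notation of \autoref{notationF2Mod}. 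This places $\delta_0$ in the fragment and, together with the inductive step, shows that every $\delta_i(S)$ is built only from $\top$, $\wedge$, and unary $F2$-modalities, finishing the argument.
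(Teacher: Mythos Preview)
Your proof is correct and follows essentially the same approach as the paper, which derives the corollary directly from \autoref{algoPatchCorrect} and the observation (stated just before the corollary) that the formulae produced by \autoref{algoCertsCancel} involve only $\top$, $\wedge$, and $F2$-modalities. You are in fact more careful than the paper in checking that the base-case formula $\fmod{F!(c(x))}$ really is an $F2$-modality, via the factorization $j_1 = j_2\cdot k$; the paper leaves this implicit.
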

\noindent The certificates thus computed are reduced to roughly half
the size compared to \autoref{algoCerts}; the asymptotic run time and
formula size (\autoref{complexityAnalysis}) remain unchanged.

\section{Domain-Specific Certificates}%
\label{domainSpecific}
On a given specific system type, one is typically interested in
certificates and distinguishing formulae expressed via modalities
whose use is established in the respective domain, e.g.~$\Box$
and~$\Diamond$ for transition systems. We next describe how the
generic $F3$ modalities can be rewritten to domain-specific ones in a
postprocessing step. The domain-specific modalities will not always
be equivalent to $F3$-modalities, but still yield
certificates.  \twnote{}

\begin{defn}\label{remBoolCombination}
  The \emph{Boolean closure} $\bar\Lambda$ of a modal signature
  $\Lambda$ has as $n$-ary modalities
  propositional combinations of atoms of the form
  $\hearts(i_1,\dots,i_k)$, for $\arity{\hearts}{k}\in\Lambda$, where
  $i_1,\dots,i_k$ are propositional combinations of elements of
  $\{1,\ldots,n\}$. Such a modality~$\arity{\lambda}{n}$ is
  interpreted by predicate liftings
  $\semantics{\lambda}_X\colon (2^X)^n \to FX$ defined inductively in
  the obvious way.
\end{defn}
\noindent For example, the Boolean closure of
$\Lambda = \{\arity{\Diamond}{1}\}$ contains the unary modality
$\Box = \neg\Diamond\neg 1$.

\begin{defn}\label{domainCert}
  Given a modal signature $\Lambda$ for a functor $F$, a
  \emph{domain-specific interpretation} consists of functions
  $\tau\colon F1 \to \bar\Lambda$ and
  $\lambda\colon F3 \to \bar\Lambda$ assigning to each $o \in F1$ a
  nullary modality~$\tau_o$ and to each $t\in F3$ a binary
  modality~$\lambda_t$ such that the predicate liftings
  $\semantics{\tau_o}_X\in 2^{FX}$ and
  $\semantics{\lambda_t}_X\colon (2^X)^2 \to 2^{FX}$ satisfy
  \[
    \semantics{\tau_o}_1 = \{o\}
    \quad\text{(in $2^{F1}$)}
    \quad\text{ and }\quad
    [t]_{F\chi_{\{1,2\}}} \cap \semantics{\lambda_t}_3(\{2\},\{1\})
    =
    \{t\}
    \quad\text{(in $2^{F3}$)}.
  \]
  (Recall that $\chi_{\{1,2\}}\colon 3\to 2$ is the characteristic
  function of $\{1,2\}\subseteq 3$, and
  $[t]_{F\chi_{\{1,2\}}}\subseteq F3$ denotes the equivalence class of
  $t$ w.r.t.~$F\chi_{\{1,2\}}\colon F3 \to F2$.)
\end{defn}
\noindent Thus,~$\tau_o$ holds precisely at states with output
behaviour $o\in F1$.  Intuitively, $\lambda_t(\delta,\rho)$ describes
the refinement step of a predecessor block $T$ when splitting
$B:=\semantics{\delta}\cup\semantics{\rho}$ into
$S:=\semantics{\delta}$ and $B\setminus S:=\semantics{\rho}$
(\autoref{fig:parttree}), which translates into the arguments $\{2\}$
and $\{1\}$ of~$\semantics{\lambda_t}_3$. In the refinement step, we
know from previous iterations that all elements have the same
behaviour w.r.t.~$B$. This is reflected in the intersection with
$[t]_{F\chi_{\set{1,2}}}$. The given condition on~$\lambda_t$ thus
guarantees that $\lambda_t$ characterizes $t\in F3$ uniquely, but only
within the equivalence class representing a predecessor block.  Thus,
$\lambda_t$ can be much smaller than equivalents of $\fmod{t}$
(cf.~\autoref{examplePowF3Mod}):

\begin{expl}\label{exDomainSpecInt}
  We provide examples for set functors of interest; the verification
  that these are indeed domain-specific interpretations follows in
  \autoref{L:correct} further below.
  \begin{enumerate}
  \item\label{F3Pow} For $F=\Pow$, we have a domain-specific
    interpretation over the modal signature
    $\Lambda = \set{ \Diamond/1}$.  For $\emptyset, \{0\}\in\Pow 1$,
    take $\tau_\emptyset = \neg\Diamond\top$ and
    $\tau_{\set 0} = \Diamond\top$. For $t \in \Pow 3$, we put
  \[
    \begin{array}{r@{\,}l@{\quad}l@{\qquad\qquad}r@{\,}l@{\quad}l}
      \lambda_t(\delta,\rho)
      &=
      \neg\Diamond \rho &\text{if }2\in t\notni 1
    &
    \lambda_t(\delta,\rho) &=
        \Diamond \delta \wedge \Diamond\rho &\text{if }2\in t \ni 1
                          \\
    \lambda_t(\delta,\rho) &=
        \neg\Diamond \delta &\text{if }2\notin t\ni 1
                          &
    \lambda_t(\delta,\rho) &=
        \top &\text{if }2\not\in t \notni 1.
    \end{array}
  \]
  The certificates obtained via this translation are precisely the
  ones generated in the example using the Paige-Tarjan
  algorithm, cf.~\eqref{edgeCasesFormula}, with $\rho$ in lieu of
  $\beta\wedge\neg\delta$. %
\item\label{dsiSignature} For a signature (functor) $\Sigma$, take
  $\Lambda=\{\arity{\sigma}{0}\mid \arity{\sigma}{n}\in \Sigma\}\cup
  \{\arity{\gradI}{1}\mid I\in\Powf (\N)\}$. We
  interpret~$\Lambda$ by the predicate liftings
    \begin{align*}
      \semantics{\sigma}_X & = \{\sigma(x_1, \ldots, x_n) \mid x_1,
      \ldots, x_n \in X\} \subseteq \Sigma X, \\
      \semantics{\gradI}(S) &= \{\sigma(x_1,\ldots,x_n) \in \Sigma X \mid
                              \forall i \in \N\colon i\in I \leftrightarrow (1\le i\le n~\wedge~ x_i \in S)  \}.
    \end{align*}
    Intuitively,~$\sigma$ states that the next operation symbol
    is~$\sigma$, and $\gradI\,\phi$ states that the $i$th successor
    satisfies $\phi$ iff $i\in I$.  We then have a domain-specific
    interpretation $(\tau,\lambda)$ given by
    \[
      \begin{array}{r@{\,}l@{\qquad}l}
        \tau_o &= \sigma &
        \text{for $o = \sigma(0,\ldots,0) \in \Sigma 1$, and}\\
        \lambda_t(\delta,\rho) &= \grad{I}\delta
        & \text{for $t = \sigma(x_1,\ldots,x_n) \in \Sigma 3$ and $I =
          \{i\in \{1,\ldots,n\}\mid x_i = 2\}$}.
      \end{array}
    \]
  \item\label{dsiMonoid} For a monoid-valued functor~$M^{(-)}$, take
    $\Lambda = \{ \arity{\monmod{m}}{1} \mid m \in M \}$, interpreted
    by the predicate liftings
    $\semantics{{\monmod{m}}}_X\colon 2^X\to 2^{M^{(X)}}$ given by
    \[
      \semantics{{\monmod{m}}}_X(S)= \{ \mu \in M^{(X)} \mid
      m =
      \textstyle\sum_{x \in S}\mu(x)\}.
    \]
    A formula $\monmod{m}\,\delta$ thus states that the accumulated weight
    of the successors satisfying~$\delta$ is exactly~$m$. A
    domain-specific interpretation $(\tau,\lambda)$ is then given by
    \[
      \begin{array}{r@{\,}l@{\qquad}l}
        \tau_o &= \monmod{o(0)}\top & \text{for $o \in M^{(1)}$, and} \\
        \lambda_t(\delta,\rho)
        &= \grad{t(2)}\,\delta \wedge  \grad{t(1)}\,\rho
        &\text{for $t\in M^{(3)}$.}
      \end{array}
    \]
    In case~$M$ is cancellative, we can also simply put
    $\lambda_t(\delta,\rho) = \grad{t(2)}\,\delta$.

  \item\label{dsiMarkov} For labelled Markov chains,
    i.e.~$FX=(\Dist X +1)^A$, let
    $\Lambda = \{\arity{\fpair{a}_p}{1}\mid a\in A, p\in [0,1]\}$,
    where $\fpair{a}_p\phi$ denotes that on input $a$, the next state
    will satisfy~$\phi$ with probability at least~$p$, as in cited work by
    Desharnais \etal~\cite{desharnaisEA02}. This gives rise to the
    interpretation:
    \[
      \tau_o =
      \bigwedge_{\substack{a\in A\\ o(a) \in \Dist 1}} \fpair{a}_{1}\top
      \wedge
      \bigwedge_{\substack{a\in A\\ o(a) \in 1}} \neg \fpair{a}_{1}\top,
      \qquad\qquad
      \lambda_t(\delta,\rho)
      = \bigwedge_{\substack{a\in A\\t(a)\in \Dist 3}}(\fpair{a}_{t(a)(2)}\,\delta\wedge \fpair{a}_{t(a)(1)}\,\rho).
    \]
  \end{enumerate}
\end{expl}
\noindent To ease the verification of the requisite properties of
interpretations, we will now show that for cancellative~$F$,
domain-specific interpretations can be derived from a simpler kind of
interpretation, one where the set $F3$ in \autoref{exDomainSpecInt} is
replaced with $F2$.

\begin{defn}%
  \label{domainCertSimple}
  Given a modal signature~$\Lambda$ for a functor~$F$, a \emph{simple
    domain-specific interpretation} consists of functions
  $\tau\colon F1 \to \bar\Lambda$ and
  $\kappa\colon F2 \to \bar\Lambda$ assigning a nullary
  modality~$\tau_o$ to each $o \in F1$ and a unary modality $\kappa_s$
  to each $s \in F2$ such that the predicate liftings
  $\semantics{\tau_o}_X\in 2^{FX}$ and
  $\semantics{\kappa_s}\colon 2^X \to 2^{FX}$ satisfy
  \[
    \semantics{\tau_o}_1 = \{o\}
    \quad\text{(in $2^{F1}$)}
    \qquad\text{and}\qquad
    [s]_{F!} \cap \semantics{\kappa_s}_2(\{1\}) = \{s\}
    \qquad\text{(in $2^{F2}$).}
  \]
\end{defn}

\begin{proposition}\label{domainCertCancellative}
  Let $\Lambda$ be a modal signature for a cancellative functor~$F$,
  and $(\tau,\kappa)$ a simple domain-specific interpretation. Define
  $\lambda\colon F3 \to \bar\Lambda$ by
  $\lambda_{t}(\delta,\rho) =
  \kappa_{F\chi_{\set{2}}(t)}(\delta)$. Then $(\tau, \lambda)$ is a
  domain-specific interpretation.
\end{proposition}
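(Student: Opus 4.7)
The plan is to verify the two conditions of Definition \ref{domainCert} for the pair $(\tau,\lambda)$. The condition on $\tau_o$ holds verbatim because $(\tau,\kappa)$ is already a simple domain-specific interpretation, so all the work lies in proving
\[
  [t]_{F\chi_{\{1,2\}}} \cap \semantics{\lambda_t}_3(\{2\},\{1\}) = \{t\}
  \qquad \text{for every } t \in F3.
\]
First I would unfold $\semantics{\lambda_t}_3(\{2\},\{1\})$. Since $\lambda_t(\delta,\rho) = \kappa_{s}(\delta)$ with $s := F\chi_{\{2\}}(t)$, the second argument plays no role. Using the naturality condition \eqref{eq:naturality} for $\semantics{\kappa_s}$ along the map $\chi_{\{2\}}\colon 3\to 2$ (noting that $\chi_{\{2\}}^{-1}[\{1\}] = \{2\}$) yields
\[
  \semantics{\lambda_t}_3(\{2\},\{1\}) \;=\; \semantics{\kappa_s}_3(\{2\}) \;=\; (F\chi_{\{2\}})^{-1}\big[\semantics{\kappa_s}_2(\{1\})\big].
\]

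Next, take any $t' \in F3$ in the intersection on the left-hand side. Then $F\chi_{\{1,2\}}(t') = F\chi_{\{1,2\}}(t)$ and $F\chi_{\{2\}}(t') \in \semantics{\kappa_s}_2(\{1\})$. The key observation is that $!\cdot\chi_{\{1,2\}} = !\cdot \chi_{\{2\}} = \mathord!\colon 3\to 1$, so applying $F!$ to the first equation gives $F!(F\chi_{\{2\}}(t')) = F!(F\chi_{\{2\}}(t)) = F!(s)$; that is, $F\chi_{\{2\}}(t') \in [s]_{F!}$. Combined with the membership in $\semantics{\kappa_s}_2(\{1\})$, the simple domain-specific interpretation condition forces $F\chi_{\{2\}}(t') = s = F\chi_{\{2\}}(t)$. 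Together with $F\chi_{\{1,2\}}(t') = F\chi_{\{1,2\}}(t)$, cancellativity of $F$ (Definition \ref{funCancellative}), i.e.\ injectivity of $\fpair{F\chi_{\{1,2\}},F\chi_{\{2\}}}$, yields $t' = t$.

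For the reverse inclusion, $t$ lies trivially in $[t]_{F\chi_{\{1,2\}}}$, and membership in $\semantics{\lambda_t}_3(\{2\},\{1\})$ reduces via the unfolding above to $s \in \semantics{\kappa_s}_2(\{1\})$, which holds because the simple DSI condition says $[s]_{F!} \cap \semantics{\kappa_s}_2(\{1\}) = \{s\}$, hence in particular $s$ is in that intersection. I expect the only mildly delicate step to be the naturality rewriting together with the identity $!\cdot\chi_{\{1,2\}} = !\cdot\chi_{\{2\}}$, which is what converts the three-valued equivalence $[t]_{F\chi_{\{1,2\}}}$ into an $F!$-equivalence on $F2$ so that the simple DSI condition becomes applicable; everything else is a direct appeal to cancellativity.
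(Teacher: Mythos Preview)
Your proof is correct and follows essentially the same route as the paper: both set $s = F\chi_{\{2\}}(t)$, use naturality of $\semantics{\kappa_s}$ along $\chi_{\{2\}}\colon 3\to 2$ to rewrite $\semantics{\lambda_t}_3(\{2\},\{1\})$ as a preimage under $F\chi_{\{2\}}$, invoke $!\cdot\chi_{\{1,2\}} = !\cdot\chi_{\{2\}}$ to pass from $[t]_{F\chi_{\{1,2\}}}$ to $[s]_{F!}$, apply the simple DSI condition, and conclude by cancellativity. The only cosmetic difference is that the paper packages the argument as a single chain of equivalences (thus getting both inclusions at once), whereas you treat the two inclusions separately.
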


\begin{proof}
  Given $t \in F3$, we put $s = F\chi_{\set{2}}(t) \in F2$.  We have
  to show that
  \[
    [t]_{s} \cap \semantics{\lambda_t}_3(\{2\},\{1\}) = \{t\}
    \qquad\text{in $2^{F3}$}.
  \]
  By the naturality of the predicate lifting $\semantics{\kappa_s}$,
  the following square commutes (recall that $2^{(-)}$ is
  contravariant):
  \begin{equation}\label{diagNat}
    \begin{tikzcd}
      2^2
      \arrow{r}{\semantics{\kappa_s}_2}
      \arrow{d}[swap]{2^{\chi_{\set{2}}}}
      & 2^{F2}
      \arrow{d}{2^{F\chi_{\set{2}}}}
      \\
      2^3
      \arrow{r}{\semantics{\kappa_s}_3}
      & 2^{F3}
    \end{tikzcd}
  \end{equation}
  We thus have
  \begin{align*}
    \semantics{\lambda_t}_3(\set{2},\set{1})
    &=
    \semantics{\kappa_s}_3(\set{2})
    & \text{(def.~$\lambda_t$)}
    \\
    &= \semantics{\kappa_s}_3(\chi_{\set{2}}^{-1}[\set{1}])
    &\text{(def.~$\chi_{\set{2}}$)}
    \\
    &= \semantics{\kappa_s}_3(2^{\chi_{\set{2}}}(\set{1}))
    &\text{(def.~$2^{(-)}$)}
    \\
    &= 2^{F\chi_{\set{2}}}(\semantics{\kappa_s}_2(\set{1}))
    &\text{(by~\eqref{diagNat})}
    \\
    &= \set[\big]{t'\in F3\mid F\chi_{\set{2}}(t')\in
    \semantics{\kappa_s}_2(\set{1})}
    &\text{(def.~$2^{(-)}$)\rlap{.}}
  \end{align*}
  The square
  \[
    \begin{tikzcd}
      3 \ar{r}{\chi_{\{1,2\}}}
      \ar{d}[swap]{\chi_{\{2\}}}
      &
      2
      \ar{d}{!}
      \\
      2
      \ar{r}{!}
      &
      1
    \end{tikzcd}
  \]
  trivially commutes. Hence, given $t' \in [t]_{F\chi_{\{1,2\}}}$,
  that is, $F\chi_{\{1,2\}}(t') = F\chi_{\{1,2\}}(t)$, postcomposing
  this equation with $F!$ and using the above commutative square under % chktex 40
  $F$ we see that
  \[
    F! \cdot F\chi_{\{2\}}(t') = F! \cdot
    F_{\chi_{\{2\}}}(t),
  \]
  which yields
  \begin{equation}\label{eq:F-bang}
    F\chi_{\{2\}}(t') \in [F\chi_{\{2\}}(t)]_{F!}.
  \end{equation}
  Using this, we have for every $t'\in F3$ the following chain of equivalences%
  \begin{align*}
    &t' \in [t]_{F\chi_{\{1,2\}}} \cap \semantics{\lambda_t}_3(\{2\},\{1\})
      \\
    \Leftrightarrow~&  t'\in [t]_{F\chi_{\{1,2\}}}\text{ and }t'\in\semantics{\lambda_t}_3(\{2\},\{1\})
    \\
    \Leftrightarrow~&  t'\in [t]_{F\chi_{\{1,2\}}}\text{ and }F\chi_{\set{2}}(t')\in \semantics{\kappa_s}_2(\set{1})
    \tag{by the above calculation}
    \\
    \Leftrightarrow~&  t'\in [t]_{F\chi_{\{1,2\}}}\text{ and } F\chi_{\set{2}}(t')\in [F\chi_{\set{2}}(t)]_{F!}\cap\semantics{\kappa_s}_2(\set{1})
    \tag{by~\eqref{eq:F-bang} since $t' \in [t]_{F\chi_{\{1,2\}}}$} % chktex 1
    \\
    \Leftrightarrow~&  t'\in [t]_{F\chi_{\{1,2\}}}\text{ and } F\chi_{\set{2}}(t')\in [s]_{F!}\cap\semantics{\kappa_s}_2(\set{1})
        \tag{def.~$s$}
    \\
    \Leftrightarrow~&  t'\in [t]_{F\chi_{\{1,2\}}}\text{ and } F\chi_{\set{2}}(t')\in \{s\}
        \tag{assumption on $\kappa_s$}
    \\
    \Leftrightarrow~& t'\in [t]_{F\chi_{\{1,2\}}}\text{ and } F\chi_{\set{2}}(t')\in \{F\chi_{\set{2}}(t)\}
        \tag{def.~$s$}
    \\
    \Leftrightarrow~& F\chi_{\{1,2\}}(t') = F\chi_{\{1,2\}}(t)\text{ and } F\chi_{\set{2}}(t') = F\chi_{\set{2}}(t)
    \\
    \Leftrightarrow~& \fpair{F\chi_{\{1,2\}},F\chi_{\{2\}}}(t') = \fpair{F\chi_{\{1,2\}},F\chi_{\{2\}}}(t)
    \tag{def.~$\fpair{-,-}$}
                    \\
    \Leftrightarrow~& t' = t
                      \tag{$F$ cancellative}
  \end{align*}
  For the last step, recall that
  $\fpair{F\chi_{\{1,2\}},F\chi_{\{2\}}}$ is injective because $F$ is
  cancellative.
\end{proof}

\begin{lemma}\label{L:correct}
  \autoref{exDomainSpecInt} correctly defines domain-specific
  interpretations.
\end{lemma}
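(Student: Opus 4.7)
The plan is to verify the two defining conditions of \autoref{domainCert}, namely $\semantics{\tau_o}_1 = \set{o}$ and $[t]_{F\chi_{\set{1,2}}} \cap \semantics{\lambda_t}_3(\set{2},\set{1}) = \set{t}$, separately for each of the four items. For items~(\ref{dsiSignature}), (\ref{dsiMonoid}) with cancellative monoid, and~(\ref{dsiMarkov}), I will reduce the second condition to a simple interpretation via \autoref{domainCertCancellative}, since these functors are cancellative: signature functors by \autoref{cancellativeClosure}; $\Dist$ as a subfunctor of $\R_{\ge 0}^{(-)}$; and products $G^A$ and coproducts $G+1$ preserve cancellativity, so $(\Dist+1)^A$ is cancellative as well.

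For item~(\ref{F3Pow}), the first condition holds because $\semantics{\Diamond\top}_1 = \set{\set{0}}$, whose complement in $2^{\Pow 1}$ is $\set\emptyset$. For the second condition I would split on the four cases for $(1\in t, 2\in t)$ and note that $[t]_{\Pow\chi_{\set{1,2}}}$ already fixes whether $0 \in t'$ and whether $t' \cap \set{1,2}$ is empty, while the four alternative formulas $\neg\Diamond\rho$, $\Diamond\delta\wedge\Diamond\rho$, $\neg\Diamond\delta$, $\top$ pin down the individual membership of $1$ and $2$ in $t'$ in each case.

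For item~(\ref{dsiSignature}), the computation $\semantics{\sigma}_1 = \set{\sigma(0,\ldots,0)} = \set{o}$ gives the $\tau$-clause. For the $\lambda$-clause I would introduce the simple interpretation $\kappa_s(\delta) = \grad{J}\delta$ with $J = \set{i\mid y_i = 1}$ for $s = \sigma(y_1,\ldots,y_n) \in \Sigma 2$ and verify $[s]_{F!} \cap \semantics{\kappa_s}_2(\set{1}) = \set{s}$: the kernel of $F!$ fixes the operation symbol $\sigma$ and its arity $n$, while $\grad{J}\delta$ pins down exactly the positions at which $y_i = 1$. The formula in \autoref{exDomainSpecInt}(\ref{dsiSignature}) is then precisely $\kappa_{F\chi_{\set{2}}(t)}(\delta)$, so \autoref{domainCertCancellative} applies. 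For item~(\ref{dsiMonoid}), $\semantics{\grad{o(0)}\top}_1 = \set{\mu\mid \mu(0) = o(0)} = \set{o}$ yields the $\tau$-clause, and for the $\lambda$-clause one checks directly that $\grad{t(2)}\delta \wedge \grad{t(1)}\rho$ pins down $\mu(1) = t(1)$ and $\mu(2) = t(2)$, while $[t]_{M^{(\chi_{\set{1,2}})}}$ fixes $\mu(0) = t(0)$ and $\mu(1)+\mu(2) = t(1)+t(2)$ (the latter becoming redundant). For cancellative $M$ the simpler $\kappa_s(\delta) = \grad{s(1)}\delta$ works via \autoref{domainCertCancellative}: the $F!$-kernel fixes $s'(0)+s'(1) = s(0)+s(1)$, so $s'(1) = s(1)$ plus cancellation gives $s'(0) = s(0)$.

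The main subtlety will be item~(\ref{dsiMarkov}), where the modality $\fpair{a}_p$ is interpreted as \emph{at least}~$p$, so $\semantics{\lambda_t}_3(\set{2},\set{1})$ only supplies lower bounds $t'(a)(2) \ge t(a)(2)$ and $t'(a)(1) \ge t(a)(1)$ on positions $a$ with $t(a) \in \Dist 3$. The key observation is that membership in $[t]_{F\chi_{\set{1,2}}}$ separately forces $t'(a) \in \Dist 3 \Leftrightarrow t(a) \in \Dist 3$, $t'(a)(0) = t(a)(0)$, and $t'(a)(1) + t'(a)(2) = t(a)(1) + t(a)(2)$; combined with the two lower bounds, this last equality forces equality in both, yielding $t' = t$. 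For the $\tau$-clause, I would use that $\Dist 1$ and $1$ are each singletons, so the conjunctions $\fpair{a}_1\top$ resp.~$\neg\fpair{a}_1\top$ select exactly whether $f(a) \in \Dist 1$ or $f(a) \in 1$, which determines $f = o$ pointwise.
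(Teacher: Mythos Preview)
Your approach is essentially the same as the paper's, and the detailed verifications you sketch are correct. There is one inconsistency worth flagging: in your opening paragraph you announce that item~(\ref{dsiMarkov}) will be handled via \autoref{domainCertCancellative}, but the $\lambda_t$ given in \autoref{exDomainSpecInt}(\ref{dsiMarkov}) genuinely uses the second argument~$\rho$ (through the conjuncts $\fpair{a}_{t(a)(1)}\rho$), so it is \emph{not} of the form $\kappa_{F\chi_{\set{2}}(t)}(\delta)$ produced by \autoref{domainCertCancellative}. Since the lemma asks you to verify the \emph{given} interpretation, the reduction to a simple interpretation does not apply here, and a direct check is required --- which is precisely what you then do in your final paragraph, correctly exploiting that the equality $t'(a)(1)+t'(a)(2)=t(a)(1)+t(a)(2)$ together with the two inequalities forces pointwise equality. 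The paper proceeds in the same way for this item. For the remaining items your plan matches the paper: direct case analysis for~$\Pow$, and the simple-interpretation route via \autoref{domainCertCancellative} for signature functors and for cancellative monoids.
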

\begin{proof}
  We verify the items in \autoref{exDomainSpecInt} separately:
  \begin{enumerate}
  \item%
    Recall that for $t\in \Pow 3$, we have defined
  \[
    \lambda_t(\delta,\rho) ~=~
    \begin{cases}
      \neg\Diamond \rho &\text{if }2\in t\notni 1 \\
      \Diamond \delta \wedge \Diamond\rho &\text{if }2\in t \ni 1 \\
      \neg\Diamond \delta &\text{if }2\not\in t\ni 1  \\
      \top &\text{if }2\not\in t \notni 1 \\
    \end{cases}
  \]
  Evaluating $\semantics{\lambda_t}_3(\delta,\rho)$ on $(\{2\},\{1\})$
  for the above cases, we obtain
    \begin{align*}
      &\semantics{\neg\Diamond\rho}_3(\set{2},
        \set{1}) = \{t'\in \Pow 3 \mid 1\notin t' \},
        && \text{if }2\in t \notni 1                      \\
       & \semantics{
        \Diamond\delta\wedge\Diamond\rho}_3(\set{2}, \set{1}) =
      \{t'\in \Pow 3 \mid 2\in t'\text{ and }1\in t' \},
       && \text{if }2\in t \ni 1              \\
      & \semantics{
                   \neg\Diamond
                         \delta}_3(\set{2},
                     \set{1}) = \{t'\in \Pow 3 \mid 2\notin t' \}
      && \text{if }2\notin t \ni 1                 \\
      &  \semantics{\Diamond\delta\wedge\Diamond\rho}_3(\set{2},
                          \set{1}) = \Pow 3.
      &&\text{if }2\notin t \notni 1
    \end{align*}
    Intersecting with $[t]_{\Pow\chi_{\set{1,2}}}$ yields $\set{t}$ as
    desired in all cases:
    \[
      \begin{array}{@{}l|l|l@{\,\cap\,}l@{\,=\,}l@{}} % chktex 44
        \toprule
        t & \lambda_t(\delta,\rho) & \semantics{\lambda_t}_3(\set{2},\set{1}) & [t]_{\Pow\chi_{\set{1,2}}} & \set{t}
        \\
        \midrule
        \set{2} &\neg\Diamond \rho&\{t'\in \Pow 3 \mid 1\notin t' \} & \set[\big]{\set{2},\set{1},\set{2,1}}& \set[\big]{\set{2}}
        \\
        \set{2,0} &\neg\Diamond \rho&\{t'\in \Pow 3 \mid 1\notin t' \}
        &\set[\big]{\set{2,0},\set{1,0},\set{2,1,0}}
        & \set[\big]{\set{2,0}}\\
        \set{2,1} &\Diamond\delta\wedge\Diamond\rho& \{t'\in \Pow 3 \mid 2\in t'\text{ \& }1\in t' \}
        & \set[\big]{\set{2},\set{1},\set{2,1}} & \set[\big]{\set{2,1}}\\
        \set{2,1,0} &\Diamond\delta\wedge\Diamond\rho& \{t'\in \Pow 3 \mid 2\in t'\text{ \& }1\in t' \}
        &\set[\big]{\set{2,0},\set{1,0},\set{2,1,0}} & \set[\big]{\set{2,1,0}} \\
        \set{1} &\neg\Diamond \delta& \{t'\in \Pow 3 \mid 2\notin t' \}
        & \set[\big]{\set{2},\set{1},\set{2,1}} & \set[\big]{\set{1}}\\
        \set{1,0} &\neg\Diamond \delta& \{t'\in \Pow 3 \mid 2\notin t' \}
        & \set[\big]{\set{2,0},\set{1,0},\set{2,1,0}} & \set[\big]{\set{1,0}} \\
        \set{0} &\top& \Pow 3 & \set[\big]{\set{0}} & \set[\big]{\set{0}} \\
        \emptyset &\top& \Pow 3 & \set[\big]{\emptyset} & \set[\big]{\emptyset} \\
        \bottomrule
      \end{array}
    \]
    Hence, $\semantics{\lambda_t}_3(\set{2},\set{1}) \cap [t]_{\Pow\chi_{\set{1,2}}} = \set{t}$.
\item %
  For a signature functor $\Sigma$, we first define a helper map
  $v\colon \Sigma 2\to \Powf \N$ by
  \[
    v(\sigma(x_1,\ldots,x_n))= \{i\in \N\mid x_i = 1\}.
  \]
  The predicate lifting for the (unary)
  modal operator $\gradI$, for $I\subseteq \N$, is obtained from
  \autoref{predLiftYoneda} by the predicate $f_I\colon \Sigma 2\to 2$
  corresponding to the subset
  \[
    f_I = \{t\in \Sigma 2\mid v(t) = I\}.
  \]
  This gives rise to the predicate lifting
  \begin{align*}
    \semantics{\gradI}_X(P) &=
    \{t\in \Sigma X\mid F\chi_P(t) \in f_I \}
    &\text{(\autoref{predLiftYoneda})}
    \\ &=
    \{t\in \Sigma X\mid v(F\chi_P(t)) = I\}
    &\text{(def.~$f_I$)\rlap{.}}
  \end{align*}
  Similarly, for the nullary modal operator $\sigma$ given by
  $\arity{\sigma}{n}\in \Sigma$, take the predicate
  $\Sigma 1 \to 2$ corresponding to the subset
    \[
      g_\sigma = \{\sigma(0,\ldots,0)\} \subseteq \Sigma 1
    \]
    Since $1 = 2^0$, this gives rise to the $0$-ary predicate lifting
    \begin{align*}
      \semantics{\sigma}_X &=
      \{t\in \Sigma X\mid F!(t) \in g_\sigma \}
      &\text{(\autoref{predLiftYoneda})}
      \\ &=
      \big\{t\in \Sigma X\mid F\chi_P(t) \in \{\sigma(0,\ldots,0)\}\big\}
      &\text{(def.~$g_\sigma$)}
      \\ &= \{\sigma(x_1,\ldots,x_n)\mid x_1,\ldots,x_n \in X\}.
    \end{align*}
     We now put
    \[
      \kappa_s(\delta) := \grad{v(s)}\delta
      \qquad\text{for }s\in \Sigma 2,
    \]
    and we proceed to show that this yields a simple domain-specific
    interpretation (\autoref{domainCertSimple}) and that it induces
    the desired $\lambda_t$ via \autoref{domainCertCancellative}:
    \[
      \lambda_{\sigma(x_1,\ldots,x_n)}(\delta,\rho) = \grad{\{i\in \N\mid x_i = 2\}}\delta
      \qquad\text{for }\sigma(x_1,\ldots,x_n) \in \Sigma 3.
    \]
    There is nothing to show for $\tau_o := \sigma$ since it has the correct
    semantics by the definition of~$\semantics{\sigma}_1$.
    Next note that the map
    $\fpair{\Sigma !,v}\colon \Sigma 2\to \Sigma 1\times \Powf\N$ is
    injective because for every $s\in \Sigma 2$, the operation symbol
    and all its parameters (from $2$) are uniquely determined by
    $\Sigma !(s)$ and $v(s)$. %
    Recalling that
    $[s]_{\Sigma !} = \{s' \in \Sigma 2\mid \Sigma !(s) = \Sigma
    !(s')\}$, we now compute%
    \[
      \renewcommand{\arraystretch}{1.2}
      \begin{array}{@{}r@{\,}l@{}r}
      [s]_{\Sigma!} \cap \semantics{\kappa_s}_2(\set{1})
      &= \{ s'\in \Sigma 2\mid s'\in [s]_{\Sigma!}\text{ and }s'\in \semantics{\kappa_s}_2(\set{1})\}
      \\&= \{ s'\in \Sigma 2\mid \Sigma!(s) = \Sigma!(s')\text{ and }s'\in \semantics{\grad{v(s)}}_2(\set{1})\}
      \\&= \{ s'\in \Sigma 2\mid \Sigma!(s) = \Sigma!(s')\text{ and }
      v(\Sigma\chi_{\set{1}}(s')) = v(s)\}
      & \text{(def.~$\semantics{\grad{v(s)}}_2$)}
      \\&= \{ s'\in \Sigma 2\mid \Sigma!(s) = \Sigma!(s')\text{ and }
      v(s') = v(s)\}
      &
      \text{($\id_2=\chi_{\set{1}}\colon 2\to 2$)}
      \\&= \{ s'\in \Sigma 2\mid \fpair{\Sigma!,v}(s) = \fpair{\Sigma!,v}(s') \}
      &\text{(def.~$\fpair{-,-}$)}
      \\&= \{ s\}
      &\text{($\fpair{\Sigma!,v}$ injective)\rlap{.}}
    \end{array}
    \]
  \item%
    For every $m\in M$, let $f_m\colon
    M^{(2)}\to 2$ be the predicate corresponding to the
    subset
    \[
      \{\mu \in M^{(2)}\mid \mu(1) = m\}.
    \]
    It induces the unary predicate lifting
    $\semantics{\grad{m}}$ by
    \begin{align*}
      \semantics{\grad{m}}_X(P) &= \{\mu \in M^{(X)} \mid M^{(P)}(\mu) \in f_m\}
      &\text{(\autoref{predLiftYoneda})} \\
      &=\{\mu \in M^{(X)} \mid M^{(P)}(\mu)(1) = m\}
      &\text{(def.~$f_m$)\rlap{.}}
    \end{align*}
    To see that we have a domain-specific interpretation
    (\autoref{domainCert}), we note first (using
    $\semantics{\top}_1 = 1 = \{0\}$) that $\tau$ satisfies
    \begin{align*}
      \semantics{\tau_o}_1 = \semantics{\grad{o(0)}\top}_1
      &= \textstyle\{\mu\in M^{(1)}\mid \sum_{x\in \semantics{\top}_1} \mu(x) = o(0)\}
      \\
      &= \{\mu\in M^{(1)}\mid \mu(0) = o(0)\} \\
      &= \{o\}.
    \end{align*}
    For the second component of the domain-specific interpretation, we
    proceed by case distinction:
    \begin{itemize}
    \item If $M$ is non-cancellative, we have
      $\lambda_t(\delta,\rho)=\grad{t(2)}\delta\wedge \grad{t(1)}\rho$
      for $t\in M^{(3)}$. Thus, we obtain the following chain of
      equivalences for every $t'\in M^{(3)}$:
      \begin{align*}
        & t'\in ([t]_{F\chi_{\set{1,2}}}
        \cap \semantics{\lambda_t}_3(\set{2},\set{1}))
        \\ \Leftrightarrow~&
        t'\in [t]_{F\chi_{\set{1,2}}}
        \text{ and } t'\in \semantics{\lambda_t}_3(\set{2},\set{1})
        \\ \Leftrightarrow~&t'\in [t]_{F\chi_{\set{1,2}}}
        \text{ and } t'\in \semantics{(\delta,\rho)\mapsto \grad{t(2)}\delta\wedge \grad{t(1)}\rho}_3(\set{2},\set{1})
        &\text{(def.~$\lambda_t$)}
        \\ \Leftrightarrow~&t'\in [t]_{F\chi_{\set{1,2}}}
        \text{ and } t'\in \semantics{\grad{t(2)}}_3(\set{2}) \cap \semantics{\grad{t(1)}}_3(\set{1})
        \\ \Leftrightarrow~&t'\in [t]_{F\chi_{\set{1,2}}}
        \text{ and } t'\in \semantics{\grad{t(2)}}_3(\set{2})
        \text{ and } t'\in \semantics{\grad{t(1)}}_3(\set{1})
        \\ \Leftrightarrow~&t'\in [t]_{F\chi_{\set{1,2}}}
        \text{ and } t'(2) = t(2)
        \text{ and } t'(1) = t(1)
        &\text{(def.~$\semantics{\grad{m}}$)}
        \\ \Leftrightarrow~&t'(0) = t(0)\text{ and }t'(1)+t'(2) = t(1)+t(2)
        \text{ and }
        \\ & t'(2) = t(2)
        \text{ and } t'(1) = t(1)
        \\ \Leftrightarrow~&t'(0) = t(0)
        \text{ and } t'(2) = t(2)
        \text{ and } t'(1) = t(1)
        \\ \Leftrightarrow~&t' = t
        \\ \Leftrightarrow~&t' \in \{t\}.
      \end{align*}

    \item If $M$ is cancellative, we put \( \kappa_s(\delta) =
      \grad{s(1)}\,\delta \) for $s \in M^{(2)}$, which then induces
      $\lambda_t(\delta,\rho) = \grad{s(2)}\,\delta$ via
      \autoref{domainCertCancellative}. Hence, we verify that $\kappa$
      is part of a simple domain-specific interpretation
      (\autoref{domainCertSimple}). Indeed, for
      every $s'\in M^{(2)}$ we have the following chain of equivalences:
      \begin{align*}
        &s' \in ([s]_{F!} \cap \semantics{\kappa_s}_2(\{1\}))
        \\ \Leftrightarrow~&s' \in [s]_{F!} \text{ and } s'\in \semantics{\kappa_s}_2(\{1\})
        \\ \Leftrightarrow~&F!(s') = F!(s) \text{ and } s'\in \semantics{\grad{s(1)}}_2(\{1\})
        &\text{(def.~$\kappa_s$)}
        \\ \Leftrightarrow~&\textstyle F!(s') = F!(s) \text{ and } \sum_{x\in \{1\}}s'(x) = s(1)
        &\text{(def.~$\grad{s(1)}$)}
        \\ \Leftrightarrow~&F!(s') = F!(s) \text{ and } s'(1) = s(1)
        \\ \Leftrightarrow~&s'(0) + s'(1) = s(0) + s(1) \text{ and } s'(1) = s(1)
        \\ \Leftrightarrow~&s'(0) = s(0) \text{ and } s'(1) = s(1)
        &\text{($M$ cancellative)}
        \\ \Leftrightarrow~&s' = s
        \\ \Leftrightarrow~&s' \in \{s\}.
      \end{align*}
    \end{itemize}

  \item %
    For $FX=(\Dist X +1)^A$, recall that the predicate lifting
    $\semantics{\fpair{a}_p}$, where $a\in A$, $p\in[0,1]$, is given by
    \[
      \semantics{\fpair{a}_p}_X(S) = \{
      t\in FX\mid \text{if }p > 0 \text{, then }t(a) \in \Dist X\text{ and }
      \textstyle\sum_{x\in S}t(a)(x) \ge p
      \}.
    \]
    First note that
    \[
    \semantics{\fpair{a}_{1}\top}_1 = \{o\in F1\mid o(a) \in \Dist 1\}
    \qquad\text{ and }\qquad
    \semantics{\neg\fpair{a}_{1}\top}_1 = \{o\in F1\mid o(a) \in 1\}.
    \]
    Thus, we have:
    \begin{align*}
      \semantics{\tau_o}_1 &=
      \semantics[\big]{
      \bigwedge_{\substack{a\in A\\ o(a) \in \Dist 1}} \fpair{a}_{1}\top
      \wedge
      \bigwedge_{\substack{a\in A\\ o(a) \in 1}} \neg \fpair{a}_{1}\top
      }_1
      \\ & =
      \bigcap_{\substack{a\in A\\ o(a) \in \Dist 1}} \{o'\in F1\mid o'(a)\in \Dist 1\}
      \cap
      \bigcap_{\substack{a\in A\\ o(a) \in 1}} \{o'\in F1\mid o'(a)\in
      1\}
      \\
      & = \{o\}.  %
    \end{align*}
    For $\lambda_t$, $t\in F3 = (\Dist 3+1)^A$, we
    have the following chain of equivalences for every $t'\in F3$
    (note that the crucial step is the arithmetic argument for replacing
    the inequalities with equalities):
    \allowdisplaybreaks%
    \begin{align*}
      &t' \in ([t]_{F\chi_{\set{1,2}}}\cap \semantics{\lambda_t}_3(\set{2},\set{1}))
      \\ \Leftrightarrow~&
      t' \in [t]_{F\chi_{\set{1,2}}}\text{ and } t'\semantics{\lambda_t}_3(\set{2},\set{1})
      \\ \Leftrightarrow~&
      t' \in [t]_{F\chi_{\set{1,2}}} \text{ and }
                           t' \in
      \semantics[\big]{(\delta,\rho)\mapsto
      \bigwedge_{\substack{a\in A\\t(a) \in \Dist 3}}(\fpair{a}_{t(a)(2)}\,\delta\wedge \fpair{a}_{t(a)(1)}\,\rho)
      }_3(\set{2},\set{1})
      \\ \Leftrightarrow~&
      t' \in [t]_{F\chi_{\set{1,2}}} \text{ and }
                           t' \in
                           \bigcap_{\substack{a\in A\\t(a) \in \Dist 3}}
      \semantics{(\delta,\rho)\mapsto
      \fpair{a}_{t(a)(2)}\,\delta\wedge \fpair{a}_{t(a)(1)}\,\rho
      }_3(\set{2},\set{1})
      \\ \Leftrightarrow~&
      t' \in [t]_{F\chi_{\set{1,2}}} \text{ and }
                           t' \in
                           \bigcap_{\substack{a\in A\\t(a) \in \Dist 3}}
      \semantics{\fpair{a}_{t(a)(2)}}_3(\set{2})
      \cap\semantics{\fpair{a}_{t(a)(1)}}_3(\set{1})
      \\ \Leftrightarrow~&
      t' \in [t]_{F\chi_{\set{1,2}}} \text{ and }
                           \forall a\in A, t(a)\in \Dist 3:
                           t' \in
      \semantics{\fpair{a}_{t(a)(2)}}_3(\set{2})
      \cap\semantics{\fpair{a}_{t(a)(1)}}_3(\set{1})
      \\ \Leftrightarrow~&
      t' \in [t]_{F\chi_{\set{1,2}}} \text{ and }
                           \forall a\in A, t(a)\in \Dist 3:
                           t'(a)(2) \ge t(a)(2)
                           \wedge
                           t'(a)(1) \ge t(a)(1)
                           \\ &\text{(def.~$\semantics{\fpair{a}p}$)}
      \\ \Leftrightarrow~&
              \forall a \in A\colon (t'(a)\in 1\leftrightarrow t(a)\in 1)\text{ and if }t(a)\in \Dist 3\text{ then:}
                           \\ &\qquad
                           t'(a)(0) = t(a)(0),~~ t'(a)(1) + t'(a)(2) = t(a)(1)+t(a)(2),
                           \\ &\qquad
                           t'(a)(2) \ge t(a)(2),~~
                           t'(a)(1) \ge t(a)(1)
      \\ \Leftrightarrow~&
              \forall a \in A\colon (t'(a)\in 1\leftrightarrow t(a)\in 1)\text{ and if }t(a)\in \Dist 3\text{ then:}
                           \\ &\phantom{\forall a \in A\colon}
                           t'(a)(0) = t(a)(0),~~
                           t'(a)(1) = t(a)(1),~~
                           t'(a)(2) = t(a)(2)
                           \tag{arithmetic}
      \\ \Leftrightarrow~&
              \forall a \in A\colon (t'(a)\in 1\leftrightarrow t(a)\in 1)\text{ and if }t(a)\in \Dist 3\text{ then }
                           t'(a) = t(a)
      \\ \Leftrightarrow~& t' \in \set{t}.
                           \tag*{\qedhere}
    \end{align*}
  \end{enumerate}
\end{proof}
\noindent
The intuitive meaning of a domain-specific interpretation
(\autoref{domainCert}) is formalized by the following technical result.%
\begin{lemma}%
  \label{domainSpecificSetIndexed}
  Let $(\tau,\lambda)$ be a domain-specific interpretation for $F$.
  For all $t\in FC$ and $S\subseteq B\subseteq C$, we have
  \[
    \big([t]_{F\chi_B} ~\cap~
    \semantics{\lambda_{F\chi_S^B(t)}}_C(S,B\setminus S)\big)
    = [t]_{F\chi_S^B}
    \qquad\text{in $2^{FC}$.}
  \]
\end{lemma}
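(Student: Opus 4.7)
The plan is to transfer the defining property of $\lambda$ at the set~$3$ to the arbitrary set~$C$ along the map $\chi_S^B\colon C\to 3$, using naturality of predicate liftings together with the two factorisations
\[
  \chi_S = \chi_{\{2\}}\cdot\chi_S^B, \qquad
  \chi_{B\setminus S} = \chi_{\{1\}}\cdot\chi_S^B, \qquad
  \chi_B = \chi_{\{1,2\}}\cdot\chi_S^B.
\]
These three equations hold by pointwise inspection of the definition of $\chi_S^B$ in~\eqref{eqChi3}.

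First I will rewrite the equivalence class $[t]_{F\chi_B}$ using the third factorisation: applying $F$ gives $F\chi_B = F\chi_{\{1,2\}}\cdot F\chi_S^B$, so $t'\in [t]_{F\chi_B}$ iff $F\chi_S^B(t')\in [F\chi_S^B(t)]_{F\chi_{\{1,2\}}}$, that is
\[
  [t]_{F\chi_B} = (F\chi_S^B)^{-1}\bigl[[F\chi_S^B(t)]_{F\chi_{\{1,2\}}}\bigr].
\]
Next, writing $r:=F\chi_S^B(t)\in F3$, I will use naturality of the binary predicate lifting $\semantics{\lambda_{r}}$ w.r.t.~$\chi_S^B\colon C\to 3$, which together with the first two factorisations yields
\[
  \semantics{\lambda_{r}}_C(S, B\setminus S)
  = \semantics{\lambda_r}_C\bigl((\chi_S^B)^{-1}[\{2\}],\, (\chi_S^B)^{-1}[\{1\}]\bigr)
  = (F\chi_S^B)^{-1}\bigl[\semantics{\lambda_r}_3(\{2\},\{1\})\bigr].
\]

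Finally I will combine the two equalities, pulling the intersection under the single preimage $(F\chi_S^B)^{-1}[-]$, and apply the defining property of the domain-specific interpretation (\autoref{domainCert}), namely $[r]_{F\chi_{\{1,2\}}}\cap \semantics{\lambda_r}_3(\{2\},\{1\}) = \{r\}$. The result is
\[
  [t]_{F\chi_B}\cap \semantics{\lambda_{r}}_C(S,B\setminus S)
  = (F\chi_S^B)^{-1}[\{r\}]
  = [t]_{F\chi_S^B},
\]
as required. I do not anticipate a serious obstacle; the only subtle point is keeping the direction of the characteristic map $\chi_{\{2\}}\colon 3\to 2$ straight so that naturality is applied correctly. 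No further assumptions on $F$ (beyond being a set functor) are needed, since both sides of the identity are pulled back from $F3$ along the same map.
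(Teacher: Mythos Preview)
Your proposal is correct and follows essentially the same approach as the paper's proof: both use naturality of the predicate lifting $\semantics{\lambda_r}$ along $\chi_S^B$, the factorisation $\chi_B=\chi_{\{1,2\}}\cdot\chi_S^B$, and the defining property of the domain-specific interpretation to reduce to the singleton $\{r\}$ in $F3$. The only cosmetic difference is that you work with preimages under $(F\chi_S^B)^{-1}$ and pull the intersection underneath, whereas the paper spells out the same reasoning as a pointwise chain of equivalences for an arbitrary $t'\in FC$.
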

\begin{proof}
  Put $d= F\chi_S^B(t)$;
  the naturality square of $\semantics{\lambda_d}$ for $\chi_S^B\colon C\to 3$
  is
  \[
    \begin{tikzcd}
      2^3\times 2^3
      \arrow{r}{\semantics{\lambda_d}_3}
      \arrow{d}[swap]{2^{\chi_S^B}\times 2^{\chi_S^B}}
      & 2^{F3}
        \arrow{d}{2^{F\chi_S^B}}
        \\
        2^C\times 2^C
        \arrow{r}{\semantics{\lambda_d}_C}
        & 2^{FC}
      \end{tikzcd}
    \]
    Hence:
    \begin{align*}
      (F\chi_S^B)^{-1}\big[\semantics{\lambda_d}_3(\set{2},\set{1})\big]
      &=
      \semantics{\lambda_d}_C((\chi_S^B)^{-1}[\set{2}],(\chi_S^B)^{-1}[\set{1}])
      \\
      &= \semantics{\lambda_d}_C(B,B\setminus S).
      \tag{$*$}
    \end{align*}
    Now we have the following chain of equivalences for every $t'\in FC$:
    \allowdisplaybreaks%
    \begin{align*}
      &t' \in \big([t]_{F\chi_B} ~\cap~ \semantics{\lambda_{d}}_C(S,B\setminus S)\big)
      \\
  \Leftrightarrow~& t' \in [t]_{F\chi_B}\text{ and }
                    t'\in \semantics{\lambda_{d}}_C(S,B\setminus S)
      \\
  \Leftrightarrow~& t' \in [t]_{F\chi_B}\text{ and }
                    t'\in (F\chi_S^B)^{-1}\big[\semantics{\lambda_{d}}_3(\set{2},\set{1})\big]
                    &\text{(by~($*$))}
      \\
  \Leftrightarrow~& F\chi_S^B(t') \in [F\chi_S^B(t)]_{F\chi_{\set{1,2}}}\text{ and }
                    F\chi_S^B(t')\in \semantics{\lambda_{d}}_3(\set{2},\set{1})
                    &\text{($\chi_{\set{1,2}}\cdot \chi_S^B = \chi_B$)}
      \\
  \Leftrightarrow~& F\chi_S^B(t') \in [F\chi_S^B(t)]_{F\chi_{\set{1,2}}} \cap
                    \semantics{\lambda_{d}}_3(\set{2},\set{1})
      \\
  \Leftrightarrow~& F\chi_S^B(t') \in \{F\chi_S^B(t)\}
                    &\text{(\autoref{domainCert}, $d = F\chi_S^B(t)$)}
      \\
  \Leftrightarrow~& F\chi_S^B(t') = F\chi_S^B(t)
      \\
  \Leftrightarrow~& t' \in [t]_{F\chi_S^B}.
                    \tag*{\qedhere}
    \end{align*}
  \end{proof}

\noindent Given a domain-specific interpretation $(\tau,\lambda)$ for
a modal signature~$\Lambda$ for the set functor~$F$, we can postprocess
certificates~$\phi$ produced by \autoref{algoCerts} by replacing the
modalities $\fmod t$ for $t \in F3$ according to the translation~$T$
recursively defined by the following clauses for modalities and by
commutation with propositional operators:
\twnote{}
\[
  T\big(\fmod{t}(\top,\top)\big) = \tau_{F!(t)}
  \qquad
  T\big(\fmod{t}(\delta,\beta))
  =
  \lambda_t\big(T(\delta), T(\beta) \wedge \neg T(\delta)\big).
\]
Note that one can replace $T(\beta)\wedge\neg T(\delta)$ with $T(\beta)\wedge
\neg T(\delta')$ for the optimized $\delta'$ from
\autoref{cancelConjunct}; the latter conjunction has essentially the same size as $T(\delta)$.

The domain-specific modal signatures inherit a Hennessy-Milner Theorem.
\begin{proposition}\label{domainCertMainThm}
  \smnote{}
  For every certificate $\phi$ of a behavioural equivalence class of
  a given co\-al\-ge\-bra produced by either \autoref{algoCerts}
  or its optimization (\autoref{algoCertsCancel}),~$T(\phi)$ is also a certificate
  for that class.%
\end{proposition}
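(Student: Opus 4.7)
The plan is to strengthen the claim to an invariant amenable to induction on the iteration count $i$ of the underlying partition refinement algorithm, namely: for every $i$, every $B\in C/Q_i$, and every $S\in C/P_i$,
\[
  \semantics{T(\beta_i(B))} = B
  \qquad\text{and}\qquad
  \semantics{T(\delta_i(S))} = S.
\]
The claimed statement then follows on termination of \autoref{algoCerts}, noting that $T$ produces bona fide $\bar\Lambda$-formulae because $T$ commutes with all propositional connectives.

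For the base case, I would translate $\beta_0(\{C\}) = \top$ to $\top$ trivially, and expand $\delta_0([x]_{P_0}) = \fmod{F!(c(x))}$ via \autoref{notationF1Mod} as $\fmod{Fj_1(F!(c(x)))}(\top,\top)$, which $T$ sends to $\tau_{F!(c(x))}$ (using $!\cdot j_1\cdot ! = !$). The defining property $\semantics{\tau_o}_1 = \{o\}$ combined with the naturality of the associated predicate lifting then reduces $\semantics{\tau_{F!(c(x))}}$ to the fibre of $F!\cdot c$ over $F!(c(x))$, i.e.~$[x]_{P_0}$ in light of \autoref{lemF1CoalgSem}. The inductive step for $\beta_{i+1}$ is immediate from the IH combined with the commutation of $T$ with $\wedge$ and $\neg$.

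The core of the proof is the inductive step for $\delta_{i+1}$. Applying $T$ to the defining clause \ref{defDeltai1} and using the IH yields
\[
  \semantics{T(\delta_{i+1}([x]_{P_{i+1}}))}
  = [x]_{P_i}\cap c^{-1}\big[\semantics{\lambda_{F\chi_S^B(c(x))}}_C(S, B\setminus S)\big].
\]
Now \autoref{domainSpecificSetIndexed} delivers
\[
  [c(x)]_{F\chi_B}\cap \semantics{\lambda_{F\chi_S^B(c(x))}}_C(S,B\setminus S) = [c(x)]_{F\chi_S^B},
\]
but the expression above lacks the outer restriction to $[c(x)]_{F\chi_B}$. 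The crucial observation will be that stability (\autoref{stableP}) already forces $[x]_{P_i}\subseteq c^{-1}[[c(x)]_{F\chi_B}]$ (since $\chi_\emptyset^B$ factors through $\chi_B$ via an injection preserved by $F$), so this restriction is supplied for free by the preceding conjunct. Substituting yields $[x]_{P_i}\cap [x]_{F\chi_S^B\cdot c} = [x]_{P_{i+1}}$ by step~\ref{defPi1}, as desired.

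For the optimized \autoref{algoCertsCancel}, I would run the same induction after unfolding the unary modality $\fmod{F\chi_S(c(x))}(\delta_i(S))$ as $\fmod{F\chi_S^C(c(x))}(\delta_i(S),\top)$ via \autoref{notationF2Mod}, applying \autoref{domainSpecificSetIndexed} with the trivial outer block $B = C$ (where $[c(x)]_{F\chi_C} = FC$ so no stability argument is needed), and invoking \autoref{optimizedPcancel} to identify the resulting intersection as $[x]_{P_{i+1}}$. The hardest point throughout is the $\delta_{i+1}$ step in the unoptimized case: one must carefully match the $[c(x)]_{F\chi_B}$-coset arising from \autoref{domainSpecificSetIndexed} against the restriction already built into the preceding conjunct $T(\delta_i([x]_{P_i}))$. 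This reconciliation ultimately rests on the stability invariant of the underlying coalgebraic Paige–Tarjan refinement, which is precisely what permits $\lambda_t$ to characterise $t\in F3$ only within an equivalence class of $F\chi_{\{1,2\}}$ rather than absolutely.
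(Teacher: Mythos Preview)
Your proposal is correct and follows essentially the same approach as the paper's proof: induction on the iteration index, with the $\delta_{i+1}$ step resolved via \autoref{domainSpecificSetIndexed} after supplying the missing $[c(x)]_{F\chi_B}$-restriction from the stability invariant $P_i\subseteq\ker(F\chi_B\cdot c)$. The only cosmetic difference is that the paper unifies \autoref{algoCerts} and \autoref{algoCertsCancel} via a single parameter $B'\in\{B,C\}$ (with $t=F\chi_S^{B'}(c(x))$ and $\semantics{\beta'}=B'$) rather than treating the two cases separately; your separate handling is equally valid and arguably makes the role of stability (needed only when $B'=B$) more visible.
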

\begin{proof}
    We prove by induction on the index $i$ of main loop iterations that
    $T(\delta_i([x]_{P_i}))$ and $T(\beta_i([x]_{Q_i}))$ are certificates for $[x]_{P_i}$
    and $[x]_{Q_i}$, respectively. (In the
    cancellative case, $Q_i$ and $\beta_i$ are not defined; so just
    put~$C/Q_i=\{C\}$, $\beta_i(C) = \top$ for convenience.)
    \begin{enumerate}
    \item For $i=0$, we trivially have
      \[
        \semantics{T(\beta_0([x]_{P_i}))} =  \semantics{T(\top)} = \semantics{\top} = C.
      \]
      Furthermore, unravelling \autoref{notationF1Mod}, we have
      \[
        \delta_0([x]_{P_0}) = \fmod{F!(c(x))} = \fmod{Fj_1(F!(c(x)))}(\top,\top).
      \]
      Consequently,
      \[
        T(\delta_0([x]_{P_0})) = \tau_{F!(Fj_1(F!(c(x))))} = \tau_{F!(c(x))}
      \]
      using $!\cdot j_1\cdot \mathord{!} = \mathord{!}\colon C\to 1$. Naturality of
      $\semantics{\tau_o}$, $o\in F1$, implies that
      \[
        \semantics{\tau_o}_X = \{t\in FX\mid F!(t) = o\}.
      \]
      Hence, we obtain the desired equality:
      \[
        \semantics{T(\delta_0([x]_{P_0}))} =
        c^{-1}[\semantics{\tau_{F!(c(x))}}_C] = \{x'\in C\mid F!(c(x')) = F!(c(x))\}=[x]_{P_0}.
      \]

    \item In the inductive step, there is nothing to show for $\beta_{i+1}$
      because it is only a boolean combination of $\beta_{i}$ and $\delta_i$.
      For $\delta_{i+1}$, we distinguish two cases: whether the class
      $[x]_{P_i}$ is refined or not. If $[x]_{P_{i+1}} = [x]_{P_i}$, then
      \[
        \semantics{T(\delta_{i+1}([x]_{P_{i+1}}))}
        = \semantics{T(\delta_{i}([x]_{P_i}))}
        = [x]_{P_i},
      \]
      and we are done. Now suppose that $[x]_{P_{i+1}}\neq [x]_{P_i}$ in the $i$-th
      iteration with chosen $S\subsetneqq B\subseteq C$. By step~\ref{defDeltai1} of \autoref{algoCerts} (or \autoref{algoCertsCancel}, respectively), we have
      \[
        \delta_{i+1}([x]_{P_{i+1}}) = \delta_i([x]_{P_i}) \wedge
        \fmod{t}(\delta_i(S), \beta')
      \]
      where $\beta'$ is $\beta_i(B)$ or $\top$; in any case
      $\semantics{\delta_i(S)}= S\subseteq \semantics{\beta'}$.
      Note that $t$ here is either~$F\chi_S^B(c(x))$ (\autoref{algoCerts})
      or $Fj_2(F\chi_S(c(x)))$ (\autoref{algoCertsCancel}). Put $B'=B$ in
      the first case and $B'=C$ else. Using $\chi_S^{C} =
      j_2\cdot\chi_S$, we see that
      \[
        t = F\chi_S^{B'}(c(x))
        \qquad
        \semantics{\beta'} = B',
        \qquad\text{and}\qquad
        \semantics{T(\beta')} = B',
      \]
      where the last equation follows from the inductive hypothesis.
      Thus, we have
      \[
        \delta_{i+1}([x]_{P_{i+1}}) = \delta_i([x]_{P_i}) \wedge
        \fmod{F\chi_S^{B'}(c(x))}(\delta_i(S), \beta'),
      \]
      and therefore
      \[
        T(\delta_{i+1}([x]_{P_{i+1}})) = T(\delta_i([x]_{P_i})) \wedge
        \lambda_{F\chi_S^{B'}(c(x))}\big(T(\delta_i(S)), T(\beta')
        \wedge \neg T(\delta_i(S))\big).
      \]
      Moreover, we have
      \[
        P_{i+1} = P_i\cap \ker(F\chi_S^{B'}\cdot c),
      \]
      in the first case by step~\ref{defPi1}, and in the second case by
      \autoref{optimizedPcancel}, recalling that $\chi_S = \chi_S^C$.

      We are now prepared for our final computation:
      \allowdisplaybreaks%
      \begin{align*}
        & \semantics{T(\delta_{i+1}([x]_{P_{i+1}})) }
          \\
        =~& \semantics{T(\delta_i([x]_{P_i})) \wedge
        \lambda_{F\chi_S^{B'}(c(x))}(T(\delta_i(S)), T(\beta')\wedge \neg T(\delta_i(S)))}
            \\
        =~& \semantics{T(\delta_i([x]_{P_i}))} \cap
        \semantics{\lambda_{F\chi_S^{B'}(c(x))}(T(\delta_i(S)), T(\beta')\wedge \neg T(\delta_i(S)))}
            \\
        =~& \semantics{T(\delta_i([x]_{P_i}))} \cap
        c^{-1}\big[\semantics{\lambda_{F\chi_S^{B'}(c(x))}}_C(\semantics{T(\delta_i(S))},
        \semantics{T(\beta')}\cap C\setminus \semantics{T(\delta_i(S))})\big]
            \tag{semantics of $\lambda$}
            \\
        =~& [x]_{P_i} \cap
        c^{-1}\big[\semantics{\lambda_{F\chi_S^{B'}(c(x))}}_C(S, B'
        \cap (C\setminus S) )\big]
            &\tag{induction hypothesis}
            \\
        =~& [x]_{P_i} \cap
        c^{-1}\big[\semantics{\lambda_{F\chi_S^{B'}(c(x))}}_C(S, B'\setminus S)\big]
        \tag{$B' \cap (C\setminus S) = B'\setminus S$} % chktex 1
        \\
        =~& [x]_{P_i} \cap
            [x]_{F\chi_{B'}\cdot c} \cap
        c^{-1}\big[\semantics{\lambda_{F\chi_S^{B'}(c(x))}}_C(S, B'\setminus S)\big]
            \tag{$P_i\subseteq \ker F\chi_{B'}\cdot c$} % chktex 1
            \\
        =~& [x]_{P_i} \cap
            c^{-1}\big[[c(x)]_{F\chi_{B'}}\big] \cap
        c^{-1}\big[\semantics{\lambda_{F\chi_S^{B'}(c(x))}}_C(S, B'\setminus S)\big]
            \\
        =~& [x]_{P_i} \cap
            c^{-1}\big[[c(x)]_{F\chi_{B'}} \cap
        \semantics{\lambda_{F\chi_S^{B'}(c(x))}}_C(S, B'\setminus S)\big]
            \\
        =~& [x]_{P_i} \cap
            c^{-1}\big[[c(x)]_{F\chi_S^{B'}}\big]
            \tag{domain-specific interpret.~(\autoref{domainSpecificSetIndexed})}
            \\
        =~& [x]_{P_i} \cap
            [x]_{F\chi_S^{B'}\cdot c}
            \\
        =~& [x]_{P_{i+1}}
            \tag{$P_{i+1} = P_i\cap \ker(F\chi_S^{B'}\cdot c$)} % chktex 1
      \end{align*}
    \end{enumerate}
    Thus, $\semantics{T(\delta_{i+1}([x]_{P_{i+1}})) }$ is a certificate.
  \end{proof}

\begin{example}\label{exCertMarkov}
  For labelled Markov chains ($FX=(\Dist X +1)^A$) and the
  interpretation via the modalities $\fpair{a}_p$
  (\itemref{exDomainSpecInt}{dsiMarkov}), we thus obtain certificates
  (in particular also distinguishing formulae) in run time
  $\CO(|A|\cdot m\cdot \log n)$, with the same bound on formula
  size. Indeed,%
  \smnote{}  the
  \autoref{algoCerts} runs in $\CO(m\cdot \log n)$ producing
  certificates of total size $\CO(m\cdot \log n)$. By
  \autoref{domainCertMainThm}, we can translate these certificates
  into ones using the modalities~$\fpair{a}_p$. The translation blows
  up the size of certificates by the additional factor~$|A|$ appearing
  in the above size estimate because of the big conjunctions in the
  domain-specific interpretation
  (\itemref{exDomainSpecInt}{dsiMarkov}).

  By comparison, the algorithm for distinguishing formulae by
  Deshar\-nais \etal{}~\cite[Fig.~4]{desharnaisEA02} runs roughly in
  time $\CO(|A|\cdot n^4)$, as it nests four loops over all blocks
  seen so far and one additional loop over~$A$.  The distinguishing
  formulae computed by the algorithm live in the negation-free
  fragment of the logic that we use for certificates; Desharnais \etal
  note that this fragment does not suffice for certificates.
\end{example}

\twnote[inline]{}

\section{Worst Case Tree Size of Certificates}%
\label{worstcase}
In the complexity analysis (\autoref{complexityAnalysis}), we have
seen that certificates -- and thus also distinguishing formulae -- % chktex 8
have dag size $\CO(m\cdot \log n + n)$ on input coalgebras with~$n$
states and~$m$ transitions. However, when formulae are written in the
usual linear way, multiple occurrences of the same subformula lead to
an exponential blow up of the formula size in this sense, which for
emphasis we refer to as the \emph{tree size}.

\subsection{Transition Systems}%
\label{worstCaseTS}
The certificate of a state can be exponentially large and even the
size of formulae separating two particular states of interest is in
the worst case as big as the certificate of one the states. This has
been shown previously by Figueira and Gor{\'{\i}}n~\cite{FigueiraG10}
via winning strategies in bisimulation games, a technique that is also
applied in other works giving lower bounds for formula
size~\cite{FrenchEA13,AdlerImmermanLics01,AdlerImmermanTocl03}.  For
the convenience of the reader, we recall the example and give a direct
argument for the size estimate in the appendix.

\newcommand{\threetowertransitions}[2]{%
    \path[path with edges]
      (x #1) edge (x #2)
      (x #1) edge (y #2)
      (x #1) edge (z #2)
      (y #1) edge (z #2)
      (y #1) edge (y #2)
      (z #1) edge (z #2)
      (z #1) edge (x #2)
      ;% chktex 26
  }
  \newcommand{\threetowerstates}{%
    \begin{scope}[every node/.append style={state,shape=circle,inner sep=2pt}]
      \node[label=above:$z_{i+1}$] (z up)  at (0, 1) {$\blackcircle$};
      \node[label=above:$x_{i+1}$] (x up)  at (1, 1) {$\blackcircle$};
      \node[label=above:$y_{i+1}$] (y up)  at (2, 1) {$\blackcircle$};
      \node[label=below:$z_{i}$] (z down)  at (0, 0) {$\blackcircle$};
      \node[label=below:$x_{i}$] (x down)  at (1, 0) {$\blackcircle$};
      \node[label=below:$y_{i}$] (y down)  at (2, 0) {$\blackcircle$};
    \end{scope}
  }

\begin{expl}\label{threetower}
  We define a $\Powf$-coalgebra $(C,c)$ with state set $C = \bigcup_{i\in \N} L_i$
  made up of \emph{layers} $L_i = \{x_i,y_i,z_i\}$.  The successors of
  states in layer $L_{i+1}$ are in
  layer~$L_i$; specifically,
  \[
    \begin{aligned}
      c(x_0) &= \{y_0\}
      & c(x_{i+1}) &= (0, \{x_i,y_i,z_i\})
      \\
      c(y_0) &= \emptyset
      & c(y_{i+1}) &= (0, \{y_i,z_i\})
      \\
      c(z_0) &= \{x_0\}
      & c(z_{i+1}) &= (0, \{x_i,z_i\}).
    \end{aligned}
    \qquad\qquad
    \begin{minipage}{.3\textwidth}
    \begin{tikzpicture}[coalgebra,x=10mm, y=10mm,every label/.append style={inner ysep=3pt,inner xsep=0pt}]
    \threetowerstates{}
    \threetowertransitions{up}{down}
    \begin{scope}[yshift=-1cm]
    \node[label=below:$z_{0}$] (z 0)  at (0, 0) {$\blackcircle$};
    \node[label=below:$x_{0}$] (x 0)  at (1, 0) {$\blackcircle$};
    \node[label=below:$y_{0}$] (y 0)  at (2, 0) {$\blackcircle$};
    \end{scope}
    \path[path with edges]
      (x 0) edge[bend left=0] (y 0)
      (z 0) edge[bend left=0] (x 0)
      ;
    \end{tikzpicture}
  \end{minipage}
  \]
  No two distinct states of $(C,c)$ are bisimilar. For a lower bound
  on tree size of certificates, one shows that a certificate
  of~$x_{n+1}$ necessarily contains (distinct) certificates of~$x_{n}$
  and~$y_n$, and a certificate of~$y_{n+1}$ contains one
  of~$x_n$. Hence, every certificate of~$x_{n}$ has size at least
  $\fib(n)$, the $n$-th Fibonacci number. Moreover, by the above,
  every formula distinguishing~$x_{n+1}$ from~$y_{n+1}$ contains a
  certificate for~$x_n$, and thus also has size at least~$\fib(n)$.
\end{expl}

\noindent Cleaveland~\cite[p.~368]{Cleaveland91} also mentions that
minimal distinguishing formulae may be exponential in size, however
for a slightly different notion of minimality: a formula~$\phi$
distinguishing $x$ from $y$ is called \emph{minimal} by Cleaveland if
no $\phi$ obtained by replacing a non-trivial subformula of $\phi$
with the formula $\top$ distinguishes $x$ from $y$.  This is weaker
than demanding that the formula size of $\phi$ is as small as
possible. For example, in the transition system
\begin{center}
  \hspace{8mm}
  \begin{tikzpicture}[coalgebra,x=1.5cm,baseline=(x.base)]
    \begin{scope}[every node/.append style={
        state,
        label distance=-.5mm,
        outer sep=3pt,
        inner sep=0pt,
        text depth=0pt,
        font=\normalsize,%
      }
      ]
      \node[label=above:$x$] (x)  at (0, 0) {$\bullet$};
      \node (x1)  at (1, 0) {$\bullet$};
      \begin{scope}[xshift=3cm]
        \node[label=above:$y$]
        (y 0) at (0,0) {$\bullet$};
        \foreach \n in {1,3} {
          \node (y \n) at (\n,0) {$\bullet$};
        }
        \node (y 2) at (2,0) {$\cdots$};
      \end{scope}
    \end{scope}
    \path[path with edges]
      (x) edge (x1)
      (x) edge[out=150,in=200,looseness=6.5,overlay] (x)
      (y 0) edge (y 1)
      (y 1) edge[shorten >=1mm,-] (y 2)
      (y 2) edge[shorten <=1mm] (y 3)
      ;
    \draw [decorate,decoration={brace,amplitude=3pt,raise=-1pt},yshift=0pt]
    (y 1.north east) -- node[font=\normalsize,yshift=6pt] {$n$}
    (y 3.north west);
  \end{tikzpicture}
  \qquad\qquad for $n\in \N$,
\end{center}
the formula $\phi = \Diamond^{n+2}\top$ distinguishes $x$ from $y$
and is minimal in the above sense. \mbox{However, $x$} can in fact be
distinguished from~$y$ in size $\CO(1)$,
by the formula~$\Diamond \neg\Diamond\top$.

To verify the minimality of $\phi = \Diamond^{n+2}\top$,
one considers all possible replacements of subformulae of $\phi$ by $\top$:
\[
  \Diamond\top
  \qquad
  \Diamond\Diamond\top
  \qquad
  \ldots
  \qquad
  \Diamond^n\top
  \qquad
  \Diamond^{n+1}\top
\]
All of these hold at both $x$ and $y$, because $x$ can perform arbitrarily
many transitions, and~$y$ can perform $n+1$ transitions.

\subsection{Weighted Systems}
In contrast to transition systems, lower bounds on the size of distinguishing
formulae have not been established.

As a negative result, even the optimized algorithm for cancellative
functors presented above (\autoref{algoCertsCancel}) constructs
certificates of exponential worst-case tree size, even in cases where
linear-sized certificates exist.
\begin{expl}\label{algoExpFormula}
  Define the $\R^{(-)}$-coalgebra $c$ on $C =\bigcup_{k\in \N}\{w_k,x_k,y_k,z_k\}$ by
  \[
    \begin{array}{r@{\,}l@{\,}l@{\,}l@{\,}l@{\qquad\qquad}r@{}c@{\,}c@{\,}r}
      c(w_{k+1}) = & \{w_k\mapsto 1,& x_k\mapsto 2,& y_k\mapsto 1, &z_k\mapsto 2\}, & c(w_0) = \{&w_0&\mapsto & 1\},\\
      c(x_{k+1}) = & \{w_k\mapsto 1,& x_k\mapsto 2,& y_k\mapsto 2, &z_k\mapsto 1\}, & c(x_0) = \{&x_0&\mapsto &2\},\\
      c(y_{k+1}) = & \{w_k\mapsto 2,& x_k\mapsto 1,& y_k\mapsto 1, &z_k\mapsto 2\}, & c(y_0) = \{&y_0&\mapsto &3\},\\
      c(z_{k+1}) = & \{w_k\mapsto 2,& x_k\mapsto 1,& y_k\mapsto 2, &z_k\mapsto 1\}, & c(z_0) = \{&z_0&\mapsto &4\}.
    \end{array}
  \]
  We say that $L_k= \{w_k,x_k, y_k, z_k\}$ is the $k$-th \emph{layer}
  of this coalgebra. So the states in the $0$-th layer each just have
  a loop with weight $1$, $2$, $3$, and $4$, respectively, and the
  states of the $k+1$-st layer and the $k$-th one are connected by a
  complete bipartite graph with weights as indicated above.

  We now show that \autoref{algoCertsCancel} constructs a certificate
  of size $2^n$ in the $n$-th layer.%
  \smnote{}  Hence,
  for the finite subcoalgebra $L_0\cup \cdots \cup L_n$, the states
  in~$L_n$ receive certificates of size exponential in the size of the
  input coalgebra, that is the number $4 + 4n$ of states plus the
  number $4 + 16n$ of edges.

  To see this, first note that the initial partition
  \[
    P_0=\set[\big]{\set{w_0},\set{x_0}, \set{y_0}, \set{z_0},
      L_1\cup\cdots \cup L_n}
  \]
  distinguishes on the total out-degree (being
  1, 2, 3, 4, or 6). The states in $L_0$ are assigned the following
  certificates:
  \begin{equation}\label{eq:L0-certs}
    w_0 = \grad{1}\top,
    \qquad
    x_0 = \grad{2}\top,
    \qquad
    y_0 = \grad{3}\top,
    \qquad\text{and}\qquad
    z_0 = \grad{4}\top.
  \end{equation}
  Assume that after $i$ iterations of the
  main loop of the algorithm, the states $w_k, x_k, y_k, z_k$ have
  just been found to be behaviourally different and all states of
  $L_{k+1}\cup \cdots \cup L_{n}$ are still identified. Then the
  algorithm has to use some of the blocks $\set{w_k}$, $\set{x_k}$,
  $\set{y_k}$, $\set{z_k}$ as the splitter $S$ for further
  refinement. Assume wlog that the first block used as the splitter is
  $S = \{w_k\}$. Then the block $L_{k+1}\cup\cdots\cup L_{n}$
  will be refined into the blocks
  \[
    \{w_{k+1},x_{k+1}\},\qquad
    \{y_{k+1},z_{k+1}\},\qquad\text{and}\qquad
    L_{k+2}\cup\cdots\cup L_n.
  \]
  Denote by $\delta(\set{w_k})$ the formula that we have at this point
  for $\set{w_k}$.
  The definition of $\delta$ in the algorithm annotates the block
  $\{w_{k+1},x_{k+1}\}$ with $\grad{1}\delta(\set{w_k})$
  and the block $\{y_{k+1},z_{k+1}\}$ with $\grad{2}\delta(\set{w_k})$.

  Splitting by $\{x_k\}$ does not lead to further refinement. However,
  when splitting by $S = \{y_k\}$ (or equivalently $\{z_k\}$), we
  split $\set{w_{k+1}, x_{k+1}}$ into $\set{w_{k+1}}$ and
  $\set{x_{k+1}}$ and likewise $\set{y_{k+1}, z_{k+1}}$ into
  $\set{y_{k+1}}$ and $\set{z_{k+1}}$. Let $\delta(\set{y_k})$ be the
  certificate constructed for $\set{y_k}$. This implies that the
  formulae for $\set{w_{k+1}}$ and $\set{y_{k+1}}$ are both extended
  by the conjunct $\grad{1}\delta(\set{y_k})$; likewise, the formulae
  for $\set{x_{k+1}}$ and $\set{z_{k+1}}$ are extended by a new
  conjunct $\grad{2}\delta\set{y_k}$. Hence, for every $s\in L_{k+1}$
  the tree-size of the constructed formula is at least
  \[
    |\delta(\set{s})| \ge |\delta(\set{w_{k}})| + |\delta(\set{y_{k}})| .
  \]
  By induction we thus see that the certificates for $s \in L_k$ is of
  size at least $2^k$.
\end{expl}
\begin{rem}
  However, note that in \autoref{algoExpFormula},
  linear-sized certificates do exist for all states.

  Indeed, for states in $L_0$ we have the certificates
  from~\eqref{eq:L0-certs}. Using those, we can easily read off the
  following certificates for $w_1$ and $z_1$ from the coalgebra
  structure:
  \begin{align*}
    &\grad{1}\grad{1}\top \wedge \grad{2}\grad{2}\top \wedge
    \grad{1}\grad{3}\top \wedge \grad{2}\grad{4}\top, \ \text{and}
    \\
    &\grad{1}\grad{1}\top \wedge \grad{2}\grad{2}\top \wedge
    \grad{2}\grad{3}\top \wedge \grad{1}\grad{4}\top.
  \end{align*}
  For all other states, we obtain certificates as follows. For every
  $k \geq 0$ we have
  \[
    \phi_{k} := \grad{3}^k(\grad{1}\top \vee \grad{4}\top)
    \qquad\text{satisfying}\qquad
    \semantics{\phi_{k}} = \set{w_k,z_k}.
  \]
  This lets us define certificates for $\set{x_{k+1}}$ and $\set{y_{k+1}}$:
  \[
    \semantics{\grad{2}\phi_k} = \set{x_{k+1}}
    \quad\text{and}\quad
    \semantics{\grad{4}\phi_k} = \set{y_{k+1}}.
  \]
  For the remaining states $w_k$ and $z_k$,
  we note that
  \[
    \semantics{\grad{1}\grad{4}\phi_k} = \set{w_{k+2},y_{k+2}}.
  \]
  Thus, we have certificates
  \[
    \semantics{\phi_{k+2}\wedge \grad{1}\grad{4}\phi_k } = \set{w_{k+2}}
    \quad\text{and}\qquad
    \semantics{\phi_{k+2}\wedge \neg\grad{1}\grad{4}\phi_k } = \set{z_{k+2}}.
  \]
  Since $\phi_k$ involves $k+2$ modal operators, every state in $L_k$ has a certificate
  with at most $2\cdot k + 8$ modal operators.

  Hence, in each of the finite coalgebras $L_0 \cup \cdots \cup L_k$
  from \autoref{algoExpFormula}, every state has a certificate whose
  size is linearly bounded in the size of the coalgebra.
\end{rem}
\begin{oprob}%
  Do states in $\R^{(-)}$-coalgebras generally have certificates of
  subexponential tree size in the number of states? If yes, can small
  certificates be computed efficiently?
\end{oprob}

\noindent We note that for another cancellative functor, the answer is
well-known: On deterministic automata, i.e.~coalgebras for
$FX = 2\times X^A$, the standard minimization algorithm constructs
distinguishing words of linear length.

\section{Conclusions and Further Work}

We have presented a generic algorithm that computes distinguishing
formulae for behaviourally inequivalent states in state-based systems
of various types, cast as coalgebras for a functor capturing the
system type. Our algorithm is based on an efficient coalgebraic
partition refinement algorithm~\cite{concurSpecialIssue}, and like
that algorithm runs in time $\CO((m+n)\cdot \log n \cdot p(c))$, with
a functor-specific factor $p(c)$ that is $1$ in many cases of
interest. Independently of this factor, the distinguishing formulae
constructed by the algorithm have dag size $\CO(m\cdot \log n + n)$;
they live in a dedicated instance of coalgebraic modal
logic~\cite{Pattinson04,Schroder08}, with binary modalities extracted
from the type functor in a systematic way.
We have also introduced the notion of a \emph{cancellative} functor,
and we have shown that for such functors, the construction of formulae
and, more importantly, the logic employed can be simplified, requiring
only unary modalities and conjunction. We have also discussed how
distinguishing formulae can be translated into a more familiar
domain-specific syntax (e.g. Hennessy-Milner logic for transition
systems).

There is a proof-of-concept implementation of the certificate
construction, based on an earlier open source implementation of the
underlying partition refinement algorithm~\cite{coparFM19,wdms21}.

In partition refinement, blocks are successively refined in a top-down
manner, and this is reflected by the use of conjunction in
distinguishing formulae. Alternatively, bisimilarity may be computed
bottom-up, as in a recent partition \emph{aggregation}
algorithm~\cite{BjorklundCleophas2020}. It is an interesting point for
future investigation whether this algorithm can be extended to compute
distinguishing formulae, which would likely be of a rather different
shape than those computed via partition refinement.%
\label{maintextend} %
\bibliographystyle{alphaurl} %
\bibliography{refs}

\clearpage
\appendix

\section{Worst Case Tree Size of Certificates for Transition Systems}\label{app:tree-size}

In the following, we provide a direct proof that the states of
\autoref{threetower} only admit distinguishing formulae of at least exponential
size.
As the worst case size of a certificate, we will derive the minimum size of
a certificate of $x_{i}$. The crucial argument here is that there is no
certificate of $x_i$ with only one top-level modality. To this end, we look at
all formulae with only one top-level modality:
\begin{equation}
  \Diamond \varphi \qquad\text{and}\qquad
  \neg\Diamond \varphi%
  \label{xCertShapes}
\end{equation}
for some formula $\varphi$ (note that these patterns also subsume
$\Box$-formulae). Given the semantics of $\varphi$ in the layer $L_i$,
the semantics of the formulae from~\eqref{xCertShapes} is clear:
\begin{table}[h!]
\begin{center}
  \def\arraystretch{1.2}
  \begin{tabular}[b]{@{}L@{\hspace{8mm}}L@{\hspace{8mm}}L@{}}
    \toprule
    \semantics{\varphi} \cap L_i
    & \semantics{\Diamond\varphi} \cap L_{i+1}
    & \semantics{\neg\Diamond\varphi} \cap L_{i+1}
    \\
      \midrule
    \emptyset
    & \emptyset & L_{i+1}
    \\
    \{x_i\} & \{x_{i+1},z_{i+1}\} & \{y_{i+1}\}
    \\
    \{y_i\} & \{x_{i+1},y_{i+1}\} & \{z_{i+1}\}
    \\
    \{z_i\} & L_{i+1} & \emptyset
    \\
    \{x_i,y_i\} & L_{i+1} & \emptyset
    \\
    \{x_i,z_i\} & L_{i+1} & \emptyset
    \\
    \{y_i,z_i\} & L_{i+1} & \emptyset
    \\
    \{x_i,y_i,z_i\} & L_{i+1} & \emptyset
    \\
    \bottomrule
  \end{tabular}
  \qquad
  \begin{tikzpicture}[coalgebra,x=15mm, y=18mm]
    \threetowerstates{}
    \threetowertransitions{up}{down}
  \end{tikzpicture}
\end{center}
\caption{Semantics of modalities in \autoref{threetower}}%
\label{tableThreeTower}
\end{table}

\noindent Of course, $\semantics{\neg\Diamond\varphi}$ is just the complement of
$\semantics{\Diamond\varphi}$. Since every state of $L_{i+1}$ has a
transition to $z_i$ we have that
$z_i \in \semantics{\varphi} \cap L_i$ implies
$\semantics{\Diamond\varphi} = L_{i+1}$.

\begin{lemma}\label{threetowerAllDifferent}
All states of $(C,c)$ in \autoref{threetower} have different behaviour.
\end{lemma}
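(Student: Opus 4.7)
The plan is to establish, by strong induction on $n \ge 0$, the stronger statement that any two distinct states $s \in L_i$, $t \in L_j$ with $\max(i,j) \le n$ are non-bisimilar (recall that behavioural equivalence on $\Powf$-coalgebras coincides with ordinary bisimilarity). Throughout I would use the standard successor-set characterisation: $s \sim t$ iff for every successor of one there is a bisimilar successor of the other.

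The base case $n = 0$ is immediate from the concrete structure of $L_0$: the state $y_0$ is the unique deadlock, and among $x_0, z_0$ the former's unique successor is the deadlock $y_0$ while the latter's is the non-deadlock $x_0$, so $\Diamond\neg\Diamond\top$ separates them.

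For the inductive step with $n \ge 1$, I would take distinct $s \in L_i$, $t \in L_j$ with $\max(i,j) = n$ and, without loss of generality, assume $t \in L_n$. First, handle the case $i = n$: the three successor sets $\{x_{n-1}, y_{n-1}, z_{n-1}\}$, $\{y_{n-1}, z_{n-1}\}$, and $\{x_{n-1}, z_{n-1}\}$ of $x_n, y_n, z_n$ are pairwise distinct subsets of $L_{n-1}$, whose elements are pairwise non-bisimilar by the inductive hypothesis; hence, modulo bisimilarity, they remain three distinct sets of classes, forcing $x_n, y_n, z_n$ to be pairwise non-bisimilar. The remaining case $i < n$ splits into three easy subcases: if $s = y_0$, the out-degrees already differ; if $s \in \{x_0, z_0\}$ and $n = 1$, then $s$ has a singleton successor set modulo bisimilarity whereas $t \in L_1$ has two or three bisimilarity classes among its successors (by the inductive hypothesis applied to $L_0$), so they cannot match; otherwise the successors of $s$ and of $t$ lie in strictly different layers whose indices are both $\le n-1$, and the inductive hypothesis then rules out any bisimilarity between a successor of $s$ and a successor of $t$, yielding $s \not\sim t$ since both have successors.

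The main technical nuisance is the bookkeeping in the subcase where $s \in \{x_0, z_0\}$ and $n = 1$: here successors of both $s$ and $t$ lie in the same layer $L_0$, so one cannot dismiss matching successors merely by appeal to distinct layers, and instead one must count successor bisimilarity classes. Apart from that, the argument is a routine unfolding of the successor-set characterisation of bisimilarity against the explicit coalgebra structure given in \autoref{threetower}.
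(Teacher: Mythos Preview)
Your proof is correct and follows essentially the same layer-by-layer induction as the paper. The paper's inductive step handles the cross-layer case more uniformly---observing that every state in $L_{i+1}$ has $z_i$ as a successor while no state in $L_0\cup\cdots\cup L_i$ has any successor bisimilar to $z_i$---thereby avoiding your separate subcases for $s=y_0$ and for $s\in\{x_0,z_0\}$ with $n=1$.
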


\begin{proof}
  We show by induction on $i$ that all states of $L_0\cup\cdots \cup L_i$
  have different behaviour.
\begin{itemize}
\item All states of $L_0$ have different behaviour because
  \begin{itemize}
  \item $x_{0}\not\sim y_{0}$ because $x_0\to y_0$ but $y_0\not\to$.
  \item $z_{0}\not\sim y_{0}$ because $z_0\to x_0$ but $y_0\not\to$.
  \item $x_{0}\not\sim z_{0}$ because $z_0\to x_0$ and $z_0\to y_0$ but
    $x_0\not\sim y_0$.
  \end{itemize}

\item Assume that all elements of $L_0\cup\cdots \cup L_i$ have different
  behaviour. Every element of $L_{i+1}$ is behaviourally different from every state
  in $L_0\cup\cdots \cup L_i$ because every state in $L_{i+1}$ has a transition
  to $z_i$, but no state in $L_0\cup\cdots \cup L_i$ has
  and (by the induction hypothesis) there is no other state in
  $L_0\cup\cdots \cup L_i$ that is bisimilar to $z_i$.
  Since all state of $L_i$ have different behaviour,
  the same holds for $L_{i+1}$:
  \begin{itemize}
  \item $x_{i+1}\not\sim y_{i+1}$ because $x_{i+1}\to x_i$ but $y_{i+1}\not\to x_i$.
  \item $x_{i+1}\not\sim z_{i+1}$ because $x_{i+1}\to y_i$ but $z_{i+1}\not\to y_i$.
  \item $y_{i+1}\not\sim z_{i+1}$ because $y_{i+1}\to y_i$ but $z_{i+1}\not\to y_i$.
  \end{itemize}
  Hence, all states of $L_0\cup\cdots\cup L_i\cup L_{i+1}$ have different behaviour.
\qedhere
\end{itemize}
\end{proof}

\begin{lemma}\label{dnfSubformula} Let~$\varphi$ be a formula, and
  let $a,b\in C$ such that $a\in \semantics{\varphi}$ and
  $b\notin \semantics{\varphi}$. Then~$\varphi$ has a subformula~$\alpha$ such that
  \begin{enumerate}
  \item $\alpha$ has one of the forms $\Diamond\beta$ or
    $\neg\Diamond\beta$;
  \item $\alpha$ appears at the top level, i.e.\ outside the scope of
    any modality, in~$\varphi$;
  \item $a\in \semantics{\alpha}$ and  $b\notin \semantics{\alpha}$.
  \end{enumerate}
\end{lemma}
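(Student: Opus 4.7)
The plan is to proceed by structural induction on $\varphi$. The guiding intuition is that a modality-free formula has constant semantics $\semantics{\varphi} \in \{\emptyset, C\}$ and can therefore never satisfy $a \in \semantics{\varphi}$ and $b \notin \semantics{\varphi}$ simultaneously; hence any distinction made by $\varphi$ must ultimately be attributable to one of its top-level modal subformulae, and the induction amounts to descending through the Boolean scaffolding of $\varphi$ until such a subformula is exposed.

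The straightforward cases dispatch quickly. For $\varphi = \top$ the hypotheses are contradictory, since $\semantics{\top} = C$ contains $b$. For $\varphi = \Diamond\psi$ the formula itself already has the required shape and serves as $\alpha$. For $\varphi = \psi_1 \wedge \psi_2$, the assumption $a \in \semantics{\varphi}$ forces $a \in \semantics{\psi_j}$ for both $j \in \{1,2\}$ while $b \notin \semantics{\varphi}$ yields $b \notin \semantics{\psi_j}$ for some $j$, and applying the inductive hypothesis to that $\psi_j$ returns a top-level subformula of the required shape, which is a fortiori top-level in $\varphi$.

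The delicate case is $\varphi = \neg\psi$. Here $a \notin \semantics{\psi}$ and $b \in \semantics{\psi}$, so the inductive hypothesis applied to $\psi$ with the roles of $a$ and $b$ \emph{swapped} produces a top-level subformula $\gamma$ of $\psi$ of one of the allowed forms, with $b \in \semantics{\gamma}$ and $a \notin \semantics{\gamma}$. If $\gamma = \neg\Diamond\beta$, then its proper subformula $\Diamond\beta$ is itself top-level in $\psi$ and hence in $\varphi$, and distinguishes $a$ from $b$ in the desired direction, providing the required $\alpha$. If $\gamma = \Diamond\beta$, the natural candidate is $\alpha = \neg\Diamond\beta$, which has the right shape and polarity but whose status as a literal subformula of $\varphi$ needs to be argued.

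The main obstacle lies precisely in justifying this last choice: the string $\neg\Diamond\beta$ need not occur as a literal syntactic subformula of $\varphi = \neg\psi$, as seen for example by $\varphi = \neg(\Diamond\beta_1 \wedge \Diamond\beta_2)$ together with $a, b$ where $a$ satisfies only $\Diamond\beta_1$ and $b$ satisfies both. I would sidestep this by first normalizing $\varphi$ into negation normal form (with $\neg$ occurring only directly in front of modalities, using a derived $\vee$ via $\phi_1 \vee \phi_2 \equiv \neg(\neg\phi_1 \wedge \neg\phi_2)$ in place of pushed negations); this rewriting preserves both the semantics and, up to Boolean rearrangement, the collection of top-level $\Diamond$- and $\neg\Diamond$-subformulae. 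Under the normalization the problematic subcase collapses into the two cases already handled, and the induction yields the desired $\alpha$ as a genuine top-level subformula in the intended sense.
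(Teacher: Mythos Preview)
Your approach is correct and lands on essentially the same idea as the paper: normalize the top-level Boolean structure so that negation occurs only directly in front of modalities, and then locate the distinguishing modal literal. The paper does this in one move by passing to disjunctive normal form at the top level (a disjunction of conjunctions of literals $\Diamond\beta$, $\neg\Diamond\beta$), then picks a conjunctive clause satisfied by~$a$ and, within it, a conjunct violated by~$b$; you instead convert to negation normal form and descend by structural induction. Both routes rely on the same key observation you flag, namely that the outer $\neg$ in the resulting $\alpha=\neg\Diamond\beta$ may be introduced by the normalization while $\Diamond\beta$ itself remains a genuine top-level subformula of the original~$\varphi$---and this is all that the subsequent applications of the lemma need. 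One small clean-up: after passing to NNF your induction gains the cases $\vee$ and the base case $\neg\Diamond\beta$, which you should mention explicitly (they are handled dually to $\wedge$ and $\Diamond\beta$, respectively) rather than saying the problematic case ``collapses into the two cases already handled.''
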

\begin{proof}
  We may assume without loss of generality that $\varphi$ is in
  disjunctive normal form on the top level, i.e.\ $\varphi$ is a
  disjunction of conjunctions (called \emph{conjunctive clauses}) of
  possibly negated $\Diamond$-formulae, since transforming~$\varphi$ into
  such a disjunctive normal form does not create new
  $\Diamond$-subformulae.

  Since $a\in \semantics{\varphi}$, the disjunctive normal
  form~$\varphi$ must contain a conjunctive clause~$\psi$ such that
  $a\in \semantics{\psi}$. Since $b\notin \semantics{\varphi}$, we
  necessarily have $b\notin\semantics{\psi}$. Since
  $b\notin \semantics{\psi}$, the conjunctive clause $\psi$ must
  contain a conjunct $\alpha$, of shape either $\Diamond\beta$ or
  $\neg\Diamond\beta$, such that $b\notin\semantics{\alpha}$. Since
  $a\in\semantics{\psi}$, we moreover necessarily have
  $a\in\semantics{\alpha}$; this proves the claim.
\end{proof}
\begin{proposition}\label{xCertShape} Let $\varphi$ be a formula such
  that $\semantics{\varphi}\cap L_{i+1} = \{x_{i+1}\}$. Then $\varphi$
  contains top-level subformulae (i.e.\ subformulae not in scope of a
  modality) \(\Diamond \varphi_{x_i} \) and
  \( \Diamond \varphi_{y_i} \) such that
  $\semantics{\varphi_{x_i}}\cap L_i = \{x_i\}$ and
  $\semantics{\varphi_{y_i}}\cap L_i = \{y_i\}$.
\end{proposition}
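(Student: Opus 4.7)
My plan is to apply \autoref{dnfSubformula} twice: once to the pair of states $(x_{i+1}, y_{i+1})$ and once to the pair $(x_{i+1}, z_{i+1})$. Since $\semantics{\varphi} \cap L_{i+1} = \{x_{i+1}\}$, we know $x_{i+1} \in \semantics{\varphi}$ while $y_{i+1}, z_{i+1} \notin \semantics{\varphi}$, so the hypotheses of the lemma are met in both cases. This yields two top-level subformulae $\alpha_y$ and $\alpha_z$ of~$\varphi$, each of shape $\Diamond\beta$ or $\neg\Diamond\beta$, such that $x_{i+1} \in \semantics{\alpha_y}\cap \semantics{\alpha_z}$, $y_{i+1} \notin \semantics{\alpha_y}$, and $z_{i+1} \notin \semantics{\alpha_z}$.

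The next step is a case analysis using \autoref{tableThreeTower}. Reading off the column for $\Diamond\varphi$, the nonempty options for $\semantics{\Diamond\beta} \cap L_{i+1}$ are precisely $\{x_{i+1}, z_{i+1}\}$ (when $\semantics{\beta} \cap L_i = \{x_i\}$), $\{x_{i+1}, y_{i+1}\}$ (when $\semantics{\beta} \cap L_i = \{y_i\}$), and the full layer $L_{i+1}$ (whenever $z_i \in \semantics{\beta}$). Dually, the options for $\semantics{\neg\Diamond\beta} \cap L_{i+1}$ are $\emptyset$, $\{y_{i+1}\}$, $\{z_{i+1}\}$, and $L_{i+1}$. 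The key observation is that no $\neg\Diamond$-shape can contain $x_{i+1}$ without containing \emph{both} $y_{i+1}$ and $z_{i+1}$, since $\{x_{i+1}\}$ itself only occurs as a subset of $L_{i+1}$ in this list. Hence both $\alpha_y$ and $\alpha_z$ must be $\Diamond$-formulae.

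Given this, $\alpha_y = \Diamond\varphi_{x_i}$ is forced: among the $\Diamond$-options, only $\{x_{i+1}, z_{i+1}\}$ contains $x_{i+1}$ but excludes $y_{i+1}$, and this pins down $\semantics{\varphi_{x_i}} \cap L_i = \{x_i\}$. Symmetrically, $\alpha_z = \Diamond\varphi_{y_i}$ is forced: only $\{x_{i+1}, y_{i+1}\}$ contains $x_{i+1}$ but excludes $z_{i+1}$, forcing $\semantics{\varphi_{y_i}} \cap L_i = \{y_i\}$. Both subformulae occur at the top level of~$\varphi$ by \autoref{dnfSubformula}, which delivers the conclusion.

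The main obstacle is ensuring the case analysis on the table is exhaustive and that the rule-out of $\neg\Diamond$-shapes is airtight; this hinges on the geometric fact that every state in $L_{i+1}$ has a transition to $z_i$, which collapses several entries in the $\Diamond$-column to $L_{i+1}$ and correspondingly collapses entries in the $\neg\Diamond$-column to $\emptyset$. Once that collapse is spelled out, the two forced shapes fall out automatically.
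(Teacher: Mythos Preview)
Your proposal is correct and follows essentially the same approach as the paper: apply \autoref{dnfSubformula} twice (once for $y_{i+1}$, once for $z_{i+1}$) and then use \autoref{tableThreeTower} to pin down the shape of each resulting subformula. Your write-up is in fact slightly more explicit than the paper's in ruling out the $\neg\Diamond$-shapes, but the argument is the same.
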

\begin{proof}
  By \autoref{dnfSubformula}, applied to $a = x_{i+1}$ and
  $b = y_{i+1}$, there must be a subformula $\alpha$ of $\varphi$ such
  that \(y_{i+1} ~\notin ~\semantics{\alpha} ~\ni~ x_{i+1}\)
  and~$\alpha$ has the form of either $\Diamond\psi$ or
  $\neg\Diamond\psi$. Since $x_{i+1},y_{i+1} \in L_{i+1}$, we also
  have
  \[
    y_{i+1} ~\notin ~\semantics{\alpha}\cap L_{i+1} ~\ni~ x_{i+1}.
  \]
  Looking at \autoref{tableThreeTower}, we find that the only choice
  for $\alpha$ is $\Diamond \psi$ for some formula $\psi$ satisfying
  $\semantics{\psi}\cap L_i = \{x_i\}$; so~$\psi$ may serve as the
  desired $\varphi_{x_i}$. Similarly, since
  $z_{i+1}\notin\semantics{\varphi}\owns x_{i+1}$, the formula $\varphi$ must have a
  subformula $\alpha'$ of the form either $\Diamond\chi$ or
  $\neg\Diamond\chi$ such that
  \[
    z_{i+1} ~\notin ~\semantics{\alpha'}\cap L_{i+1} ~\ni~ x_{i+1},
  \]
  which by \autoref{tableThreeTower} implies that
  $\semantics{\chi} \cap L_i = \{y_i\}$, so that $\chi$ may serve
  as~$\varphi_{y_i}$.
  \qedhere
\end{proof}

\begin{proposition}\label{yCertShape} Let $\varphi$ be a formula such
  that $\semantics{\varphi}\cap L_{i+1} = \{y_{i+1}\}$. Then $\varphi$
  has a top-level subformula of the form \(\Diamond \varphi_{x_i} \)
  such that $\semantics{\varphi_{x_i}}\cap L_i = \{x_i\}$.
\end{proposition}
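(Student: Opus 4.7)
The plan is to adapt the strategy used in Proposition~\ref{xCertShape} to this asymmetric situation. Since $\semantics{\varphi}\cap L_{i+1}=\{y_{i+1}\}$, I will apply Lemma~\ref{dnfSubformula} with $a=y_{i+1}$ and $b=x_{i+1}$ (both in $L_{i+1}$) to obtain a top-level subformula $\alpha$ of $\varphi$, of shape either $\Diamond\psi$ or $\neg\Diamond\psi$, such that $y_{i+1}\in\semantics{\alpha}$ and $x_{i+1}\notin\semantics{\alpha}$.

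Next, I will rule out the shape $\alpha=\Diamond\psi$ via the case analysis in Table~\ref{tableThreeTower}: inspecting the $\Diamond\varphi$ column, every row in which $\semantics{\Diamond\psi}\cap L_{i+1}$ contains $y_{i+1}$ also contains $x_{i+1}$ (the value is either $\{x_{i+1},y_{i+1}\}$ or all of $L_{i+1}$). Hence $\alpha$ must have the form $\neg\Diamond\psi$. Looking at the $\neg\Diamond\varphi$ column, the only row producing a subset of $L_{i+1}$ that contains $y_{i+1}$ but not $x_{i+1}$ is $\semantics{\psi}\cap L_i=\{x_i\}$, giving $\semantics{\neg\Diamond\psi}\cap L_{i+1}=\{y_{i+1}\}$. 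Thus $\alpha=\neg\Diamond\psi$ with $\semantics{\psi}\cap L_i=\{x_i\}$.

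Finally, I will observe that since $\neg$ is not a modality, the subformula $\Diamond\psi$ appearing inside $\alpha$ lies outside the scope of any modality of $\varphi$, i.e.\ it is itself a top-level subformula of $\varphi$. Setting $\varphi_{x_i}:=\psi$ then yields the desired top-level subformula $\Diamond\varphi_{x_i}$ with $\semantics{\varphi_{x_i}}\cap L_i=\{x_i\}$. The only step that requires any care is this last observation about top-level occurrences surviving under $\neg$; the rest is a routine table lookup, and no new combinatorial ingredient beyond Lemma~\ref{dnfSubformula} and Table~\ref{tableThreeTower} is needed.
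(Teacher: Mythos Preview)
Your proof is correct and follows essentially the same approach as the paper: apply Lemma~\ref{dnfSubformula} with $a=y_{i+1}$, $b=x_{i+1}$, use Table~\ref{tableThreeTower} to conclude that the resulting $\alpha$ must be of the form $\neg\Diamond\psi$ with $\semantics{\psi}\cap L_i=\{x_i\}$, and take $\varphi_{x_i}:=\psi$. Your additional remark that $\Diamond\psi$ remains top-level because $\neg$ is not a modality makes explicit a point the paper leaves implicit, so your version is if anything slightly more careful.
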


\begin{proof}
  Similarly as in the proof of the previous proposition, since
  $x_{i+1}\not \in \semantics{\varphi}\owns y_{i+1}$, the formula
  $\varphi$ must, by \autoref{dnfSubformula}, have a subformula
  $\alpha$ of the shape either $\Diamond\psi$ or $\neg\Diamond\psi$
  such that
  \[
    x_{i+1} ~\notin ~\semantics{\alpha'}\cap L_{i+1} ~\ni~ y_{i+1}.
  \]
  Looking at \autoref{tableThreeTower} we find that this implies that
  $\alpha$ has the form $\neg\Diamond\psi$ for some formula $\psi$ such that
  $\semantics{\psi}\cap L_i = \{x_i\}$, so that $\psi$ may serve
  as~$\varphi_{x_i}$.
\end{proof}
\begin{proposition}\label{fibCertSize}
  Every certificate of $x_{n}$ has at least the size $\fib(n)$.
\end{proposition}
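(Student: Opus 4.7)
The plan is to set up a Fibonacci-style mutual induction using Propositions~\ref{xCertShape} and~\ref{yCertShape}. Rather than working with the strict notion of certificate (which requires $\semantics{\varphi} = \{x_n\}$), I will work with the weaker quantity
\[
  f(n) := \min\{|\psi| \mid \semantics{\psi}\cap L_n = \{x_n\}\},
  \qquad
  g(n) := \min\{|\psi| \mid \semantics{\psi}\cap L_n = \{y_n\}\},
\]
since this is exactly what Propositions~\ref{xCertShape} and~\ref{yCertShape} speak about, and every certificate of $x_n$ is in particular a formula with $\semantics{\psi}\cap L_n = \{x_n\}$, giving $|\psi| \ge f(n)$.

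Next I would establish the two recurrences
\[
  f(n+1) \ge f(n) + g(n), \qquad g(n+1) \ge f(n).
\]
For the first, Proposition~\ref{xCertShape} guarantees that any $\varphi$ with $\semantics{\varphi}\cap L_{n+1} = \{x_{n+1}\}$ has two top-level subformulae $\Diamond\varphi_{x_n}$ and $\Diamond\varphi_{y_n}$. These two occurrences must be syntactically distinct, because they have different extensions when intersected with $L_n$, and since they are both \emph{top-level} (i.e.~outside the scope of any modality) they cannot be nested: one could only contain the other via propositional combinations, but being $\Diamond$-formulae this would force them to coincide. Thus they sit at disjoint positions in the syntax tree of $\varphi$, so their tree sizes contribute additively, yielding $|\varphi| \ge |\varphi_{x_n}| + |\varphi_{y_n}| \ge f(n) + g(n)$. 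The bound for $g(n+1)$ follows analogously from Proposition~\ref{yCertShape}.

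Finally I would handle the base cases: $f(0), g(0) \ge 1$ (every formula has tree size at least~$1$), which matches $\fib(0), \fib(1)$ under the convention $\fib(0)=\fib(1)=1$. A straightforward simultaneous induction then gives $f(n) \ge \fib(n)$ and $g(n) \ge \fib(n-1)$, since
\[
  f(n+1) \ge f(n) + g(n) \ge \fib(n) + \fib(n-1) = \fib(n+1),
  \qquad
  g(n+1) \ge f(n) \ge \fib(n).
\]
Applied to any certificate $\psi$ of $x_n$ (for which $|\psi| \ge f(n)$), this gives $|\psi| \ge \fib(n)$.

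The only subtle point — and hence the main obstacle — is the additivity argument in the first recurrence: one must be careful that the two ``top-level subformulae'' produced by Proposition~\ref{xCertShape} are indeed disjoint occurrences in the syntax tree, rather than a single shared occurrence or nested occurrences. Ruling out nesting uses crucially that both are $\Diamond$-formulae at the top level (no modality surrounds either), and ruling out coincidence uses that their extensions on $L_n$ differ. Once this is in place, the Fibonacci recurrence closes mechanically.
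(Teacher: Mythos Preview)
Your proposal is correct and follows essentially the same route as the paper: both use \autoref{xCertShape} and \autoref{yCertShape} to set up a Fibonacci recurrence on the minimum size of formulae isolating $x_n$ (and $y_n$) within the layer~$L_n$. The paper organizes the induction slightly differently (it combines the two recurrences into a single two-step inequality $|\varphi_{x_{n+2}}|\ge |\varphi_{x_{n+1}}|+|\varphi_{x_n}|$ rather than introducing $f$ and $g$), but this is cosmetic. Your treatment of the additivity step is in fact more explicit than the paper's: the paper simply notes that $\varphi_{x_{n+1}}$ and $\varphi_{y_{n+1}}$ are distinct and infers $|\varphi|\ge|\varphi_{x_{n+1}}|+|\varphi_{y_{n+1}}|$, whereas you spell out why the two top-level $\Diamond$-occurrences cannot nest and hence occupy disjoint positions in the syntax tree.
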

\begin{proof}
  We show more generally that if~$\varphi$ is a formula such that
  $\semantics{\varphi}\cap L_n = \{x_n\}$, then~$\varphi$ has size at
  least $\fib(n)$. We proceed by induction on~$n$, with trivial base
  cases $n\in\{0,1\}$.

  In the inductive step, we assume the statement for $n$ and $n+1$ and
  prove it for $n+2$.
  If $\varphi$ is a formula such that
  $\semantics{\varphi}\cap L_{n+2} = \{x_{n+2}\}$, then by
  \autoref{xCertShape},~$\varphi$ has subformulae
  $\Diamond\varphi_{x_{n+1}}$ and $\Diamond\varphi_{y_{n+1}}$ such that
  \[
    \semantics{\varphi_{x_{n+1}}}\cap L_{n+1} = \{x_{n+1}\}
    \qquad\text{and}\qquad
    \semantics{\varphi_{y_{n+1}}} \cap L_{n+1} = \{y_{n+1}\}.
  \]
  By their semantics, $\varphi_{x_{n+1}}$ and $\varphi_{y_{n+1}}$ are
  necessarily different. Hence,
  \begin{equation}\label{eqFib1}
    |\varphi_{x_{n+2}}|
    \ge
    |\varphi_{x_{n+1}}|
    + |\varphi_{y_{n+1}}|,
  \end{equation}
  where $|\mathord{-}|$ denotes formula size. Now
  \autoref{yCertShape} implies
  that $\varphi_{y_{n+1}}$ has a subformula of the form
  $\Diamond\varphi_{x_n}$ for some $\varphi_{x_n}$ such that
  $\semantics{\varphi_{x_n}}\cap L_n = \{x_n\}$. Hence,
  \begin{equation}\label{eqFib2}
    |\varphi_{y_{n+1}}|
    \ge
    |\varphi_{x_{n}}|.
  \end{equation}
  Thus, we derive
  \[
    |\varphi_{x_{n+2}}|
    \overset{\eqref{eqFib1}}\ge
    |\varphi_{x_{n+1}}| + |\varphi_{y_{n+1}}|
    \overset{\eqref{eqFib2}}\ge
    |\varphi_{x_{n+1}}| + |\varphi_{x_{n}}|
    \qquad\text{for all $n \geq 0$}.
  \]
  Thus, we have $|\varphi_{x_{n}}| \ge \fib(n)$ by induction.%
\end{proof}
\begin{proposition}\label{separatorBigAsCert} Every formula $\varphi$
  distinguishing $x_{i+1}$ from $y_{i+1}$ contains a
  subformula~$\varphi_{x_i}$ such that
  $\semantics{\varphi_{x_i}}\cap L_i = \{x_i\}$.  Hence, every such
  $\varphi$ has at least size $\fib(i)$.
\end{proposition}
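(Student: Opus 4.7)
The plan is to apply \autoref{dnfSubformula} to the pair $(x_{i+1},y_{i+1})$ and then read off from \autoref{tableThreeTower} which shape such a separating top-level modal subformula must take. Since $\varphi$ distinguishes $x_{i+1}$ from $y_{i+1}$, we may assume (up to replacing $\varphi$ by $\neg\varphi$, which does not affect containment of subformulae of the form $\Diamond\psi$ or $\neg\Diamond\psi$) that $x_{i+1}\in\semantics{\varphi}$ and $y_{i+1}\notin\semantics{\varphi}$. Then \autoref{dnfSubformula} yields a top-level subformula $\alpha$ of shape either $\Diamond\psi$ or $\neg\Diamond\psi$ with $x_{i+1}\in\semantics{\alpha}$ and $y_{i+1}\notin\semantics{\alpha}$.

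Next, I would consult \autoref{tableThreeTower} to determine the only choice of $\semantics{\psi}\cap L_i$ compatible with the constraint $x_{i+1}\in\semantics{\alpha}\not\ni y_{i+1}$. Inspecting the eight rows shows that for $\alpha=\Diamond\psi$ the unique viable row is $\semantics{\psi}\cap L_i=\{x_i\}$ (producing $\semantics{\alpha}\cap L_{i+1}=\{x_{i+1},z_{i+1}\}$), while for $\alpha=\neg\Diamond\psi$ no row works. Symmetrically, had we assumed $y_{i+1}\in\semantics{\varphi}\not\ni x_{i+1}$, the only option would be $\alpha=\neg\Diamond\psi$ with $\semantics{\psi}\cap L_i=\{x_i\}$. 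Either way, $\psi$ is a subformula of $\varphi$ with $\semantics{\psi}\cap L_i=\{x_i\}$, so it may serve as the claimed $\varphi_{x_i}$.

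Finally, for the size bound I would invoke (the proof of) \autoref{fibCertSize}: the induction established there in fact shows that \emph{any} formula $\psi$ with $\semantics{\psi}\cap L_i = \{x_i\}$ has size at least $\fib(i)$, not merely certificates of $x_i$. Since $\psi$ is a subformula of $\varphi$, it follows that $|\varphi|\ge|\psi|\ge\fib(i)$, as required. The only mild subtlety is ensuring that the case distinction in the first paragraph is truly symmetric; this is immediate because \autoref{dnfSubformula} is stated symmetrically in $a$ and $b$, so no additional work is needed.
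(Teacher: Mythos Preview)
Your proposal is correct and follows essentially the same approach as the paper: apply \autoref{dnfSubformula} to obtain a top-level $\Diamond\psi$ or $\neg\Diamond\psi$ subformula separating $x_{i+1}$ from $y_{i+1}$, then use \autoref{tableThreeTower} to conclude $\semantics{\psi}\cap L_i=\{x_i\}$, and finally invoke the strengthened statement actually proved in \autoref{fibCertSize}. One minor remark: by the paper's \autoref{defDistinguish}, ``$\varphi$ distinguishes $x_{i+1}$ from $y_{i+1}$'' already means $x_{i+1}\in\semantics{\varphi}\not\ni y_{i+1}$, so your negation step and the symmetric case analysis are unnecessary (though harmless); also, \autoref{dnfSubformula} is not literally symmetric in $a$ and $b$, but your negation trick in the first paragraph already resolves this, so the final sentence of your proposal is slightly misstated even if the overall argument is fine.
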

\begin{proof}
  Let $\varphi$ distinguish $x_{i+1}$ from $y_{i+1}$, which means that
  $x_{i+1}\in
  \semantics{\varphi}$ and $y_{i+1}\notin
  \semantics{\varphi}$. By \autoref{dnfSubformula}, $\varphi$ has a subformula
  $\alpha$ such that $x_{i+1}\in \semantics{\alpha}$, $y_{i+1}\notin \semantics{\alpha}$, and
  $\alpha$ has one of the forms $\Diamond\psi$ or $\neg\Diamond\psi$ for some formula
  $\psi$.
  Looking at \autoref{tableThreeTower}, we see that the only choice
  for~$\alpha$ is $\Diamond\psi$ for some formula $\psi$
  satisfying $\semantics{\psi}\cap L_i = \{x_i\}$, so that $\psi$ may
  serve as $\varphi_{x_i}$.
\end{proof}

\end{document}